\newtheorem{theorem}{Theorem}
\newtheorem{lemma}[theorem]{Lemma}
\newtheorem{corollary}[theorem]{Corollary}
\newtheorem{definition}[theorem]{Definition}
\newcommand{\Z}{\mathbb{Z}}
\newcommand{\N}{\mathbb{N}}
\newcommand{\R}{\mathbb{R}}
\newcommand{\Q}{\mathbb{Q}}
\newcommand{\F}{\mathbb{F}}
\newcommand{\sis}{\textrm{SIS}}
\newcommand{\bdd}{\textrm{BDDx}}
\newcommand{\rsis}{\textrm{R-SIS}}
\newcommand{\msis}{\textrm{M-SIS}}
\newcommand{\lwe}{\textrm{LWE}}
\newcommand{\ring}{\Z[X]}
\newcommand{\ringq}{\Z_q[X]}
\newcommand{\rring}{\ring/(X^n+1)}
\newcommand{\rringq}{\ringq/(X^n+1)}
\newcommand{\fq}[2]{\F_{#1^{#2}}}
\newcommand{\Fq}[2][]{\F_{#2}^{#1}}
\newcommand{\cL}{{\cal L}}
\newcommand{\cI}{{\cal I}}
\newcommand{\cT}{{\cal T}}
\newcommand{\cC}{{\cal C}}
\newcommand{\cO}{{\textbf{{\cal O}}}}
\newcommand{\latperp}{\cL^\perp}
\newcommand{\latdag}{\cL^\dagger}
\newcommand{\latcross}{\cL^\times}
\newcommand{\radius}{r\sqrt{q}}
\newcommand{\dradius}{r\sqrt{qd/n}}
\newcommand{\vb}[1]{\textbf{#1}}
\newcommand{\matsym}[2]{\sym(#1,#2)}
\DeclareMathOperator{\sym}{Sym}
\newcommand{\rmatsym}[1]{R_{q,#1}^{\sym}}
\DeclareMathOperator{\lcm}{lcm}
\DeclareMathOperator{\spann}{span}
\DeclareMathOperator{\dist}{dist}
\newcommand{\zqn}{\Z_q^{n\times n}}
\newcommand{\qfactor}{\frac{1}{\sqrt{q}}}
\newcommand{\vbh}{\vb{z}_{ij}^{(h)}}
\newcommand{\vbhu}{\overline{\vb{z}}_{ij}^{(h)}}
\newcommand{\vbhl}{\underline{\vb{z}}_{ij}^{(h)}}
\newcommand{\bddtriv}{\textsc{Bdd}_{\text{triv}}}
\newcommand{\setn}[1]{[#1]}
\newcommand{\setT}[2]{T_{#1}(#2)}
\newcommand{\csetT}[2]{\bar{T}_{#1}(#2)}
\newcommand{\ntru}{\text{NTRU}}
\newcommand{\distntru}[1]{R_{#1}^{\ntru}}
\title{Symplectic Lattices and GKP Codes - Simple Randomized Constructions from Cryptographic Lattices}
\author[]{Johannes Bl\"omer}
\author[]{Yinzi Xiao}
\author[]{Zahra Raissi}
\author[]{Stanislaw Soltan}
\date{}
\affil[]{\small{Computer Science Department, Paderborn University, Paderborn, Germany\\
Institute for Photonic System (PhoQS), Paderborn, Germany}} 
\begin{document}
\maketitle

\begin{abstract}
We construct good GKP (Gottesman-Kitaev-Preskill) codes (in the sense of Conrad, Eisert and Seifert proposed) from standard \emph{short integer solution ($\sis$)} lattices as well as from ring $\sis$ and module $\sis$ lattices, $\rsis$ and $\msis$ lattices, respectively. These lattices are crucial for lattice-based cryptography. Our construction yields GKP codes with distance $\sqrt{n/\lambda\pi e}$ for $\lambda^n$ logical dimensions. This compares favorably with the NTRU-based construction by Conrad et al. that achieves distance $\Omega(\sqrt{n/q}),$ with $n\le q^2/0.28$. Unlike their codes, our codes do not have secret keys that can be used to speed-up the decoding. However, combining our results based on $\rsis$ with results by Stehl\'{e} and Steinfeld, we can prove Conjecture 2 from Conrad et al., namely with appropriate parameter choices, NTRU lattices yield GKP codes with distance $\sqrt{n/\pi e}$. Finally, we present a simple decoding algorithm that, for many parameter choices, experimentally yields decoding results similar to the ones for NTRU-based codes. Using the $\rsis$ and $\msis$ construction, our simple decoding algorithm runs in nearly linear time. Following Conrad, Eisert and Seifert's work, our construction of GKP codes follows directly from an explicit, randomized construction of symplectic lattices with (up to constants $\approx 1$) minimal distance $(1/\sigma_{2n})^{1/2n}\approx \sqrt{\frac{n}{\pi e}}$, where $\sigma_{2n}$ is the volume of the $2n$-dimensional unit ball. Before this result, Buser and Sarnak gave a non-constructive proof for the existence of such symplectic lattices. 
\end{abstract}

\noindent\textbf{Keywords:}  quantum error correction, good GKP codes, con\-ti\-nuous-\-va\-ri\-able systems, symplectic lattices, minimal distances, short integer solutions
\section{Introduction}\label{sec:intro}

Quantum error correction and fault tolerant quantum computing are essential ingredients for the realization of fully fledged quantum computing in the future. Which quantum error correction codes can be realized efficiently determines the kind of physical platforms most promising in building quantum computers; they influence the complete quantum computing stack, from the physical qubit level to the quantum software level. 
Bosonic quantum error-correcting codes encode discrete logical information in infinite-dimensional systems and promise hardware-efficient routes to fault tolerance on continuous-variable platforms, e.\ g.\ photonic-based computing platforms \cite{terhal2020towards,Terhal-Review-2015,Glancy-Knill-2006,MenicucciPhysRevLett-2014}. A leading example are the Gottesman–Kitaev–Preskill (GKP) codes, stabilizer error correcting codes which place stabilizers on a regular grid in phase space so that small displacement errors can be detected and corrected. This idea originates with the seminal paper~\cite{gottesman2001encoding} by Gottesman, Kitaev and Preskill. Experimentally, grid encodings and error-correction primitives have been realized in superconducting cavities and trapped ions, demonstrating preparation, syndrome extraction, and logical protection for oscillator-encoded qubits \cite{Campagne_Ibarcq_2020-exp,fluhmann2019encoding}.
More recently, an integrated photonic platform has produced optical GKP states on-chip with multiple resolvable peaks and negative-Wigner structure, underscoring growing practical momentum \cite{Larsen2025}.\\

Lattices from the geometry of numbers (see~\cite{cassels1996introduction,gruber1987geometry,conway2013sphere})  provide a convenient way to formalize GKP codes. In this formalism, an $n$-mode GKP  code is specified by a $2n$-dimensional symplectic lattice whose vectors generate the displacement stabilizers \cite{conrad2022gottesman,PRXQuantum-Grid-States}.
Key code parameters become geometric: in particular, the code distance—the minimum displacement magnitude that causes a logical error—is proportional to the length of shortest nonzero vector in the symplectic lattice. This lattice viewpoint also interfaces directly with channel coding, symplectically integral lattices in $2n$-dimensional phase space underlie the best-known achievable rates for the Gaussian displacement channel, motivating lattice codes spanning many oscillators \cite{Harrington-Preskill_2001,Wu_2023,Mao-Lin-PRXQuantum-2023,brock2025quantum}.\\

On the theory side, it is natural to call a family of  $n$-mode codes  \emph{good} when the distance of the codes in the family grows asymptotically at least as $\Omega(\sqrt{n})$, reflecting typical high-dimensional lattice behavior where shortest-vector lengths scale as $\Omega(\sqrt{n})$ (for lattices with determinant $1$ and dimension $n$), while having asymptotically non-vanishing rate (the ratio of encoded logical qubits to physical modes is bounded away from zero) (see \cite{conrad2024good} and the references therein). Based on a non-constructive existence proof of symplectic lattices with shortest vector length $\Omega(\sqrt{n})$ by Buser and Sarnak~\cite{buser1994period}, Harrington and Preskill~\cite{Harrington-Preskill_2001} established the existence of families of good GKP codes. Complementary bounds based on continuous-variable MacWilliams identities indicate that, in low dimensions, the strongest GKP distances come from canonical "best-packing" lattices— $E_8$ (8D) and the Leech lattice (24D)—widely known for maximizing minimum distance in those dimensions \cite{burchards2025qwilliamsidentities}.\\

Constructively, Conrad, Eisert, and Seifert~\cite{conrad2024good} showed that multi-mode GKP codes can be instantiated from NTRU-type lattices. NTRU is a popular lattice-based public-key cryptosystem introduced in the 1990s by Hoffstein et al.\ \cite{hoffstein1999ntru}.  The construction by Conrad et al.\  yields randomized GKP code families with non-vanishing rate  and, with probability exponentially close to $1$, code distance scaling  like $\Theta(\sqrt{n/q})$, where the probability is over the random choice of lattices and $q$ is some parameter depending on $n$ satisfying $n\le q^2/0.28$. Exploiting so-called lattice trapdoors for NTRU lattices, Conrad et al.\ also describe decoding algorithms for their GKP codes that beat generic decoding procedure based on general algorithms solving the closest vector problem in lattices, such as Babai's algorithm~\cite{babai1986lovasz}.\\ 

Note that due to the constraint $n\le q^2/0.28$, the distance bound $\Theta(\sqrt{n/q})$ scales like $\Theta(\sqrt[4]{n})$ rather than $\Theta(\sqrt{n})$ as required for a good family of GKP codes. Based on the so called Gaussian Heuristic (GH) (see \cite{conrad2024good, aono2019random,silverman2020introduction}) and experiments in low dimensional experiments, Conrad et al.\ conjecture that variants of their NTRU construction lead to families of GKP codes with $\lambda^n$ logical dimensions
 using $n$ bosonic modes and achieve code distance $\Delta\ge \sqrt{\frac{n}{\lambda \pi e}}$ (Conjecture 1 and 2 in \cite{conrad2024good}). They also provide arguments showing that replacing NTRU lattices by more general lattices based on $q$-ary symmetric matrices leads to GKP codes with distance $\Delta\ge \sqrt{\frac{n}{\lambda \pi e}}$  (see the introduction and Proposition 3 in \cite{conrad2024good}).\\    

In this paper, we present a simple randomized construction of multi-mode GKP codes from lattices as they appear in the (standard) \emph{short integer solutions ($\sis$)} problem and its variants, the \emph{ring short integer solutions ($\rsis$)} problem and the \emph{module short integer solutions ($\msis$)} problem.  These lattices have been used to define e.\ g.\ collision-resistant hash functions and unforgeable signature schemes in cryptography (for an introduction see~\cite{peikert2016decade} although we do not use any crypto-relevant properties of these lattices). 
Following basically the (randomized) construction blueprint provided by \cite{conrad2024good} and using $\sis,\rsis$ and $\msis$ lattices, we obtain good families of GKP codes that achieve code distance $\approx \sqrt{\frac{n}{\lambda\pi e}}$ for $n\log(\lambda)$ encoded logical qubits, that is, exactly the distances conjectured by Conrad et al.\ for variants of NTRU lattices. Our code families achieve these distances with probability close to $1$, where the probability is over a uniform random choice of a lattice from some finite class of lattices with dimension $n$, the set of symmetric $n$-dimensional $\sis$ or $\msis$ lattices to be precise. Choosing a lattice uniformly at random from these classes can be done efficiently, leading to an efficient construction of good GKP codes. The construction  based on $\sis$ lattices yields the most general class of GKP codes. It is also the easiest to analyze. We also provide constructions based on $\msis$ lattices ($\rsis$ lattices are just a special case) for two reasons.
\begin{enumerate}
    \item our main result on GKP codes based on $\rsis$ can be combined with results from Stehl\'{e} and Steinfeld~\cite{stehle2011making} to prove Conjecture 2 by Conrad, Eisert and Seifert~\cite{conrad2024good},  
    \item more importantly, we also provide simple decoding algorithms for our GKP codes; using standard methods from lattice-based cryptography  (e.\ g.\ fast Fourier transform), for $\msis$ lattices the algorithms can be made to run in almost linear time, whereas for $\sis$ based codes they require quadratic time.
\end{enumerate}
We obtain good GKP codes by first constructing so called \textit{$q$-symplectic} lattices with minimal distance $\approx\sqrt{\frac{qn}{\pi e}}$ (with high probability) from  $\sis$ lattices defined via \textit{symmetric matrices}\footnote{These lattices are also used by Conrad et al.\ for Proposition 3 to justify Conjecture 1 and 2. It is the interpretation of these lattices as $\sis$ lattices that leads to our main results.}. Here $q$ is a parameter that defines an appropriate finite field. Scaling these lattices by $\frac{1}{\sqrt{q}}$ we obtain \textit{symplectic} lattices with distance $\approx \sqrt{\frac{n}{\pi e}}$. Further scaling these lattices by $\sqrt{\lambda}$ yields GKP codes with the desired distance and with $n\log(\lambda)$ logical qubits. Note that symplecticity of the lattices is needed to obtain GKP codes. As a by-product of our GKP code construction, we obtain a probabilistic construction of $n$-dimensional symplectic lattices  with distance $\sqrt{\frac{n}{\pi e}}$. To the best of our knowledge, previously,  the existence of such symplectic lattices was only proven non-constructively (see~\cite{buser1994period})\footnote{In~\cite{buser1994period}, as well as in many other papers, results of this type are described in terms of $(1/\sigma_{2n})^{1/2n}$, where $\sigma_{2n}$ denotes the volume of the $2n$-dimensional unit ball. By Stirling's approximation $(1/\sigma_{2n})^{1/2n}\approx \sqrt{\frac{n}{\pi e}}$.}.\\ 

In addition to the blueprint from \cite{conrad2024good}, we also use standard techniques to analyze the minimal distance of random $\sis$ lattices. These analyses (for $\sis$ and $\msis$ lattices) are our main technical contribution. For $\sis$ lattices, the analysis of the minimal distance closely follows~\cite{micciancio2011geometry}. However, as mentioned above, to obtain GKP codes we cannot choose arbitrary $\sis$ lattices. Instead, their defining matrices must be symmetric. It turns out that this poses no problem, as we prove that random symmetric matrices over finite fields share an important regularity property with general matrices: for any fixed vector $\vb{v}$, its image under a uniformly random matrix (general or symmetric) is also uniformly distributed. For $\msis$ lattices, the symmetry issue also exists and can be resolved as for $\sis$ lattices. However, for $\msis$ lattices a more difficult technical problem arises since we have to deal with equations over finite rings rather than finite fields. More precisely, for $\msis$ lattices we deal with the ring $\Z_q[X]/(X^n+1)$, where $\Z_q=\Z/(q\Z)$. Although we present a very general analysis, it leads to non-trivial results only for specific choices of $n$ and $q$. The easiest and probably most useful case occurs when $n$ is a power of $2$ and $q=3,5\mod 16$, in which case the polynomial $X^n+1$ factors over $\Z_q[X]$ into two irreducible polynomials with the same degree. Note that these choices for $n,q$ are also popular parameter choices in cryptography and they correspond to useful choices for the NTRU-based constructions in \cite{conrad2024good}. Our techniques are similar to the ones used by Stehl\'{e} and Steinfeld~\cite{stehle2011making}. However, they focus on the $\ell_\infty$-norm, whereas we consider only the Euclidean norm. Moreover, since we are interested in exact constants we have to resolve additional problems stemming from the ring structure of $\Z_q[X]/(X^n+1)$. In particular, when proving the fundamental Lemma~\ref{lem:general_bound_set_size} we apply an averaging argument to deal with zero-divisors in $\Z_q[X]/(X^n+1)$. This argument is specific to the Euclidean norm.\\

Using standard tools from lattice theory one can show that GKP codes from symplectic lattices cannot achieve distance better than $\sqrt{2n/\lambda},$ where $n$ as before is the number of modes and $n\log(\lambda)$ is the number of encoded logical qubits. This bound follows easily from so-called transference bounds~\cite{banaszczyk1993new} and similar bounds were also shown in~\cite{conrad2022gottesman}. This bound shows that our construction achieves, up to small constants, the best possible distance. The \emph{Gaussian Heuristic} (see~\cite{conrad2024good}) indicates that the distance of our codes is even best possible, at least as long as one insists on achieving the distance with probability close to $1$.\\

Finally, using  the FPLLL  library~\cite{fpylll} for lattice reduction,  we provide various numerical simulations for our constructions of GKP codes and our simple decoding procedure. These simulations not only confirm the theoretical findings, but also show that at least in small dimensions, more general parameter choices than supported by our theoretical results lead to GKP codes with distance $\ge \sqrt{\frac{n}{\lambda\pi e}}$. To give an example, for $\sis$ lattices to yield good GKP codes with probability close to $1$, the theory requires $q\approx n^2$. The experiments show that we can choose $q$ much smaller than $n^2$ and still obtain codes with large distance. Although $q$ does not show up in our bounds for the code distance (it only influences probabilities), it does play an important role in our decoding algorithm. Here smaller values of $q$ lead to significantly better decoding properties. In fact, by choosing small $q$'s, our decoding procedure compares well to the NTRU decoding algorithm by Conrad et al.\cite{conrad2024good}. Similarly, our experiments show that the $\msis$-based construction is much less sensitive to choices for $n$ and $q$
than can be deduced from the theoretical results.\\

\paragraph{Organization} In Section~\ref{sec:basic} we provide all necessary concepts and results from lattice theory. The $\sis$-based construction of symplectic lattices is given in Section~\ref{sec:construction}, the analysis is presented in Section~\ref{sec:analysis_distance}, and the application to the construction of good GKP codes in Section~\ref{sec:main_results}. Here we also provide some well-known bounds on the invariants of symplectic lattices. In Section~\ref{sec:bdd} we present our simple decoding algorithm. Section~\ref{sec:simulation} summarizes our numerical findings for the $\sis$-based construction of GKP codes. In Sections~\ref{sec:ring_symplectic_construction} through~\ref{sec:ring_simulation} we show how to construct symplectic lattices from $\msis$ lattices (Section~\ref{sec:ring_symplectic_construction}), we analyze the construction (Section~\ref{sec:ring_analysis_distance}), apply it to the construction of GKP codes (Section~\ref{sec:maintheoremsring}), prove Conjecture 2 from~\cite{conrad2024good} (Section~\ref{sec:ntru_gkp}), describe the adaptation of our decoding algorithm to the $\msis$-based construction (Section~\ref{sec:ring_bdd}),  and present and discuss the experimental results for the $\msis$-based construction (Section~\ref{sec:ring_simulation}). \\

\section{Basic definitions and results}\label{sec:basic}
Throughout this paper, vectors in some $\R^d$ are denoted by bold  letters. A vector $\vb{v}$ is always understood as a row vector.  Similarly, matrices will be denoted by capitalized bold letters. For a vector $\vb{v}\in \R^d$, we denote its standard euclidean distance by $\|\vb{v}\|$. 
\paragraph{Lattices}
A \emph{lattice} $\cL \subset \R^d$ is a discrete, abelian subgroup of the euclidean space $\R^d$. Every lattice $\cL\subset \R^d$ has a \emph{basis} consisting of $k,k\le d,$ $\R$- linearly independent vector $\vb{b}_1,\ldots, \vb{b}_k$, i.e., as a set,
\[
\cL:=\left\{\sum_{i=1}^k z_i \vb{b}_i:z_i\in \Z, i=1,\ldots,k\right\}.
\]
$k$ is called the \emph{dimension} of $\cL$. Collecting the basis vectors $\vb{b}_i$ in the rows of matrix $\vb{B}\in \R^{k\times d}$, we write 
\[
\cL=\cL(\vb{B})=\Z^k \vb{B}.
\]
We call a lattice $\cL=\cL(\vb{B}), \vb{B}\in \R^{k\times d}$ \emph{full-dimensional} iff $k=d$.

Given some lattice $\cL\subset \R^d$ its \emph{minimal distance} or its \emph{first successive minimum $\lambda_1(\cL)$} is defined as 
\[
\lambda_1(\cL):=\min\{\|\vb{v}\|:\vb{v}\in \cL, \vb{v}\ne \vb{0}\}.
\]
Because of the discrete structure of a lattice, this minimum always exists (and is in fact a minimum, not just an infimum). The \emph{covering radius $\rho(\cL)$} of lattice $\cL$ is defined as
\[
\rho(\cL):=\max\{\dist(\vb{t},\cL) \mid \vb{t}\in \spann(\cL)\},
\]
where for $\vb{t}\in \spann(\cL)$
\[
\dist(\vb{t},\cL):=\min \{\|\vb{t}-\vb{u} \| \mid \vb{u}\in \cL\}.
\]
Geometrically, $\rho(\cL)$ is the smallest radius such that spheres with this radius centered at every lattice point cover the entire space $\spann(\cL)$.
The (canonical or standard) \emph{dual lattice $\latdag$} of lattice $\cL$ is defined as 
\[
\latdag:=	\left\{ \vb{v}\in \spann(\cL)\mid \forall \vb{u}\in \cL: \vb{v}\cdot \vb{u}^T\in \Z\right\}.
\]
If $\cL=\cL(\vb{B})$ is full-dimensional, then $\latdag=\cL(\vb{B}^{-T})$.

The following celebrated theorem is due to Banszczyk. It is the strongest possible form (up to constants) of a \emph{transference theorem} relating invariants of a lattice and its dual (see~\cite{banaszczyk1993new}) .
\begin{theorem}[Banaszczyk]\label{thm:transference}
	Let $\cL$ be a lattice of dimension $n$. Then for lattice $\cL$ and its dual $\latdag$,
\[
\lambda_1(\latdag)\cdot \rho(\cL)\le \frac{n}{2}.	
\]
\end{theorem}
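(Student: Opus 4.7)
The plan is to follow Banaszczyk's Gaussian-measure approach, which is the standard route to sharp transference inequalities of this kind. Introduce the Gaussian $\rho_s(\vb{x}) := e^{-\pi\|\vb{x}\|^2/s^2}$ together with the associated lattice sums $\rho_s(\cL):=\sum_{\vb{v}\in\cL}\rho_s(\vb{v})$ and their shifts $\rho_s(\cL+\vb{t}):=\sum_{\vb{v}\in\cL}\rho_s(\vb{v}+\vb{t})$. The central identity is the Poisson summation formula, which in this setting reads
\[
\rho_s(\cL+\vb{t}) \;=\; \frac{s^n}{\det(\cL)}\sum_{\vb{w}\in\latdag} e^{-\pi s^2\|\vb{w}\|^2}\,e^{2\pi i\,\vb{w}\cdot\vb{t}}.
\]
Since $-\latdag = \latdag$, pairing $\vb{w}$ with $-\vb{w}$ turns the right-hand side into a real cosine series, which will be the workhorse for both an upper and a lower bound on the ratio $\rho_s(\cL+\vb{t})/\rho_s(\cL)$.

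The first main step is to establish Banaszczyk's concentration lemma: for every $c\ge 1/\sqrt{2\pi}$ and every $n$-dimensional lattice $\cL$,
\[
\rho_1\bigl(\cL\setminus c\sqrt{n}\,B\bigr) \;\le\; \bigl(c\sqrt{2\pi e}\cdot e^{-\pi c^2}\bigr)^n\cdot \rho_1(\cL),
\]
where $B$ is the closed unit ball. This is proved by applying Poisson summation to a slightly fattened Gaussian and optimizing a convexity parameter; as a byproduct one obtains the monotonicity inequality $\rho_s(\cL+\vb{t})\le \rho_s(\cL)$ for all shifts $\vb{t}$, which is sharper than what non-negativity of cosines alone provides.

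Next, set $\mu:=\rho(\cL)$ and $\lambda:=\lambda_1(\latdag)$, pick $\vb{t}_0\in \spann(\cL)$ realizing $\dist(\vb{t}_0,\cL)=\mu$, and play the two sides of Poisson summation against each other. For the upper bound, every vector $\vb{v}+\vb{t}_0$ with $\vb{v}\in\cL$ has norm at least $\mu$, so the entire sum sits outside the ball of radius $\mu$ and the concentration lemma yields a bound on $\rho_s(\cL+\vb{t}_0)/\rho_s(\cL)$ that is exponentially small in $\mu^2/s^2$. For the lower bound, the Poisson expansion gives
\[
\frac{\rho_s(\cL+\vb{t}_0)}{\rho_s(\cL)} \;\ge\; 1 \,-\, \frac{2}{\rho_s(\latdag)}\sum_{\vb{w}\in\latdag\setminus\{\vb{0}\}} e^{-\pi s^2\|\vb{w}\|^2}\bigl(1-\cos(2\pi\vb{w}\cdot\vb{t}_0)\bigr),
\]
and the concentration lemma applied to $\latdag$ (whose non-zero vectors all have norm $\ge \lambda$) keeps the tail sum below $1/2$ as soon as $s\lambda\gtrsim \sqrt{n/(2\pi)}$. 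Choosing $s$ at this balance point and comparing the two estimates gives $\mu\lambda \le n/2$ after optimization of the free constant.

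The main obstacle is the concentration lemma itself: it is the analytic heart of the argument, and getting the sharp constant $n/2$ (rather than a larger multiple of $n$) hinges on choosing the Gaussian parameter precisely and using the convexity form of the tail estimate rather than a crude bound. Once the lemma is in hand, the remainder is Poisson summation plus an arithmetic balancing of the two regimes, with nothing lattice-specific required beyond the hypothesis that $\latdag$ has no non-zero vector shorter than $\lambda$.
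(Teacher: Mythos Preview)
The paper does not prove this theorem at all: it is stated as a known result and attributed to Banaszczyk~\cite{banaszczyk1993new}, with no proof given. Your proposal sketches exactly Banaszczyk's original Gaussian-measure argument (Poisson summation, the tail/concentration lemma, and the balancing of the two estimates), which is the standard and correct route to the sharp constant $n/2$; so there is nothing to compare against, and your outline is appropriate as an independent proof.
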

\paragraph{Symplectic matrices and symplectic lattices}
Set 
 \[
\vb{J}_{2n}:=\left(
\begin{matrix}
 \vb{0}_n & \vb{I}_n \\
 -\vb{I}_n & \vb{0}_n
\end{matrix}
\right).
\]
where $\vb{0}_n$ denotes the $n\times n$ matrix with all entries being $0$ and $\vb{I}_n$ denotes the $n\times n$ identity matrix. This matrix defines the standard symplectic form on $\R^{2n}$. Let $q\in\N$. A matrix  $\vb{M}\in \Z^{2n\times 2n}$ is called \textit{$q$-symplectic} if
\[
\vb{M}\cdot \vb{J}_{2n} \cdot \vb{M}^T=q\vb{J}_{2n}.
\]
A $1$-symplectic matrix is called \emph{symplectic.}  We call a $2n$-dimensional lattice  $\cL$ \textit{$q$-symplectic}, if it has a basis $\vb{M}_1, \vb{M}_2,\ldots, \vb{M}_{2n}$ such that the matrix $\vb{M}$ with rows $\vb{M}_1,\ldots, \vb{M}_{2n}$ is $q$-symplectic. A $1$-symplectic lattice is called \emph{symplectic}. The following connection between symplectic and $q$-symplectic matrices and lattices is straightforward.
\begin{lemma}\label{lem:q-symplectic_symplectic}
	Let $\vb{M}\in \Z^{n\times n}$ be a $q$-symplectic matrix. Then $\frac{1}{\sqrt{q}}\vb{M}$ is symplectic. Accordingly, for a $q$-symplectic matrix $\vb{M}$, lattice $\cL(\frac{1}{\sqrt{q}}\vb{M})$ is symplectic.
\end{lemma}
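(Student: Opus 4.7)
The plan is a direct algebraic verification; there is no real obstacle here beyond reading the definitions carefully. For the matrix claim, I simply substitute $\frac{1}{\sqrt{q}}\vb{M}$ into the defining symplectic relation, pull the scalar factors out of the product (using that transposition commutes with scalar multiplication), and apply the $q$-symplectic hypothesis on $\vb{M}$.

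Concretely, the key step is the one-line identity
\[
\bigl(\tfrac{1}{\sqrt{q}}\vb{M}\bigr)\,\vb{J}_{2n}\,\bigl(\tfrac{1}{\sqrt{q}}\vb{M}\bigr)^T
= \tfrac{1}{q}\,\vb{M}\,\vb{J}_{2n}\,\vb{M}^T
= \tfrac{1}{q}\cdot q\,\vb{J}_{2n}
= \vb{J}_{2n},
\]
where the middle equality uses that $\vb{M}$ is $q$-symplectic. This shows that $\frac{1}{\sqrt{q}}\vb{M}$ satisfies the symplectic relation $\vb{A}\,\vb{J}_{2n}\,\vb{A}^T=\vb{J}_{2n}$.

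For the lattice statement, I observe that by construction the rows of $\frac{1}{\sqrt{q}}\vb{M}$ form a basis of the lattice $\cL(\frac{1}{\sqrt{q}}\vb{M})$, and the first part has just shown that this basis matrix is symplectic, so the lattice is symplectic by definition. The one definitional subtlety worth flagging is that the paper's definition of a $1$-symplectic matrix is stated for integer matrices, whereas $\frac{1}{\sqrt{q}}\vb{M}$ has entries in $\mathbb{Q}(\sqrt{q})$; however, the relation $\vb{A}\,\vb{J}_{2n}\,\vb{A}^T=\vb{J}_{2n}$ is meaningful for any real matrix $\vb{A}$, and the notion of a symplectic lattice extends accordingly. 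Under this natural extension, the displayed computation is the entire proof; no further argument is needed.
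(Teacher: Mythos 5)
Your proof is correct and is exactly the direct computation the paper has in mind; the paper itself labels the lemma ``straightforward'' and gives no written proof, so there is nothing further to compare. Your observation about the definitional scope (integer vs.\ real matrices) is a fair reading of the text, and your resolution is the intended one.
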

\begin{definition}\label{def:symplectic_dual}
	Let $\cL$ be a $2n$-dimensional lattice. Then the symplectic dual $\latcross$ of $\cL$ is defined as
	\[
	\latcross:=\left\{ \vb{v}\in \spann(\cL)\mid \forall \vb{u}\in \cL: \vb{u}\cdot \vb{J}_{2n} \cdot \vb{v}^T\in \Z\right\},
	\]
\end{definition}
This is analogous to the canonical dual lattice, but instead of the euclidean inner product uses the symplectic inner product $\langle \vb{u},\vb{v}\rangle_{\text{symp}}=\vb{u}\vb{J}_{2n}\vb{v}^T$.
The following two lemmata are well-known and easy to verify.
\begin{lemma}\label{lem:symplectic_self_dual}
	Let $\cL$ be a symplectic lattice. Then $\cL$ is symplectically self-dual, i.e.\ $\cL=\latcross$. Furthermore, for any $\alpha\in \R^{>0}$,
	\begin{equation}\label{eq:mult_symplectic}
		(\alpha\cL)^\times =\frac{1}{\alpha}\cL.
	\end{equation}
\end{lemma}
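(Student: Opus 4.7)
The plan is to handle the two claims sequentially, reducing the symplectic dual to the ordinary dual.

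For the first claim that $\cL=\latcross$, I would start by rewriting the defining condition of $\latcross$ in terms of the standard inner product. Since $\vb{u}\vb{J}_{2n}\vb{v}^T = \vb{u}\cdot(\vb{v}\vb{J}_{2n}^T)^T$, a vector $\vb{v}$ lies in $\latcross$ if and only if $\vb{v}\vb{J}_{2n}^T \in \latdag$, i.e.\ $\latcross = \latdag\,(\vb{J}_{2n}^T)^{-1}$. Choosing a symplectic basis matrix $\vb{M}$ of $\cL$ so that $\cL=\cL(\vb{M})$ is full-dimensional, the standard formula for the canonical dual gives $\latdag=\cL(\vb{M}^{-T})$, hence
\[
\latcross \;=\; \cL\bigl(\vb{M}^{-T}(\vb{J}_{2n}^T)^{-1}\bigr).
\]

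The key algebraic step is then to exploit the symplecticity $\vb{M}\vb{J}_{2n}\vb{M}^T=\vb{J}_{2n}$. Taking inverses of both sides (and transposing) yields the identity $\vb{M}^{-T}(\vb{J}_{2n}^T)^{-1} = (\vb{J}_{2n}^T)^{-1}\vb{M}$. Since $\vb{J}_{2n}$ has integer entries and determinant $\pm 1$, the matrix $(\vb{J}_{2n}^T)^{-1}$ is unimodular, so left-multiplication by it is a lattice automorphism of $\Z^{2n}$. Consequently $\Z^{2n}(\vb{J}_{2n}^T)^{-1}\vb{M} = \Z^{2n}\vb{M} = \cL$, which gives $\latcross=\cL$ as required. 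The only spot where I would have to be careful is keeping the row-vector convention consistent through the transpose manipulations, since a sign error or a misplaced transpose would only become visible at the very last step.

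For the second claim, the argument is purely definitional and should cause no trouble. A vector $\vb{v}\in\spann(\cL)$ lies in $(\alpha\cL)^\times$ iff $\vb{u}'\vb{J}_{2n}(\alpha\vb{v})^T\in\Z$ for every $\vb{u}'\in\cL$, i.e.\ iff $\alpha\vb{v}\in\latcross$. Using part one, this is equivalent to $\alpha\vb{v}\in\cL$, i.e.\ $\vb{v}\in\tfrac{1}{\alpha}\cL$, which establishes~\eqref{eq:mult_symplectic}. The main obstacle in the whole proof is really the first claim, and within it the one non-obvious manipulation is converting the symplectic pairing into the Euclidean pairing and then invoking the symplectic identity in its transpose-inverted form; once that identity is in hand, the unimodularity of $\vb{J}_{2n}$ finishes the argument immediately.
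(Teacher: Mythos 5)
Your proof is correct. The paper itself does not supply a proof of this lemma (it declares it ``well-known and easy to verify''), but your strategy---writing $\latcross = \latdag(\vb{J}_{2n}^T)^{-1}$, substituting $\latdag = \cL(\vb{M}^{-T})$, rearranging the symplectic identity to $\vb{M}^{-T}(\vb{J}_{2n}^T)^{-1} = (\vb{J}_{2n}^T)^{-1}\vb{M}$, and then absorbing the unimodular factor into $\Z^{2n}$---is precisely the style of argument the paper uses for the sibling Lemma~\ref{lem:symplectic_canonical}, and all the transpose/inverse manipulations (including the facts that $\vb{J}_{2n}^T = \vb{J}_{2n}^{-1} = -\vb{J}_{2n}$ and that $\vb{J}_{2n}$ is unimodular) check out, as does the scaling argument for~\eqref{eq:mult_symplectic}.
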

\begin{lemma}\label{lem:symplectic_canonical}
	Let $\cL$ be a full-dimensional lattice with dual $\latdag$ and symplectic dual $\latcross$. Then $\latcross=\latdag\cdot \cO$ for some orthonormal linear transformation $\cO$.
\end{lemma}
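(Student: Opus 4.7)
The plan is to take $\cO := \vb{J}_{2n}$ itself and show that this already does the job, so the lemma essentially collapses to two short observations.

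First, I would verify that $\vb{J}_{2n}$ is orthogonal. A direct block computation gives $\vb{J}_{2n}^T = -\vb{J}_{2n}$ and $\vb{J}_{2n}^2 = -\vb{I}_{2n}$, from which $\vb{J}_{2n}\vb{J}_{2n}^T = \vb{I}_{2n}$. In particular, $\vb{J}_{2n}$ is invertible with $\vb{J}_{2n}^{-1} = \vb{J}_{2n}^T$, hence also $(\vb{J}_{2n}^T)^{-1} = \vb{J}_{2n}$.

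Second, I would pass from the symplectic pairing to the Euclidean one by a single transposition. For row vectors $\vb{u},\vb{v}\in\R^{2n}$,
\[
\vb{u}\,\vb{J}_{2n}\,\vb{v}^T \;=\; \vb{u}\cdot(\vb{v}\,\vb{J}_{2n}^T)^T,
\]
so that $\vb{v}\in\latcross$ (for all $\vb{u}\in\cL$, $\vb{u}\vb{J}_{2n}\vb{v}^T\in\Z$) is equivalent to $\vb{v}\,\vb{J}_{2n}^T\in\latdag$. Here I use that $\cL$ is full-dimensional, so $\spann(\cL)=\R^{2n}$, and the ambient-space conditions in the definitions of $\latcross$ and $\latdag$ are automatic. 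Writing $\vb{v} = (\vb{v}\,\vb{J}_{2n}^T)\,(\vb{J}_{2n}^T)^{-1} = (\vb{v}\,\vb{J}_{2n}^T)\,\vb{J}_{2n}$ and letting $\vb{v}\,\vb{J}_{2n}^T$ range over $\latdag$ yields $\latcross = \latdag\cdot \vb{J}_{2n}$. Taking $\cO := \vb{J}_{2n}$ finishes the proof.

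There is no real obstacle; the only thing requiring care is the bookkeeping around the row-vector convention and the sign/transpose identities for $\vb{J}_{2n}$. Once these are clean, the statement reduces to a one-line definition chase, and the orthogonality of $\vb{J}_{2n}$ — the content of the lemma — is exactly the identity $\vb{J}_{2n}\vb{J}_{2n}^T=\vb{I}_{2n}$ verified at the start.
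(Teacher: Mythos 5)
Your proof is correct and uses essentially the same mechanism as the paper's: both reduce the claim to the orthogonality of $\vb{J}_{2n}$, with you working pointwise on lattice vectors via the identity $\vb{u}\vb{J}_{2n}\vb{v}^T = \vb{u}\cdot(\vb{v}\vb{J}_{2n}^T)^T$, whereas the paper works at the level of basis matrices by solving $\vb{B}^\times \vb{J}_{2n}\vb{B}^T=\vb{I}_{2n}$ to get $\vb{B}^\times=\vb{B}^{-T}\vb{J}_{2n}^T$. The resulting $\cO$ differs by a sign ($\vb{J}_{2n}$ versus $\vb{J}_{2n}^T=-\vb{J}_{2n}$), but since $-\latdag=\latdag$ both choices give the same lattice, and both matrices are orthogonal.
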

\begin{proof}
	Assume that $\cL=\cL(\vb{B}), \vb{B}\in \R^{2n\times 2n}$,, then $\latcross$ can be generated by the rows of matrix $\vb{B}^\times$ satisfying $\vb{B}^\times \cdot \vb{J}_{2n}\cdot \vb{B}^T=\vb{I}_{2n}$. Hence $\vb{B}^\times = \vb{B}^{-T}\cdot \vb{J}_{2n}^T$. The lemma follows, since $\vb{J}_{2n}$ is orthonormal.
\end{proof}
By this lemma, the canonical and symplectic dual of a lattice $\cL$ have the same invariants like the minimal distance or the covering radius. Combining the two previous lemmata with Banaszczyk's transference theorem (Theorem~\ref{thm:transference}) we obtain 
\begin{theorem}\label{thm:symplectic_covering_radius}
	Let $\cL$ be a symplectic lattice and $\alpha\in\R^+$. Then
	\[
	\rho((\alpha \cL)^\times)=\frac{1}{\alpha}\rho(\cL)\le \frac{n}{2\alpha\lambda_1(\cL)}.
	\]
\end{theorem}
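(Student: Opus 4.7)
The plan is to chain together the three ingredients already in hand: the symplectic self-duality of $\cL$, the orthonormal equivalence between symplectic and canonical duals, and Banaszczyk's transference bound. The theorem has two parts---an equality and an inequality---and I would handle them separately.

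For the equality $\rho((\alpha\cL)^\times) = \frac{1}{\alpha}\rho(\cL)$, I would invoke Lemma~\ref{lem:symplectic_self_dual}. Symplectic self-duality gives $\cL = \latcross$, and applying Equation~(\ref{eq:mult_symplectic}) with the scalar $\alpha$ yields $(\alpha\cL)^\times = \frac{1}{\alpha}\latcross = \frac{1}{\alpha}\cL$. Since the covering radius scales linearly under scalar multiplication of the lattice, $\rho(\frac{1}{\alpha}\cL) = \frac{1}{\alpha}\rho(\cL)$, giving the claimed equality.

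For the upper bound, the idea is to rewrite Banaszczyk's theorem so that it bounds $\rho(\cL)$ in terms of $\lambda_1(\cL)$ rather than $\lambda_1(\latdag)$. Lemma~\ref{lem:symplectic_canonical} states that $\latcross = \latdag\cdot\cO$ for some orthonormal $\cO$, and orthonormal maps preserve euclidean norms, so $\lambda_1(\latcross) = \lambda_1(\latdag)$. Combining with symplectic self-duality ($\cL = \latcross$) gives $\lambda_1(\cL) = \lambda_1(\latdag)$. Now Theorem~\ref{thm:transference} applied to $\cL$ (understood as an $n$-dimensional lattice in the notation of that theorem) gives $\lambda_1(\latdag)\rho(\cL) \le n/2$, and substituting the identity just derived yields $\rho(\cL) \le \frac{n}{2\lambda_1(\cL)}$. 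Multiplying by $1/\alpha$ and combining with the equality from the previous step finishes the proof.

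I do not anticipate any real obstacle: the proof is a short syntactic chaining of the preceding lemmata. The one point that requires care is keeping track of what ``$n$'' denotes in each invoked statement, since elsewhere in the paper $n$ is used for the number of bosonic modes while a symplectic lattice has dimension $2n$; here, as in Theorem~\ref{thm:transference}, the symbol $n$ refers to the full lattice dimension, and no further adjustment is needed.
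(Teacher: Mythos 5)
Your proof is correct and follows the paper's own argument exactly: the equality comes from symplectic self-duality together with the scaling rule~(\ref{eq:mult_symplectic}), and the inequality is Banaszczyk's transference bound rewritten via $\lambda_1(\latdag)=\lambda_1(\latcross)=\lambda_1(\cL)$ using Lemmata~\ref{lem:symplectic_self_dual} and~\ref{lem:symplectic_canonical}. Your closing remark about $n$ is, however, where the only real subtlety lies: the paper subsequently instantiates this theorem (e.g.\ in Theorem~\ref{thm:covering_radius_symplectic_general}) with a $2n$-dimensional lattice $\cL=\frac{1}{\sqrt q}\latperp(\vb{H})$ while $n$ remains the number of modes, and under that reading Banaszczyk's bound for a $2n$-dimensional lattice gives $\rho(\cL)\lambda_1(\latdag)\le n$, not $\le n/2$, so both the stated inequality and the paper's own proof appear to carry an unaccounted factor of $2$; your interpretation of $n$ as the full lattice dimension would make the derivation airtight, but it is not the convention the paper uses when it later applies the result, so this is a latent tension inherited from the source rather than a flaw you introduce.
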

\begin{proof}
The theorem follows from the following sequence of (in-)equalities.
\begin{align*}
	\rho((\alpha\cL)^\times) = \frac{1}{\alpha}\rho(\latcross)  & =  \frac{1}{\alpha}\rho(\cL)\\
	 & \le \frac{1}{\alpha}\frac{n}{2\lambda_1(\latdag)}\\
	 & = \frac{1}{\alpha}\frac{n}{2\lambda_1(\latcross)}\\
	 & = \frac{1}{\alpha}\frac{n}{2\lambda_1(\cL)}. 
\end{align*}
The first equality follows from~(\ref{eq:mult_symplectic}), the second and last equality follow from $\cL$ being symplectic. The inequality follows from the transference theorem. The third equality follows from Lemma~\ref{lem:symplectic_canonical}.
\end{proof}
\paragraph{A simple geometric lemma}
 By $B_d(r)$ denote the $d$-dimensional ball of radius $r$ centered at the origin.  The volume of $B_d(1)$ is denoted by $\sigma_d$. The next lemma is well-known. Below, it plays a crucial role in the probabilistic analysis of the minimum distance of random symplectic lattices. For the sake of completeness we give a proof.
\begin{lemma}\label{lem:bound_integer_points}
	Let $d\in \N$ and $r$ be a positive real number. Then
	\[
	\left|\Z^d\cap B_d(r)\right|\le \sigma_d (r+\sqrt{d}/2)^d.
	\]
\end{lemma}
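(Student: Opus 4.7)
The plan is to use the standard unit-cube packing argument for counting lattice points inside a convex body. For each lattice point $\vb{z}\in\Z^d$, I associate the half-open unit cube $C_\vb{z}:=\vb{z}+[-1/2,1/2)^d$ centered at $\vb{z}$. These cubes tile $\R^d$: their interiors are pairwise disjoint (distinct integer points differ by at least $1$ in some coordinate) and each has $d$-dimensional Lebesgue volume exactly $1$. Consequently, the union $U:=\bigcup_{\vb{z}\in\Z^d\cap B_d(r)}C_\vb{z}$ satisfies $\mathrm{vol}(U)=|\Z^d\cap B_d(r)|$.

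Next I would show the geometric containment $U\subseteq B_d(r+\sqrt{d}/2)$. For any $\vb{z}\in\Z^d\cap B_d(r)$ and any $\vb{x}\in C_\vb{z}$, the vector $\vb{x}-\vb{z}$ lies in $[-1/2,1/2]^d$, so $\|\vb{x}-\vb{z}\|\le\sqrt{d}/2$ (the circumradius of a unit cube, achieved at the vertices). The triangle inequality then gives $\|\vb{x}\|\le\|\vb{z}\|+\|\vb{x}-\vb{z}\|\le r+\sqrt{d}/2$, proving the containment.

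Combining these two facts yields
\[
|\Z^d\cap B_d(r)|=\mathrm{vol}(U)\le\mathrm{vol}\bigl(B_d(r+\sqrt{d}/2)\bigr)=\sigma_d\bigl(r+\sqrt{d}/2\bigr)^d,
\]
which is the claimed bound. There is no genuine obstacle here, as the argument is a routine volume comparison; the only thing one must get right is the inflation factor, which is dictated precisely by the circumradius $\sqrt{d}/2$ of the centered unit cube rather than its diameter $\sqrt{d}$.
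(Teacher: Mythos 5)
Your proof is correct and follows essentially the same unit-cube volume-comparison argument as the paper: center unit cubes at the integer points of $B_d(r)$, observe that they are disjoint with total volume equal to the count, and note that each lies inside $B_d(r+\sqrt d/2)$ because the circumradius of a centered unit cube is $\sqrt d/2$. The only cosmetic difference is that you use half-open cubes while the paper uses open ones; both choices make the disjointness and volume bookkeeping go through identically.
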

\begin{proof}
	Around each integer point $z$ in $B_d(r)$ place an open $d$-dimensional cube centered at $z$ with side-length $1$, denoted by $C(z)$. Every cube $C(z)$ has $d$-dimensional volume $1$. The cubes around different integer points do not intersect. Moreover, for all $z$ we have $C(z)\subset B_d(r+\sqrt{d}/2)$. Denote the $d$-dimensional volume of a (measurable) set $S$ by $V_d(S)$. Then from the above we deduce
\begin{align*}
	\left|\Z^d\cap B_d(r)\right|=V_d\left(\bigcup_{z\in \Z^d\cap B_d(r)} C(z)\right)\le & V_d\big(B_d(r+\sqrt{d}/2)\big)^d\\
	       = &\sigma_d\cdot \big(r+\sqrt{d}/2\big)^d.
\end{align*}
\end{proof}
\paragraph{Some useful bounds}
The probabilistic analysis of crucial parameters of random symplectic lattices below also uses the following well known bounds on factorials and volumes of high-dimensional balls.
\begin{lemma}\label{lem:bounds}As before denote by $\sigma_d$  the volume of the unit ball $B_d(1)$ in $\R^d$. Then
\begin{itemize}
	\item[(1)] $\sqrt{2\pi d}\left(\frac {d}{e}\right)^{d} e^{\frac{1}{12d+1}}\le d! \le \sqrt{2\pi d}\left(\frac {d}{e}\right)^{d} e^{\frac{1}{12d}} $
	\item[(2)] $\sqrt{2\pi d}^{1/(2d)}\sqrt{\frac{d}{\pi e}}\le \left(\frac{1}{\sigma_{2d}}\right)^{1/(2d)}\le \sqrt{2\pi d}^{1/(2d)}\sqrt{\frac{d}{\pi e}}e^{\frac{1}{12d^2}}$
\end{itemize}	
\end{lemma}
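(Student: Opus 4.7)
My plan is to treat the two bounds quite differently, since (1) is a classical sharp form of Stirling's formula while (2) is essentially just an algebraic consequence of (1) combined with the closed-form volume of an even-dimensional ball.

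For part (1), I would simply cite the Robbins refinement of Stirling's approximation and, if a proof is desired in the paper, sketch the standard derivation: take $\log(d!) = \sum_{k=1}^{d}\log k$, compare this sum to $\int_{1}^{d}\log x\,dx$ via the trapezoidal rule, and control the error term $\log d! - (d+\tfrac12)\log d + d - \tfrac12\log(2\pi)$ by a telescoping series whose remainder is monotone and lies between $1/(12d+1)$ and $1/(12d)$. The calculation is short but tedious; since the bounds are well-known and the paper presents them as a reference fact, the clean move is to state them and cite a standard source.

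For part (2), the plan is to exploit the identity
\[
\sigma_{2d}=\frac{\pi^{d}}{d!},
\]
which holds because the $2d$-dimensional unit ball has the closed-form volume $\pi^{d}/\Gamma(d+1)$. Taking the $2d$-th root of the reciprocal gives
\[
\left(\frac{1}{\sigma_{2d}}\right)^{1/(2d)}=\frac{(d!)^{1/(2d)}}{\sqrt{\pi}}.
\]
Now I plug in the two inequalities from (1). For the lower bound, $d!\ge \sqrt{2\pi d}(d/e)^{d}e^{1/(12d+1)}$ gives
\[
(d!)^{1/(2d)}\ge (2\pi d)^{1/(4d)}\sqrt{d/e}\cdot e^{1/(2d(12d+1))},
\]
and since the last exponential is $\ge 1$ it may be dropped, yielding exactly the stated lower bound after dividing by $\sqrt{\pi}$. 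For the upper bound, $d!\le \sqrt{2\pi d}(d/e)^{d}e^{1/(12d)}$ gives
\[
(d!)^{1/(2d)}\le (2\pi d)^{1/(4d)}\sqrt{d/e}\cdot e^{1/(24d^{2})},
\]
and dividing by $\sqrt{\pi}$ produces the claimed bound with room to spare (since $e^{1/(24d^2)}\le e^{1/(12d^2)}$).

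The main obstacle is essentially cosmetic: deciding how much of the Stirling derivation to include versus how much to cite. Nothing in either part requires new ideas—once one observes that part (2) is just part (1) rewritten via $\sigma_{2d}=\pi^{d}/d!$, the rest is a two-line algebraic manipulation. The only subtlety worth flagging is making sure that the error factor $e^{1/(12d)}$ in Stirling, when raised to $1/(2d)$, is absorbed by the looser $e^{1/(12d^{2})}$ appearing in the lemma, which it plainly is.
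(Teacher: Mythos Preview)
Your proposal is correct and matches the paper's own treatment exactly: the paper cites Robbins for (1) and derives (2) from (1) via the identity $\sigma_{2d}=\pi^d/d!$, just as you outline. Your additional remark that the exponent $1/(24d^2)$ already fits under the looser $1/(12d^2)$ is a nice explicit check that the paper leaves implicit.
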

The first statement of the lemma is an explicit version of Stirling's formula due to Robbins~\cite{robbins1955remark}. The second one follows from the first statement and $\sigma_{2d}=\frac{\pi^d}{d!}$.
The second statement of the lemma shows that up to low order terms $(1/\sigma_{2d})^{1/2d}$ and $\sqrt{d/\pi e}$ coincide. More precisely, 
\begin{equation}
	 \left(\frac{1}{\sigma_{2d}}\right)^{1/(2d)}=\left(1+o\bigl(\frac{\ln(d)}{d}\bigr)\right) \sqrt{\frac{d}{\pi e}}.
\end{equation}

\section{Constructing symplectic matrices from $\sis$ lattices}\label{sec:construction}
Conrad et al. showed how to construct so-called GKP codes (Gotttesman, Kitaev, Preskill) from symplectic lattices $\cL$ with large minimal distance $\lambda_1(\cL)$. We construct such lattices from various types of \emph{modular lattices} as used in (post-quantum) cryptography.
So called modular lattices play an important role in modern lattice-based cryptography. In the most general case, such a lattice is defined via a  matrix $\vb{A}\in \Z^{n\times m}, m>n,$ and an integer $q$. Then one sets
\[
\latperp(\vb{A}):=\{\vb{v}\in \Z^m:\vb{A}\cdot \vb{v}=0\mod q\}.
\]
One easily checks, that this indeed defines a lattice. 
These lattices appear in cryptography in the \emph{short integer solutions problem ($\sis$)}. Mostly for efficiency reasons, nowadays cryptographers prefer \emph{modular ring lattices}. As it turns out, these are also useful in the explicit construction of symplectic lattices and GKP codes. NTRU lattices as used by Conrad et al~\cite{conrad2024good}, in many ways are just special modular ring lattices.  We define these lattices later.

Let $\vb{H}\in \Z^{n\times n}$ be a symmetric matrix, i.e. $\vb{H}=\vb{H}^T$,  and let $q\in \N$. Consider the $2n\times 2n$ matrix
\begin{equation}\label{eqn:q-symplectic}
M_\vb{H}=\left(
\begin{matrix}
 \vb{I}_n & \vb{H}\\
 \vb{0}_n & q\vb{I}_n 
\end{matrix}
\right).
\end{equation}

A simple calculation shows that $M_\vb{H}$ is $q$-symplectic. The rows of $M_\vb{H}$ generate a $2n$-dimensional lattice 
$\cL_\vb{H}=\cL(M_\vb{H})$, i.e. a $q$-symplectic lattice. 

Next, define
\[
A_\vb{H}:=\left(\vb{H}\mid -\vb{I}_n\right)\in \Z^{n\times 2n}
\]
and consider lattice $\latperp(A(\vb{H}))$. It has a basis given by the rows of matrix $M_\vb{H}$
defined above, i.e.
\begin{equation}\label{eq:def_lat_perp_original}
	\latperp(\vb{A}(\vb{H}))=\cL(M_\vb{H}).
\end{equation} 
Simplifying notation, we set
\begin{equation}\label{eq:def_lat_perp}
\latperp(\vb{H}):=\latperp(\vb{A}(\vb{H})).
\end{equation}
As noted above, if $\vb{H}$ is symmetric, then $\latperp(\vb{H})$ is $q$-symplectic and $1/\sqrt{q}\latperp(\vb{H})$ is symplectic. We call lattices of the form $1/\sqrt{q}\latperp(\vb{H}),$ $\vb{H}$ symmetric, \emph{symplectic $\sis$ lattices}.

Symplectic $\sis$ lattices are also used by Conrad et al.\ (see Proposition 3 in \cite{conrad2024good}). However, they did not interpret them as $\sis$ lattices.

\section{Analysis of minimal distance}\label{sec:analysis_distance}
In this section we prove that with probability close to $1$ lattices $\latperp(\vb{H})$ have minimal distance $\approx \sqrt{qn/\pi e}$. Here the probability is with respect to  the uniform distribution over symmetric matrices $\vb{H}$ in $\Z_q^{n\times n}, n\in \N$. 
\begin{definition}
	For $q,n\in\N$, $q$ a prime power,  denote the set of symmetric matrices in $\F_q^{n\times n}$ by $\matsym{q}{n}$. 
\end{definition}
To generate an element of $\matsym{q}{n}$ uniformly at random proceed in two steps
\begin{enumerate}
	\item choose ${n+1 \choose 2}$ elements $h_{ij},1\le i,j\le n, j\ge i.$, uniformly at random from $\F_q$,
	\item for $1\le i\le n, 1\le j<i$, set $h_{ij}=h_{ji}$ 
\end{enumerate}
The analysis of the minimal distance proceeds in two steps and follows standard arguments in lattice-based cryptography (see for example~\cite{micciancio2011geometry}), but taking into account that we work with symmetric matrices.  In the first step of the analysis, we determine for fixed $\vb{z}\in \Z^{2kn},$
\begin{equation}\label{eqn:prob_element_inv_lattice}
	\Pr[\vb{z}\in \latperp(\vb{H}):\vb{H}\leftarrow \matsym{q}{n}].
\end{equation}
Then we use an upper bound on short elements in $\Z^{2n}$  (see Lemma~\ref{lem:bound_integer_points}) and the union bound to obtain the main result.
\subsection{Analyzing the probability}
First, we need a general lemma on solutions to random symmetric linear equations. For $q$ a prime power, denote by $\F_q$ the field with $q$ elements. 
\begin{lemma}\label{lem:solutions_symmetric_matrix}
Let $\vb{z}_i\in \F_q^{2n},i=1,2, \vb{z}=(\vb{z}_1,\vb{z}_2)\ne \vb{0}$. Then 
\[
\Pr[\vb{H}\vb{z}_1=\vb{z}_2; \vb{H}\leftarrow \matsym{q}{n}]=\frac{1}{q^{n}}.
\]		
\end{lemma}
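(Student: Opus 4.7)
The plan is to recast the lemma as a surjectivity statement and then count. Consider the $\F_q$-linear evaluation map
\[
\phi:\matsym{q}{n}\longrightarrow \F_q^n,\qquad \phi(\vb{H}) := \vb{H}\vb{z}_1.
\]
Since $\matsym{q}{n}$ is an $\F_q$-vector space of dimension $\binom{n+1}{2}$, it has cardinality $q^{n(n+1)/2}$. If $\phi$ is surjective (equivalently, has rank $n$), then by rank--nullity every fiber of $\phi$ has size $q^{n(n+1)/2-n}$, so the probability that a uniformly random $\vb{H}\in \matsym{q}{n}$ maps to any fixed $\vb{z}_2$ equals $q^{-n}$, as claimed.

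The entire content of the lemma thus reduces to surjectivity of $\phi$ whenever $\vb{z}_1\neq\vb{0}$. The degenerate case $\vb{z}_1=\vb{0}$ paired with the hypothesis $\vb{z}\neq\vb{0}$ forces $\vb{z}_2\neq\vb{0}$, in which case the event is empty and the probability is $0$; this is the upper bound $\le q^{-n}$ that suffices for the union bound applied in Section~\ref{sec:analysis_distance}.

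To prove surjectivity when $\vb{z}_1\neq\vb{0}$, I would exhibit, for every target $\vb{z}_2\in\F_q^n$, an explicit symmetric preimage. Pick $\vb{u}\in\F_q^n$ with $\vb{u}^T\vb{z}_1 = 1$ (obtained by rescaling a standard basis vector corresponding to a nonzero coordinate of $\vb{z}_1$), and set
\[
\vb{H}\;:=\;\vb{z}_2\vb{u}^T + \vb{u}\vb{z}_2^T - (\vb{z}_2^T\vb{z}_1)\,\vb{u}\vb{u}^T .
\]
This matrix is manifestly symmetric, and direct expansion using $\vb{u}^T\vb{z}_1 = 1$ gives
\[
\vb{H}\vb{z}_1 = (\vb{u}^T\vb{z}_1)\,\vb{z}_2 + (\vb{z}_2^T\vb{z}_1)\,\vb{u} - (\vb{z}_2^T\vb{z}_1)(\vb{u}^T\vb{z}_1)\,\vb{u} = \vb{z}_2,
\]
which yields surjectivity.

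The main obstacle is guessing an ansatz for $\vb{H}$ that is simultaneously symmetric and hits an arbitrary $\vb{z}_2$; once the symmetrized rank-at-most-two ansatz above is written down, the verification is a one-line computation that works uniformly in all characteristics and without any non-isotropy assumption on $\vb{z}_1$. No further lattice-theoretic or number-theoretic ingredient is needed.
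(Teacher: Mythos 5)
Your proof is correct, and it establishes the same fact via the same overall framework (uniformity of fibers of the linear evaluation map $\phi:\vb{H}\mapsto\vb{H}\vb{z}_1$), but your key step differs from the paper's. The paper proves surjectivity of $\phi$ (equivalently, that every fiber has size $q^{\binom{n+1}{2}-n}$) by first permuting coordinates so that the nonzero entries of $\vb{z}_1$ occupy the last $m$ positions, writing $\vb{H}$ in $2\times 2$ block form, and then observing that after fixing all entries except the $m$ diagonal entries of $\vb{H}_{22}$ and the $n-m$ entries of the last column of $\vb{H}_{12}$, the equation $\vb{H}\vb{z}_1=\vb{z}_2$ determines those $n$ free entries uniquely. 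You instead bypass the permutation and block bookkeeping by writing down a closed-form symmetric preimage
\[
\vb{H}:=\vb{z}_2\vb{u}^T+\vb{u}\vb{z}_2^T-(\vb{z}_2^T\vb{z}_1)\,\vb{u}\vb{u}^T,\qquad \vb{u}^T\vb{z}_1=1,
\]
whose verification is a one-line identity. Your ansatz is shorter and, as you note, sidesteps any isotropy issues that plague the naive rank-one guess $\vb{H}=\vb{z}_2\vb{z}_1^T/(\vb{z}_1^T\vb{z}_1)$, while the paper's argument is perhaps more directly parallel to the classical regularity proof for unconstrained $\vb{H}\in\F_q^{n\times n}$. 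You also handle the boundary case more carefully than the paper: when $\vb{z}_1=\vb{0}$ and $\vb{z}_2\ne\vb{0}$ the probability is in fact $0$, not $q^{-n}$, so the lemma's stated equality technically fails there; the paper's proof waves this away, whereas you correctly identify that only the inequality $\le q^{-n}$ is needed for the union bound in Section~\ref{sec:analysis_distance}.
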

Note that this is the same bound that we obtain by choosing $\vb{H}$ uniformly at random from $\Z_q^{n\times n}$. 
\begin{proof}
	 If $\vb{z}_1=\vb{0}$, then $\vb{z}_2=\vb{0}\mod q,$ which is impossible since $\vb{z}\ne \vb{0}$ by assumption. For any permutation matrix $\vb{P}$ and symmetric $\vb{H}$, matrix $\vb{P}^T\vb{H}\vb{P}$ is symmetric. Therefore, by permuting the rows and columns of $\vb{H}$ and $\vb{z}_1,\vb{z}_2$, we may without loss of generality assume that the last $m$ entries of $\vb{z}_1$ are non-zero and the remaining entries are zero. Next, write a symmetric matrix $\vb{H}\in \F_q^{n\times n}$ as
 \[
 \vb{H}=
\left(
\begin{matrix}
	\vb{H}_{11} & \vb{H}_{12}\\
	\vb{H}_{21} & \vb{H}_{22}
\end{matrix}
\right),
 \]
 where
 \begin{itemize}
 	\item $\vb{H}_{11}\in \Z_q^{n-m\times n-m}, \vb{H}_{22}\in \F_q^{m\times m}$ are symmetric
 	\item $\vb{H}_{21}=\vb{H}_{12}^T\in \F_q^{m\times n-m}$.
 \end{itemize}
 Assume that all elements in $\vb{H}$ are fixed, except the elements on the diagonal of $\vb{H}_{22}$ and the last column of $\vb{H}_{12}$ (which is also the last row of $\vb{H}_{21}$.) Hence, overall $n$ elements are not fixed. Then, there is a unique choice in $\F_q^n$ for the elements not fixed such that $\vb{H}\vb{z}_1=\vb{z}_2\mod q$. This proves the lemma.
\end{proof}
\begin{lemma}\label{lem:prob_element_lattice}
Let $q$ be prime and  $\vb{z}\in \Z^{2n}, \vb{z}\ne \vb{0}\mod q$. Then 
\[
\Pr[\vb{z}\in \latperp(\vb{H}); \vb{H}\leftarrow \matsym{q}{n}]=\frac{1}{q^{n}}.
\]	
\end{lemma}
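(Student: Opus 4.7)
The plan is a direct reduction to Lemma~\ref{lem:solutions_symmetric_matrix}. Writing $\vb{z}=(\vb{z}_1,\vb{z}_2)$ with $\vb{z}_1,\vb{z}_2\in\Z^n$, I would first unwind the definition of $\latperp(\vb{H})$ via the parity-check description $A_\vb{H}=(\vb{H}\mid -\vb{I}_n)$: the condition $\vb{z}\in\latperp(\vb{H})$ is equivalent to the congruence
\[
\vb{H}\vb{z}_1^T\equiv \vb{z}_2^T\pmod q.
\]
This is a purely modular linear condition on $\vb{H}$, so the probability depends only on the residues $\bar{\vb{z}}_i\in\F_q^n$ of $\vb{z}_i$ modulo $q$; here the primality of $q$ is used precisely to identify $\Z/q\Z$ with the field $\F_q$ over which $\matsym{q}{n}$ is defined.

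In the generic case $\bar{\vb{z}}_1\ne\vb{0}$, I would apply Lemma~\ref{lem:solutions_symmetric_matrix} to the pair $(\bar{\vb{z}}_1,\bar{\vb{z}}_2)\in\F_q^n\times\F_q^n$, which is nonzero because $\vb{z}\not\equiv\vb{0}\pmod q$. That lemma realises the set of symmetric matrices satisfying $\vb{H}\bar{\vb{z}}_1^T=\bar{\vb{z}}_2^T$ as an affine subspace of $\matsym{q}{n}$ of codimension $n$, and so the probability is exactly $1/q^n$. The whole argument is therefore a formal rewriting; all of the combinatorial content already sits in Lemma~\ref{lem:solutions_symmetric_matrix}.

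The only subtle point, and the place I would expect a careful reader to pause, is the degenerate case $\bar{\vb{z}}_1=\vb{0}$ with $\bar{\vb{z}}_2\ne\vb{0}$ (which is compatible with $\vb{z}\not\equiv\vb{0}\pmod q$): there the congruence admits no solution in $\vb{H}$ at all, so the probability is $0$ rather than $1/q^n$. Since $0\le 1/q^n$, the stated bound still holds as an upper bound, which is all that the subsequent union bound over short lattice vectors requires; in writing up the proof I would therefore either phrase the conclusion as ``$\le 1/q^n$'' or tighten the hypothesis to $\bar{\vb{z}}_1\ne\vb{0}\pmod q$. Apart from this bookkeeping the reduction is immediate.
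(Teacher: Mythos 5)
Your proof follows the same route as the paper's: reduce via the parity-check description $A_\vb{H}=(\vb{H}\mid-\vb{I}_n)$ to the congruence $\vb{H}\vb{z}_1^T\equiv\vb{z}_2^T\pmod q$ and invoke Lemma~\ref{lem:solutions_symmetric_matrix}. Where you add value is in the careful handling of the degenerate case $\vb{z}_1\equiv\vb{0}$, $\vb{z}_2\not\equiv\vb{0}\pmod q$. The paper's proof of Lemma~\ref{lem:solutions_symmetric_matrix} opens with ``If $\vb{z}_1=\vb{0}$, then $\vb{z}_2=\vb{0}\mod q$, which is impossible since $\vb{z}\ne\vb{0}$''; this is an attempt to rule the case out, but the implication actually only shows that the equation $\vb{H}\vb{z}_1=\vb{z}_2$ has \emph{no} solution when $\vb{z}_1=\vb{0}$ and $\vb{z}_2\ne\vb{0}$, so the probability there is $0$, not $1/q^n$. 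You diagnosed this correctly, and you are also right that it is harmless downstream: Theorem~\ref{thm:main_probability_bound} only consumes the lemma as an upper bound inside a union bound, and $0\le 1/q^n$. Your suggested fixes (state the lemma as ``$\le 1/q^n$'', or add the hypothesis $\vb{z}_1\ne\vb{0}\pmod q$) are both clean; the latter matches what the paper's proof of Lemma~\ref{lem:solutions_symmetric_matrix} tacitly assumes after the first sentence. In short: same approach, but your write-up is the more precise one.
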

\begin{proof}
Write $\vb{z}=(\vb{z}_1,\vb{z}_2), \vb{z}_1,\vb{z}_2\in \Z^n$. Then 
\[
\vb{z}\in \latperp(\vb{H})\Leftrightarrow \vb{H}\vb{z}_1=\vb{z}_2 \mod q.
\]
The lemma follows from the previous lemma.
 \end{proof}
 \subsection{Bounding the minimal distance}
\begin{theorem}\label{thm:main_probability_bound}
Let $n,q,r\in \N$, where $q$ is prime and $r\sqrt{q}<q/2$. Then
\begin{equation}\label{eqn:main_prob_minimal_distance}
	\Pr[\lambda_1(\latperp(\vb{H})<r\sqrt{q}; \vb{H}\leftarrow \matsym{q}{n}] \le \sigma_{2n}\big(r+\sqrt{\frac{n}{2q}}\big)^{2n}. 
\end{equation}

\end{theorem}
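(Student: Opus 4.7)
The plan is to bound the probability via a standard union-bound argument over all ``bad'' short integer vectors. Observe that $\lambda_1(\latperp(\vb{H})) < r\sqrt{q}$ occurs if and only if there exists some non-zero $\vb{z} \in \Z^{2n} \cap \latperp(\vb{H})$ with $\|\vb{z}\| < r\sqrt{q}$, since $\latperp(\vb{H}) \subseteq \Z^{2n}$. Therefore, by the union bound,
\[
\Pr[\lambda_1(\latperp(\vb{H})) < r\sqrt{q}] \le \sum_{\vb{z}} \Pr[\vb{z} \in \latperp(\vb{H}); \vb{H} \leftarrow \matsym{q}{n}],
\]
where $\vb{z}$ ranges over non-zero integer vectors in $B_{2n}(r\sqrt{q})$.

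The next step is to apply Lemma~\ref{lem:prob_element_lattice} to each summand. For this, I need the hypothesis of that lemma, namely $\vb{z} \ne \vb{0} \bmod q$. This is exactly where the assumption $r\sqrt{q} < q/2$ is used: any $\vb{z} \in \Z^{2n}$ with $\|\vb{z}\| < r\sqrt{q} < q/2$ has all coordinates strictly bounded by $q/2$ in absolute value, so $\vb{z} \ne \vb{0}$ implies $\vb{z} \not\equiv \vb{0} \bmod q$. Hence each probability in the sum equals $q^{-n}$.

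It remains to count short integer vectors. By Lemma~\ref{lem:bound_integer_points} applied with $d = 2n$ and radius $r\sqrt{q}$, we get
\[
\bigl|\Z^{2n} \cap B_{2n}(r\sqrt{q})\bigr| \le \sigma_{2n}\Bigl(r\sqrt{q} + \tfrac{\sqrt{2n}}{2}\Bigr)^{2n}.
\]
Combining with the per-vector probability bound yields
\[
\Pr[\lambda_1(\latperp(\vb{H})) < r\sqrt{q}] \le \frac{\sigma_{2n}\bigl(r\sqrt{q} + \sqrt{2n}/2\bigr)^{2n}}{q^n} = \sigma_{2n}\Bigl(r + \sqrt{\tfrac{n}{2q}}\Bigr)^{2n},
\]
where the last equality is obtained by factoring $\sqrt{q}$ out of the base, using $q^n = (\sqrt{q})^{2n}$ and $\sqrt{2n}/(2\sqrt{q}) = \sqrt{n/(2q)}$.

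The routine parts are the counting and the algebraic simplification; the only conceptual point is recognizing that the condition $r\sqrt{q} < q/2$ is precisely what is needed to invoke Lemma~\ref{lem:prob_element_lattice} on every short vector. Since the earlier lemma showing that the probability for symmetric $\vb{H}$ matches the one for general $\vb{H}$ has already done all the heavy lifting, no further obstacle remains.
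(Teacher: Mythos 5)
Your proof is correct and follows the same argument as the paper: a union bound over nonzero integer vectors in $B_{2n}(r\sqrt{q})$, the counting bound from Lemma~\ref{lem:bound_integer_points}, the per-vector probability $q^{-n}$ from Lemma~\ref{lem:prob_element_lattice}, and the observation that $r\sqrt{q}<q/2$ guarantees such vectors are $\not\equiv\vb{0}\bmod q$. The algebraic simplification at the end matches the paper's as well.
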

\begin{proof}
To simplify notation, define event
\[
E:=\{\vb{H}\in \zqn: \lambda_1(\vb{H})<\radius\},
\] 
and denote by $\bar{E}$ its complimentary event. 
Hence, we need to prove that
\[
\Pr[ E(\vb{H});\vb{H}\leftarrow \matsym{q}{n}]\le \sigma_{2n}\big(r+\sqrt{\frac{n}{2q}}\big)^{2n}.
\]
By the union bound
\begin{equation*} 
	\begin{split} 
				\Pr[\vb{H}\in E; \vb{H}\leftarrow \matsym{q}{n}] & = \Pr[\exists \vb{v}\in B_{2n}(\radius)\cap \latperp(\vb{H})\setminus \{\vb{0}\}; \vb{H}\leftarrow \matsym{q}{n}]\\
		 & \le \sum_{\vb{v}\in B_{2n}(\radius)\cap \Z^{2n}\setminus\{\vb{0}\}}\Pr[\vb{v}\in \latperp(\vb{H}); \vb{H}\leftarrow \matsym{q}{n}].
	\end{split}
\end{equation*}
 By Lemma~\ref{lem:bound_integer_points}
\[
|B_{2n}(\radius)\cap \Z^{2n}\setminus \{\vb{0}\}|\le \sigma_{2n}\big(\radius +\frac{\sqrt{2n}}{2}\big)^{2n}=\sigma_{2n}\big(r+\sqrt{\frac{n}{2q}}\big)^{2n} q^n.
\]
Since $\radius<q/2$, elements in $B_{2n}(\radius)\cap \Z^{2n}\setminus \{\vb{0}\}$ are $\ne \vb{0}\mod q$.  Hence, combining the upper bound on $\Pr[\vb{H}\in E; \vb{H}\leftarrow \matsym{q}{n}]$ with 
Lemma~\ref{lem:prob_element_lattice} proves the theorem.
\end{proof}
 \begin{corollary}\label{cor:existence_symplectic_lattice}
Let $n,q\in \N, r\in \R^{> 0}$, where $q$ is prime and $r<\sqrt{q}/2$. Choosing $\vb{H}$ uniformly at random from $\matsym{q}{n}$ and setting
\[
\cL:=\frac{1}{\sqrt{q}}\latperp(\vb{H})
\]
results in a symplectic lattice $\cL$. Except with probability at most
\[
p:= \sigma_{2n}\big(r+\sqrt{\frac{n}{2q}}\big)^{2n}
\]
lattice $\cL$ satisfies $\lambda_1(\cL)\ge r.$
 \end{corollary}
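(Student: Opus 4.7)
The plan is to assemble the corollary from three ingredients already developed: the symplecticity of $\latperp(\vb{H})$ under the symmetry hypothesis on $\vb{H}$, the trivial scaling behaviour of $\lambda_1$, and the probabilistic bound of Theorem \ref{thm:main_probability_bound}. No new technical work is required; this really is a packaging statement.

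First I would verify that $\cL$ is symplectic for every choice of $\vb{H} \in \matsym{q}{n}$. For symmetric $\vb{H}$, the matrix $M_{\vb{H}}$ of equation (\ref{eqn:q-symplectic}) is $q$-symplectic, and by equation (\ref{eq:def_lat_perp_original}) its rows form a basis of $\latperp(\vb{H})$, so $\latperp(\vb{H})$ is $q$-symplectic. Applying Lemma \ref{lem:q-symplectic_symplectic}, scaling by $1/\sqrt{q}$ yields a symplectic lattice, so $\cL$ is symplectic with probability $1$ over the choice of $\vb{H}$.

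Next I would translate the minimum-distance event. Since scaling a lattice by $\alpha > 0$ scales $\lambda_1$ by the same factor, we have
\[
\lambda_1(\cL) \;=\; \frac{1}{\sqrt{q}}\,\lambda_1(\latperp(\vb{H})),
\]
so $\lambda_1(\cL) < r$ is identical to the event $\lambda_1(\latperp(\vb{H})) < r\sqrt{q}$ considered in Theorem \ref{thm:main_probability_bound}. The hypothesis $r < \sqrt{q}/2$ is equivalent to $r\sqrt{q} < q/2$, so the theorem's precondition is met, and it directly gives
\[
\Pr[\lambda_1(\cL) < r;\; \vb{H} \leftarrow \matsym{q}{n}] \;\le\; \sigma_{2n}\Bigl(r + \sqrt{\tfrac{n}{2q}}\Bigr)^{2n} \;=\; p.
\]

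Rather than a serious obstacle, the only thing to be careful about is bookkeeping: one must check that $r\sqrt{q} < q/2$ is the correct reformulation of the hypothesis $r < \sqrt{q}/2$ so that Theorem \ref{thm:main_probability_bound} applies, and that the scaling factor $1/\sqrt{q}$ cancels cleanly between the radius of the ball in the theorem and the definition of $\cL$. Both are immediate, so the proof reduces to citing the previous lemma and theorem in the correct order.
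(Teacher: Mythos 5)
Your argument is correct and follows the paper's own proof exactly: apply Lemma~\ref{lem:q-symplectic_symplectic} for symplecticity, use the scaling relation $\lambda_1(\cL) = \frac{1}{\sqrt{q}}\lambda_1(\latperp(\vb{H}))$ to translate the event, and invoke Theorem~\ref{thm:main_probability_bound} (whose precondition $r\sqrt{q} < q/2$ is indeed equivalent to $r < \sqrt{q}/2$). The only difference is that you spell out the bookkeeping more explicitly than the paper does, which is harmless.
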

 \begin{proof}
 By Lemma~\ref{lem:q-symplectic_symplectic}, $\cL=\frac{1}{\sqrt{q}}\latperp(\vb{H})$ is symplectic. Its minimal distance $\lambda_1(\cL)$ is given by $\frac{1}{\sqrt{q}}\lambda_1(\latperp(\vb{H})).$ The corollary follows from Theorem~\ref{thm:main_probability_bound} .
 \end{proof}

\section{Main results}\label{sec:main_results}
In this section we discuss several parameter choices in Theorem~\ref{thm:main_probability_bound}. Some of these choices lead to a constructive version of a theorem by Buser and Sarnak~\cite{buser1994period} on the existence of symplectic lattices with bounded minimal distance, albeit with slightly worse constants than in~\cite{buser1994period}. Other choices lead to variants of a conjecture by Conrad, Eisert and Seifert~\cite{conrad2024good} on the existence and construction of good GKP codes. In both cases, different parameter choices lead to different combinations of success probability and minimal lattice or code distance. 
\subsection{Symplectic lattices with large minimal distance}
\begin{theorem}[Existence of symplectic lattices with large minimal distance]\label{thm:existence_symplectic}
Choose $\vb{H}$ uniformly at random from $\matsym{q}{n}$ and set
\[
\cL:=\frac{1}{\sqrt{q}}\latperp(\vb{H}).
\]
\begin{enumerate}
	\item For any $\epsilon>0$ and $q\ge \frac{2n\cdot \sigma_{2n}^{1/n}}{\epsilon^2}$, with non-zero probability
	\[
	\lambda_1(\cL)\ge	(1-\epsilon)(1/\sigma_{2n})^{1/2n},
	\]
	where $\sigma_{2n}$ denotes the volume of the unit ball in $\R^{2n}$.
	\item For $q\ge \frac{\pi e n^2}{2}$, with probability at least $1-\frac{e^2}{\sqrt{2\pi n}} =1-\Omega(1/\sqrt{n})$ 
	\[
	\lambda_1(\cL)\ge \sqrt{\frac{n}{\pi e}},
	\]
	\item For $q\ge 2\pi en^{3/2}$, with probability at least $1-e^{-n^{1/4}}$
		\[
		\lambda_1(\cL)\ge 
	\left(1-\frac{1}{n^{3/4}}\right)\sqrt{\frac{n}{\pi e}}.
	\]

\end{enumerate}
\end{theorem}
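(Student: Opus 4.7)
The plan is to apply Corollary~\ref{cor:existence_symplectic_lattice} three times, with three different choices of the free parameter $r$. In each case I substitute the claimed lower bound on $\lambda_1(\cL)$ for $r$ and then use the hypothesis on $q$ to ensure that the additive term $\sqrt{n/(2q)}$ appearing in the failure probability
\[
p = \sigma_{2n}\bigl(r+\sqrt{n/(2q)}\bigr)^{2n}
\]
is negligible compared to $r$. The volume $\sigma_{2n}$ will be controlled through Lemma~\ref{lem:bounds}(2), which gives $\sigma_{2n}^{-1/(2n)} \approx \sqrt{n/(\pi e)}$ up to a factor $(2\pi n)^{1/(4n)}$; raised to the $2n$-th power this yields the workhorse estimate $\sigma_{2n}(n/(\pi e))^n \le 1/\sqrt{2\pi n}$.

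For part (1) I set $r = (1-\epsilon)\sigma_{2n}^{-1/(2n)}$. Factoring $\sigma_{2n}^{-1/(2n)}$ out of the parenthesis turns the failure bound into
\[
p = \bigl((1-\epsilon) + \sigma_{2n}^{1/(2n)}\sqrt{n/(2q)}\bigr)^{2n},
\]
so $p<1$ reduces to the inequality $\sigma_{2n}^{1/n}\cdot n/(2q) < \epsilon^2$, which is (generously) implied by $q \ge 2n\sigma_{2n}^{1/n}/\epsilon^2$; in fact this gives $p \le (1-\epsilon/2)^{2n}$.

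For part (2) I take $r = \sqrt{n/(\pi e)}$. The hypothesis $q\ge \pi e n^2/2$ yields $\sqrt{n/(2q)} \le \sqrt{n/(\pi e)}/n$, so that $r+\sqrt{n/(2q)} \le \sqrt{n/(\pi e)}(1+1/n)$. Using $(1+1/n)^{2n}\le e^2$ together with $\sigma_{2n}(n/(\pi e))^n \le 1/\sqrt{2\pi n}$ gives
\[
p \le \sigma_{2n}(n/(\pi e))^n \cdot e^2 \le e^2/\sqrt{2\pi n}.
\]
For part (3) I set $r = (1-1/n^{3/4})\sqrt{n/(\pi e)}$. The hypothesis $q\ge 2\pi e n^{3/2}$ gives $\sqrt{n/(2q)} \le \sqrt{n/(\pi e)}/(2n^{3/4})$, and hence $r+\sqrt{n/(2q)} \le \sqrt{n/(\pi e)}(1-1/(2n^{3/4}))$. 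Applying Lemma~\ref{lem:bounds}(2) again together with $(1-x)^k\le e^{-kx}$ for $k=2n$ and $x = 1/(2n^{3/4})$ produces
\[
p \le \frac{1}{\sqrt{2\pi n}}\bigl(1-1/(2n^{3/4})\bigr)^{2n} \le \frac{e^{-n^{1/4}}}{\sqrt{2\pi n}} \le e^{-n^{1/4}}.
\]

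There is no substantial obstacle once Lemma~\ref{lem:bounds}(2) is in hand; each part is algebraic bookkeeping around the same template. The only side condition to verify is the hypothesis $r<\sqrt{q}/2$ of Corollary~\ref{cor:existence_symplectic_lattice}, but in all three parts $r = O(\sqrt{n})$ while $q = \Omega(n^{3/2})$ or larger, so it holds with ample room.
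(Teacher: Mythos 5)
Your proof is correct and follows essentially the same approach as the paper: apply Corollary~\ref{cor:existence_symplectic_lattice} in combination with Lemma~\ref{lem:bounds}(2), the workhorse estimate being $\sigma_{2n}(n/\pi e)^n \le 1/\sqrt{2\pi n}$. The only difference is the direction of the algebra: the paper fixes the failure probability $p$ and solves for the admissible radius $r = (p/\sigma_{2n})^{1/(2n)} - \sqrt{n/(2q)}$, then checks it meets the claimed lower bound; you fix $r$ at the claimed bound and bound the resulting $p$ from above. Both are the same computation.

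One inaccuracy in your closing remark about the side condition $r < \sqrt{q}/2$: your claim that ``in all three parts \ldots $q = \Omega(n^{3/2})$'' is wrong for part~1. Since $\sigma_{2n}^{1/n} \approx \pi e/n$, the hypothesis $q \ge 2n\sigma_{2n}^{1/n}/\epsilon^2$ only forces $q \gtrsim 2\pi e/\epsilon^2$, a constant for fixed $\epsilon$, whereas $r = (1-\epsilon)(1/\sigma_{2n})^{1/(2n)} \approx (1-\epsilon)\sqrt{n/\pi e}$ grows with $n$. So when $q$ is near its stated lower bound and $n$ is large, $r < \sqrt{q}/2$ fails and the Corollary does not apply. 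Parts~2 and~3 are fine as written ($q = \Omega(n^2)$ and $q=\Omega(n^{3/2})$ there). For what it is worth, the paper's own proof of part~1 also elides this check; the statement implicitly needs $q$ at least $\Omega(n)$ for the bounded-distance argument to go through.
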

\begin{proof}
For the first part of the theorem, use Corollary~\ref{cor:existence_symplectic_lattice} with 
\[
	p:= (1-\epsilon/2)^{2n} 
\]
to obtain with probability $1-	p>0$ a symplectic lattice $\cL$ with
\[
\lambda_1(\cL)\ge 	(1-\epsilon)(1/\sigma_{2n})^{1/2n}.
\]
For the second part of the theorem, set
	\[
	p:=\frac{e^2}{\sqrt{2\pi n}}
	\]
	and apply Corollary~\ref{cor:existence_symplectic_lattice}. This shows that with probability at least $1-p$ the lattice $\cL= \latperp(\vb{H})$ has distance 
	\[
	\lambda_1(\cL)\ge (p/\sigma_{2n})^{1/2n}-\sqrt{n/2q}.
	\]
	By the choice of $q$
	\[
	\sqrt{n/2q}\le \frac{1}{n}\sqrt{\frac{n}{\pi e}}.
	\]
	Using Lemma~\ref{lem:bounds} (2) and by our choice of $p$,
	\[
	(p/\sigma_{2n})^{1/2n}\ge e^{1/n} \sqrt{\frac{n}{\pi e}} \ge \left(1+\frac{1}{n}\right)\sqrt{\frac{n}{\pi e}}.
	\]
	The third part of the theorem follows as the previous two parts, by choosing
	\[
	p:=e^{-n^{1/4}}
	\]
	and using
	\[
	(p/\sigma_{2n})^{1/2n}\ge \left(1-\frac{1}{2n^{3/4}}\right)\sqrt{\frac{n}{\pi e}} \quad \text{and $\sqrt{\frac{n}{2q}}\le \frac{1}{2n^{3/4}}\sqrt{\frac{n}{\pi e}}$}.
	\]
\end{proof}
\textit{Discussion}
\begin{itemize}
    \item Buser and Sarnak~\cite{buser1994period} prove the existence of $2n$-dimensional lattices with first successive minimum
\[
\lambda_1(\cL)\ge (2/\sigma_{2n})^{1/2n}\approx (1+1/2n) (1/\sigma_{2n})^{1/2n}.
\]
Since we can take $\epsilon>0$ arbitrary, the bound by Buser and Sarnak is better than the bound in the first part of Theorem~\ref{thm:existence_symplectic} by a factor of $\approx 1+1/2n$. The other two parts show that from symplectic $\sis$ lattices we can get, with high probability, symplectic lattices with distance close to $\sqrt{n/\pi e}$, which differs from the Buser, Sarnak bound only by small factors, approaching $1$ with $n\rightarrow\infty$.

\item The lower bounds on $q$ in Theorem~\ref{thm:existence_symplectic} are required  to guarantee probabilities close to $1$. Ultimately, they can be traced back to the additive terms of $\sqrt{d}/2$ in Lemma~\ref{lem:bound_integer_points}. Many researchers believe that this additional term of $\sqrt{d}/2$ is not necessary, or can be replaced by something much smaller. If we can ignore it, then the lower bounds on $q$ can be reduced to roughly $n$ (going back to the requirement $r\sqrt{q}/2 <q$ in Theorem~\ref{thm:main_probability_bound}). The experiments and simulations that we present in Section~\ref{sec:simulation} confirm that we can choose $q$ much smaller than indicated by Theorem~\ref{thm:existence_symplectic} and still obtain symplectic lattices and GKP with large minimal distance.
\end{itemize}
\subsection{GKP codes with large distance}
GKP codes~\cite{gottesman2001encoding}, also see~\cite{conrad2022gottesman, conrad2024good}, are \emph{stabilizer codes} acting on the Hilbert space of $n$ bosonic modes. The \emph{stabilizer group} of a GKP code is given by a group of so-called \emph{displacement operators}. For a GKP code acting on $n$ modes, this group is isomorphic to a $2n$-dimensional  \emph{symplectically integral} lattice, i.e.\ a lattice $\cL$ with 
\[
\forall \vb{u},\vb{v}\in \cL: \vb{u} \vb{J}_{2n}\vb{v}^T\in \Z.
\]
Any symplectic lattice $\cL$  is symplectically integral and, therefore defines a GKP code.
The distance of a GKP code $\cC$ with stabilizer group isomorphic to symplectically integral lattice $\cL$ is given by 
\[
\Delta(\cC):=\min \{\|\vb{u}\| \mid \vb{u}\in \latcross\setminus \cL\},
\]
where as before $\latcross$ denotes the symplectic dual of $\cL$ (see for example~\cite{conrad2022gottesman}). Hence
\[
\Delta(\cC)\ge \lambda_1(\latcross).
\]
If lattice $\cL$ is symplectic and $\lambda \in \N$, then $(\sqrt{\lambda} \cL)^\times=\frac{1}{\sqrt{\lambda}}\cL$. Hence for GKP code $\cC$ with stabilizer group isomorphic to $\sqrt{\lambda}\cL$, we obtain
\begin{equation}\label{eq:code_distance_equality}
\Delta(\cC)=\lambda_1((\sqrt{\lambda}\cL)^\times)=\frac{1}{\sqrt{\lambda}}\lambda_1(\cL).
\end{equation} 
(see~\cite{conrad2022gottesman} and Lemma~\ref{lem:symplectic_self_dual}). 
Equation~(\ref{eq:code_distance_equality}) together with Theorem~\ref{thm:existence_symplectic}, part 2 and 3, yields the following theorem.
\begin{theorem}[Good GKP codes from $\sis$ lattices]\label{thm:construction_good_GKP_codes}
A GKP code $\cC$ with stabilizer group $\cL=\sqrt{\lambda/q}\latperp(\vb{H})$, where $\latperp(\vb{H})$ is as in (\ref{eq:def_lat_perp}) has $\lambda^n$ logical dimensions. Over the choice of $\vb{H}\leftarrow \matsym{q}{n}$,  the distance $\Delta(\cC)$ of $\cC$  satisfies the following.
\begin{enumerate}
	\item For $q\ge \frac{\pi e n^2}{2}$, with probability at least $1-\frac{e^2}{\sqrt{2\pi n}} =1-\Omega(1/\sqrt{n})$ 
	\[
	\Delta(\cC)\ge \sqrt{\frac{n}{\lambda\pi e}},
	\]
	\item For $q\ge 2\pi en^{3/2}$, with probability at least $1-e^{-n^{1/4}}$
		\[
		\Delta(\cC)\ge 
	\left(1-\frac{1}{n^{3/4}}\right)\sqrt{\frac{n}{\lambda\pi e}}.
	\]
\end{enumerate}
\end{theorem}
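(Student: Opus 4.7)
The plan is to assemble this theorem directly from ingredients already in place: the symplecticity and minimal-distance estimates for $\frac{1}{\sqrt{q}}\latperp(\vb{H})$ given in Theorem~\ref{thm:existence_symplectic}, together with the identity $\Delta(\cC)=\frac{1}{\sqrt{\lambda}}\lambda_1(\cL)$ from equation~(\ref{eq:code_distance_equality}) for a symplectic lattice $\cL$ scaled by $\sqrt{\lambda}$.

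First I would set $\cL' := \frac{1}{\sqrt{q}}\latperp(\vb{H})$. By the construction in Section~\ref{sec:construction} together with Lemma~\ref{lem:q-symplectic_symplectic}, whenever $\vb{H}\in\matsym{q}{n}$ the lattice $\latperp(\vb{H})$ is $q$-symplectic, so $\cL'$ is symplectic. Consequently the stabilizer lattice $\cL=\sqrt{\lambda/q}\latperp(\vb{H})=\sqrt{\lambda}\,\cL'$ is a rescaling of a symplectic lattice by an integer square-root factor, hence symplectically integral, which certifies that it does define a GKP stabilizer group in the sense recalled at the beginning of Section~\ref{sec:main_results}.

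Next I would verify the claim on logical dimensions. Combining Lemma~\ref{lem:symplectic_self_dual} with equation~(\ref{eq:mult_symplectic}) gives $\latcross=(\sqrt{\lambda}\cL')^\times=\frac{1}{\sqrt{\lambda}}\cL'$. The quotient $\latcross/\cL=\frac{1}{\sqrt{\lambda}}\cL'/\sqrt{\lambda}\cL'$ is isomorphic to $\cL'/\lambda\cL'\cong(\Z/\lambda\Z)^{2n}$, so $[\latcross:\cL]=\lambda^{2n}$. The number of logical dimensions of the GKP code is then $\sqrt{[\latcross:\cL]}=\lambda^n$, as claimed. For the distance bound I would invoke~(\ref{eq:code_distance_equality}) to write $\Delta(\cC)=\frac{1}{\sqrt{\lambda}}\lambda_1(\cL')$ and then apply parts (2) and (3) of Theorem~\ref{thm:existence_symplectic} to $\cL'$: under $q\ge \pi e n^{2}/2$ the theorem gives $\lambda_1(\cL')\ge \sqrt{n/\pi e}$ except with probability $\tfrac{e^{2}}{\sqrt{2\pi n}}$, and under $q\ge 2\pi e n^{3/2}$ it gives $\lambda_1(\cL')\ge(1-n^{-3/4})\sqrt{n/\pi e}$ except with probability $e^{-n^{1/4}}$. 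Dividing through by $\sqrt{\lambda}$ yields the two stated lower bounds on $\Delta(\cC)$.

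There is no real technical obstacle to overcome here; the only place that deserves careful bookkeeping is the logical-dimension count, where one must remember that the encoded dimension is the square root of the index $[\latcross:\cL]$, and that this square root is well-defined precisely because of the symplectic self-duality established in Lemma~\ref{lem:symplectic_self_dual}. Everything else amounts to substituting the previously derived high-probability bound on $\lambda_1(\cL')$ into the identity~(\ref{eq:code_distance_equality}).
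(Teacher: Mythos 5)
Your proposal is correct and follows the same route the paper takes: substitute the high-probability lower bounds on $\lambda_1(\cL')$ from Theorem~\ref{thm:existence_symplectic}, parts~2 and~3, into the identity $\Delta(\cC)=\frac{1}{\sqrt{\lambda}}\lambda_1(\cL')$ of equation~(\ref{eq:code_distance_equality}). The only difference is that you spell out the logical-dimension count $\sqrt{[\latcross:\cL]}=\lambda^n$ explicitly, which the paper leaves implicit; that computation is correct.
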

\textit{Discussion} Conrad et al.\ ~\cite{conrad2024good} conjecture that certain GKP code constructions based on NTRU lattices from cryptography yield (with high probability) GKP codes with distance $\sqrt{n/\lambda\pi e}$. Part one of the above theorem shows that GKP codes with distance $\ge  \sqrt{n/\lambda \pi e}$ can be constructed from $\sis$ lattices. Part two shows that, by going slightly below $\sqrt{n/\lambda \pi e}$, we can achieve distance roughly $\sqrt{n/\lambda \pi e}$ from $\sis$ lattices with probability exponentially close to $1$. 

\subsection{Upper bounds for minimal distance and covering radius and }
In this section, we prove some upper bounds on the minimal distance of symplectic $\sis$ lattices and GKP codes obtained from such lattices, as well as on the covering radius of symplectic $\sis$ lattices.  All bounds are either well-known or can be proven easily using techniques known from general $\sis$ lattices. For the sake of completeness we include proofs.
\begin{lemma}\label{lem:upper_bound_distance_symplectic}
Let $\cL$ be a $2n$-dimensional symplectic lattice and $\lambda \in \N$. Then 
\begin{enumerate}
	\item $\lambda_1(\cL)\le \sqrt{2n}$
	\item $\lambda_1((\sqrt{\lambda} \cL)^\times) \le \sqrt{\frac{2n}{\lambda}} $.
\end{enumerate}
\end{lemma}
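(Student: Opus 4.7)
The plan rests on two ingredients: a determinant computation for symplectic lattices, and Minkowski's first theorem for centrally symmetric convex bodies. First I would verify that any symplectic lattice $\cL$ has $\det(\cL)=1$. Indeed, if $\vb{M}\in\R^{2n\times 2n}$ is a symplectic basis matrix, taking determinants of $\vb{M}\vb{J}_{2n}\vb{M}^T=\vb{J}_{2n}$ and using that $\det\vb{J}_{2n}\ne 0$ yields $(\det\vb{M})^2=1$, hence $\det(\cL)=|\det\vb{M}|=1$.

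For part (1), I would apply Minkowski's first theorem to the $\ell_\infty$-cube $K_\epsilon=[-(1+\epsilon),1+\epsilon]^{2n}$. Its volume is $(2+2\epsilon)^{2n}>2^{2n}=2^{2n}\det(\cL)$, so $K_\epsilon$ contains a nonzero lattice vector $\vb{v}_\epsilon\in\cL$. Any such vector satisfies $\|\vb{v}_\epsilon\|\le\sqrt{2n}(1+\epsilon)$. Letting $\epsilon\to 0$ and using the discreteness of $\cL$, one obtains a nonzero $\vb{v}\in\cL$ with $\|\vb{v}\|\le\sqrt{2n}$, i.e., $\lambda_1(\cL)\le\sqrt{2n}$.

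For part (2), I would invoke Lemma~\ref{lem:symplectic_self_dual}: since $\cL$ is symplectic, $(\alpha\cL)^\times=\frac{1}{\alpha}\cL$ for any $\alpha>0$. Taking $\alpha=\sqrt{\lambda}$ gives $(\sqrt{\lambda}\cL)^\times=\frac{1}{\sqrt{\lambda}}\cL$, so $\lambda_1((\sqrt{\lambda}\cL)^\times)=\frac{1}{\sqrt{\lambda}}\lambda_1(\cL)\le\sqrt{2n/\lambda}$ by part (1).

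No step looks like a real obstacle; the argument is standard once one observes that symplecticity forces unit determinant, at which point part (1) is the cleanest possible application of Minkowski with an $\ell_\infty$-cube (ball-based Minkowski combined with Lemma~\ref{lem:bounds} would give the sharper $2(1/\sigma_{2n})^{1/2n}\approx\sqrt{4n/\pi e}$, but the $\ell_\infty$ version produces the stated $\sqrt{2n}$ bound with essentially no computation).
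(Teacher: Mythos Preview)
Your proposal is correct and follows essentially the same approach as the paper: establish $\det(\cL)=1$ from symplecticity, apply Minkowski's first theorem for part (1), and deduce part (2) from Lemma~\ref{lem:symplectic_self_dual}. The paper simply quotes the standard form $\lambda_1(\cL)\le\sqrt{2n}\det(\cL)^{1/2n}$ of Minkowski's theorem without unpacking the $\ell_\infty$-cube argument or the $\epsilon$-limit, but your more detailed derivation is a faithful expansion of exactly that step.
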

\begin{proof}
	Since $\cL$ is symplectic, $\det(\cL)=1$. By Minkowski's first theorem,
	\[
	\lambda_1(\cL)\le \sqrt{2n}\det(\cL)^{1/2n}=\sqrt{2n}.
	\]
	This proves the first part of the lemma. The second follows from part 1 and Lemma~\ref{lem:symplectic_self_dual}.
\end{proof}
The next corollary follows from the previous lemma and~(\ref{eq:code_distance_equality}).
\begin{corollary}\label{cor:upper_bound_GKP_code}
	Let $\cL$ be a $2n$-dimensional symplectic lattice, $\lambda\in \N$ and $\cC$ the GKP code with stabilizer group isomorphic to $\sqrt{\lambda}\cL$. Then 
	\[
	\Delta(\cC)\le \sqrt{\frac{2n}{\lambda}} .
	\]
\end{corollary}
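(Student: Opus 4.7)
The plan is to obtain the bound by chaining together two facts that are already in place just above: the equality expressing the GKP code distance in terms of a symplectic dual, and the upper bound on that symplectic dual given by Lemma~\ref{lem:upper_bound_distance_symplectic}. Concretely, since the stabilizer group of $\cC$ is isomorphic to $\sqrt{\lambda}\cL$, equation~(\ref{eq:code_distance_equality}) identifies
\[
\Delta(\cC) = \lambda_1\bigl((\sqrt{\lambda}\cL)^\times\bigr) = \frac{1}{\sqrt{\lambda}}\lambda_1(\cL).
\]
Applying part~2 of Lemma~\ref{lem:upper_bound_distance_symplectic} to the right-hand side then yields $\Delta(\cC) \le \sqrt{2n/\lambda}$, which is exactly the claim.

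Since the entire argument is a two-line combination of previously stated results, there is essentially no obstacle at the level of this corollary; the real content has already been carried out in Lemma~\ref{lem:upper_bound_distance_symplectic}. That lemma invokes Minkowski's first theorem on the symplectic lattice $\cL$ (which has determinant $1$) to conclude $\lambda_1(\cL) \le \sqrt{2n}$, and then uses the symplectic self-duality $(\sqrt{\lambda}\cL)^\times = \tfrac{1}{\sqrt{\lambda}}\cL$ from Lemma~\ref{lem:symplectic_self_dual} to transfer this bound to the scaled dual. The only thing worth checking when writing the proof is that one consistently keeps track of the scaling factor $\sqrt{\lambda}$ when passing between $\cL$, $\sqrt{\lambda}\cL$, and $(\sqrt{\lambda}\cL)^\times$, so that the exponent of $\lambda$ in the final bound comes out correctly as $-1/2$.
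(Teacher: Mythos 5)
Your proof is correct and takes essentially the same approach as the paper: the paper also obtains the corollary directly from equation~(\ref{eq:code_distance_equality}) together with Lemma~\ref{lem:upper_bound_distance_symplectic}. Your additional unpacking of where Lemma~\ref{lem:upper_bound_distance_symplectic} itself comes from (Minkowski plus $\det(\cL)=1$, and the self-duality scaling) is accurate and consistent with the paper's development.
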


\begin{theorem}\label{thm:covering_radius_symplectic_general}
Let $\lambda\in \N$. Choose $\vb{H}$ uniformly at random from $\matsym{q}{n}$ and set
\[
\cL:=\frac{1}{\sqrt{q}}\latperp(\vb{H}).
\]
\begin{enumerate}
	\item For $q\ge \frac{\pi e n^2}{2}$, with probability at least $1-\frac{e^2}{\sqrt{2\pi n}} =1-\Omega(1/\sqrt{n})$ 
	\[
	\rho((\sqrt{\lambda}\cL)^\times)\le \frac{1}{2}\sqrt{\frac{\pi e n}{\lambda}} .
	\]
	\item For $q\ge 2\pi en^{3/2}$, with probability at least $1-e^{-n^{1/4}}$
	\[
	\rho((\sqrt{\lambda}\cL)^\times)\le 
	\left(1-\frac{1}{n^{3/4}}\right)^{-1}\frac{1}{2}\sqrt{\frac{\pi e n}{\lambda}}.
	\]
\end{enumerate}
\end{theorem}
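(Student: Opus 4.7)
The plan is to compose two results already established in the paper: the symplectic covering-radius bound from Theorem~\ref{thm:symplectic_covering_radius}, which for any symplectic $\cL$ and $\alpha\in\R^{>0}$ gives
\[
\rho\bigl((\alpha\cL)^\times\bigr)\;\le\;\frac{n}{2\alpha\,\lambda_1(\cL)},
\]
and the high-probability lower bounds on $\lambda_1(\cL)$ for $\cL=\tfrac{1}{\sqrt{q}}\latperp(\vb{H})$ delivered by Theorem~\ref{thm:existence_symplectic}, parts 2 and 3. Before invoking Theorem~\ref{thm:symplectic_covering_radius}, I would first note that $\cL$ is indeed symplectic: by construction $\vb{H}\in\matsym{q}{n}$ is symmetric, so the generator matrix $M_{\vb{H}}$ from (\ref{eqn:q-symplectic}) is $q$-symplectic, and Lemma~\ref{lem:q-symplectic_symplectic} then guarantees that $\tfrac{1}{\sqrt{q}}\latperp(\vb{H})$ is symplectic. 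Specialising Theorem~\ref{thm:symplectic_covering_radius} to $\alpha=\sqrt{\lambda}$ therefore yields
\[
\rho\bigl((\sqrt{\lambda}\,\cL)^\times\bigr)\;\le\;\frac{n}{2\sqrt{\lambda}\,\lambda_1(\cL)}.
\]

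For part~1, the hypothesis $q\ge \pi en^2/2$ allows Theorem~\ref{thm:existence_symplectic}(2) to supply $\lambda_1(\cL)\ge \sqrt{n/(\pi e)}$ with probability at least $1-e^2/\sqrt{2\pi n}$. Substituting this into the displayed inequality and simplifying gives
\[
\rho\bigl((\sqrt{\lambda}\,\cL)^\times\bigr)\;\le\;\frac{n}{2\sqrt{\lambda}\,\sqrt{n/(\pi e)}}\;=\;\frac{1}{2}\sqrt{\frac{\pi en}{\lambda}},
\]
which is the claimed bound on the same event.

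For part~2, the hypothesis $q\ge 2\pi en^{3/2}$ and Theorem~\ref{thm:existence_symplectic}(3) supply $\lambda_1(\cL)\ge(1-n^{-3/4})\sqrt{n/(\pi e)}$ with probability at least $1-e^{-n^{1/4}}$. The identical substitution, now picking up an extra factor $(1-n^{-3/4})^{-1}$ from the denominator, yields the stated upper bound. In both cases the probability of success is exactly that of the underlying event from Theorem~\ref{thm:existence_symplectic}.

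There is essentially no substantive obstacle here: the theorem is a one-step consequence of chaining Theorem~\ref{thm:symplectic_covering_radius} (which itself packages Banaszczyk's transference inequality together with symplectic self-duality via Lemmas~\ref{lem:symplectic_self_dual} and~\ref{lem:symplectic_canonical}) to the probabilistic minimum-distance bounds of Theorem~\ref{thm:existence_symplectic}. The only delicate bookkeeping is keeping the constants aligned — in particular, applying the covering-radius bound with the $n$-convention used in Theorem~\ref{thm:symplectic_covering_radius} so that the prefactor $\tfrac{1}{2}$ in the target inequality comes out correctly — but this is purely arithmetic, not conceptual.
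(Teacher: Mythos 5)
Your proof is correct and follows exactly the paper's own route: compose the Banaszczyk-based covering-radius bound of Theorem~\ref{thm:symplectic_covering_radius} (with $\alpha=\sqrt{\lambda}$) with the high-probability lower bounds on $\lambda_1(\cL)$ from Theorem~\ref{thm:existence_symplectic}, parts 2 and 3. The paper states this more tersely (proving part 1 and noting part 2 is analogous), whereas you spell out the symplecticity check and the arithmetic, but the argument is the same.
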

\begin{proof}
We only prove the first part, the second part can be shown analogously. Lattice $\cL$ is symplectic and by Theorem~\ref{thm:symplectic_covering_radius} we have 
\[
\rho((\sqrt{\lambda}\cL)^\times)=\frac{1}{\sqrt{\lambda}}\rho(\cL^\times)\le \frac{n}{2\sqrt{\lambda}\lambda_1(\cL)} .
\]
Hence the first part of the theorem follows from part two of Theorem~\ref{thm:existence_symplectic}.
\end{proof}
Using $\lambda_1(\cL)\le 2\rho(\cL)$, true for any lattice $\cL$, using $\Delta(\cC)=\lambda_1((\sqrt{\lambda}\cL)^\times)$, true for GKP code with stabilizer group $\sqrt{\lambda}\cL$, and using the previous theorem, one obtains the upper bound (with high probability) $\Delta(\cC)\le \sqrt{\frac{\pi e n}{\lambda}}$. However, the bound in Corollary~\ref{cor:upper_bound_GKP_code} is better by a constant.

Together, Theorem~\ref{thm:existence_symplectic}, Lemma~\ref{lem:upper_bound_distance_symplectic} and Theorem~\ref{thm:covering_radius_symplectic_general}, show that for a random symplectic $\sis$ lattice important lattice parameters like the minimal distance and the covering radius differ from each other and (the general upper bound for $\lambda_1(\cL)$ of)  $\sqrt{2n}$ by constants (and with high probability). Theorem~\ref{thm:construction_good_GKP_codes} and Corollary~\ref{cor:upper_bound_GKP_code}, on the other hand, show that the distance of a GKP constructed from random symplectic $\sis$ lattices differs from the general upper bound $\sqrt{\frac{2n}{\lambda}}$ only by constant factors (with high probability).

\section{Bounded distance decoding}\label{sec:bdd}
As described by Conrad et al.\ ~\cite{conrad2022gottesman}, one way to heuristically solve the maximum likelihood decoding problem for GKP codes requires a solution to the \emph{bounded distance decoding ($\bdd$) problem} for the symplectic dual of the lattices defining a GKP code.  Therefore, in this section we describe a simple algorithm for the $\bdd$ problem for the symplectic dual of the  lattices used in Theorem~\ref{thm:construction_good_GKP_codes}. 
Recall that for a lattice $\cL$ and a point $\vb{t}\in \spann(\cL)$ we denote by $\dist(\vb{t},\cL)$ the distance of $\vb{t}$ to $\cL$
\[
\dist(\vb{t},\cL):=\min \{\|\vb{t}-\vb{u} \| \mid \vb{u}\in \cL\}.
\]
\begin{definition}[Problem $\bdd_{R}$]\label{def:bdd} Given a basis $\vb{B}$ for lattice $\cL=\cL(\vb{B})$, a distance bound $R\in \R^{\ge 0}$ and a point $\vb{t}\in \spann(\cL)$, such that  $\dist(\vb{t},\cL)\le R$, find a lattice point $\vb{v}\in \cL$ with $\|\vb{t}-\vb{v}\|\le R$. We denote this by $\bdd(\cL,\vb{t},R)$. 
\end{definition}
Often, $R$ is expressed as a multiple of $\lambda_1(\cL)$ or as a multiple of its covering radius $\rho(\cL)$. Note, that if $R<\lambda_1(\cL)/2$ then a solution to $\bdd(\cL,\vb{t},R)$ is unique. 

To solve the BDD problem for lattices $(\sqrt{\lambda}\cL)^\times=\frac{1}{\sqrt{\lambda}}\cL,$ where $\cL=\frac{1}{\sqrt{q}}\latperp(\vb{H})$ is symplectic,  we propose the following simple algorithm $\bdd$. The input to the algorithm is a symmetric matrix $\vb{H}\in \Z^{n\times n}$, a prime $q$, an integer $\lambda$, together defining the lattice $\frac{1}{\sqrt{\lambda q}}\latperp(\vb{H})$, and a target vector $v=(v_1,\ldots,v_n,v_{n+1},\ldots, v_{2n})$. It outputs an integer vector $c=(c_1,\ldots,c_n,c_{n+1},\ldots,c_{2n})$. Implicitly, the algorithm uses the basis $\frac{1}{\sqrt{\lambda q}}M(\vb{H})$ for lattice $\frac{1}{\sqrt{\lambda q}}\latperp(\vb{H})$ (see~(\ref{eq:def_lat_perp_original}) and~(\ref{eq:def_lat_perp})).
\begin{center}

\fbox{
		\begin{tabular}{ll}
			\multicolumn{2}{l}{$\bddtriv$ on input $\vb{H}\in \Z^{n\times n}, q,\lambda \in \N$ and $\vb{v}\in \Z^{2n}$:}\\
			1. & For  $i=1,\ldots,n$, set $c_i:=\lfloor \sqrt{\lambda q} v_i \rceil$. \\
			2. & Set $\vb{c}':=(c_1,\ldots, c_n), \vb{w}=\vb{v}'-\vb{c}'\vb{H}$, where $\vb{v}':=(v_{n+1},\ldots,v_{2n})$, \\
			  & set $\vb{w}=(w_1,\ldots,w_n)$.\\
			3. & For $i=n+1,\ldots, 2n$, set $c_i:=\lfloor \sqrt{\lambda/ q} w_{i-n} \rceil$. \\
			4. & Output $\vb{c}=(c_1,\ldots,c_{2n})$.
		\end{tabular}
	}
\end{center}
The next theorem follows directly from the algorithm.
\begin{theorem}\label{thm:analysis_algorithm_bdd}
Let $\vb{H}, q, \lambda,$ and $\vb{v}$ be as above. Algorithm $\bddtriv$ computes the closest vector in $\frac{1}{\sqrt{\lambda q}}\latperp(\vb{H})$ to $\vb{v}$ as an integer linear combination of the rows in $M_{\vb{H}}$, if the distance $\dist(\vb{v},\latperp)$ of $\vb{v}$ to $\latperp$ is less than $\frac{1}{2\sqrt{\lambda q}}$.	
\end{theorem}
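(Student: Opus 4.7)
The plan is to verify that, under the hypothesis $\dist(\vb{v}, \tfrac{1}{\sqrt{\lambda q}}\latperp(\vb{H})) < \tfrac{1}{2\sqrt{\lambda q}}$, each rounding step of $\bddtriv$ recovers the correct integer coefficient of the (unique) nearest lattice vector to $\vb{v}$ when that vector is expressed in the basis $\tfrac{1}{\sqrt{\lambda q}} M_{\vb{H}}$. I would write the nearest lattice vector as
\[
\vb{u}^* \;=\; \tfrac{1}{\sqrt{\lambda q}}\,\vb{z}\,M_{\vb{H}} \;=\; \tfrac{1}{\sqrt{\lambda q}}\bigl(\vb{z}',\;\vb{z}'\vb{H} + q\vb{z}''\bigr)
\]
for some $\vb{z}=(\vb{z}',\vb{z}'')\in\Z^n\times\Z^n$, reducing the theorem to showing that the algorithm outputs $\vb{c}=\vb{z}$.

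The hypothesis $\|\vb{v}-\vb{u}^*\| < \tfrac{1}{2\sqrt{\lambda q}}$ yields the coordinate-wise bound $|v_i - u^*_i| < \tfrac{1}{2\sqrt{\lambda q}}$ for all $i$. For $i\le n$ this reads $|\sqrt{\lambda q}\,v_i - z'_i| < \tfrac{1}{2}$, so step~1 outputs $c_i=z'_i$, and consequently $\vb{c}'=\vb{z}'$ is the vector entering step~2. For $i=n+j$ with $1\le j\le n$, using the block expansion $u^*_{n+j} = \tfrac{1}{\sqrt{\lambda q}}(\vb{z}'\vb{H})_j + \sqrt{q/\lambda}\,z''_j$, substituting $\vb{c}'=\vb{z}'$ into the definition of $\vb{w}$ and rearranging the same coordinate-wise bound shows that the quantity being rounded in step~3 differs from $z''_j$ by strictly less than $\tfrac{1}{2}$. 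Hence $c_{n+j}=z''_j$ for every $j$, giving $\vb{c}=\vb{z}$ and therefore $\tfrac{1}{\sqrt{\lambda q}}\vb{c}\,M_{\vb{H}}=\vb{u}^*$ as required.

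The only point requiring genuine care will be the bookkeeping of two distinct scalings — the outer $1/\sqrt{\lambda q}$ from the basis and the inner factor $q$ in the lower-right block of $M_{\vb{H}}$ — to confirm that the constants $\sqrt{\lambda q}$ in step~1 and $\sqrt{\lambda/q}$ in step~3 (together with the precise form of $\vb{w}$ in step~2) implement Babai rounding on $\tfrac{1}{\sqrt{\lambda q}}M_{\vb{H}}$. Because $M_{\vb{H}}$ is upper block-triangular, no Gram--Schmidt step is needed and the two block roundings decouple, so beyond this algebraic bookkeeping the argument is purely arithmetic and the theorem indeed follows directly from the algorithm description.
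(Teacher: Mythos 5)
The paper itself offers no proof here — the text before the theorem simply declares it ``follows directly from the algorithm'' — so there is nothing to compare against. Your high-level plan (express the nearest point $\vb{u}^*$ in the $\frac{1}{\sqrt{\lambda q}}M_{\vb{H}}$ basis, pass the coordinate-wise $<\frac{1}{2\sqrt{\lambda q}}$ bound through the two rounding steps, and use the block-triangular structure of $M_{\vb{H}}$ to decouple them) is exactly the right idea, and your treatment of step~1 is correct: the first-block inequality $|\sqrt{\lambda q}\,v_i - z_i'| < \tfrac12$ does give $c_i = z_i'$.

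The gap is precisely the ``bookkeeping'' you flag but do not carry out, and it is fatal if one takes the algorithm at face value. From the hypothesis, for $1\le j\le n$ one gets
\[
\Bigl|\sqrt{\lambda q}\,v_{n+j} - (\vb{z}'\vb{H})_j - q\,z_j''\Bigr| < \tfrac12,
\quad\text{i.e.}\quad
\Bigl|\sqrt{\lambda/q}\,v_{n+j} - \tfrac{1}{q}(\vb{z}'\vb{H})_j - z_j''\Bigr| < \tfrac{1}{2q}.
\]
So the quantity that must be rounded to recover $z_j''$ is $\sqrt{\lambda/q}\,v_{n+j} - \frac{1}{q}(\vb{c}'\vb{H})_j$. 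But step~2 of $\bddtriv$ as printed sets $w_j = v_{n+j} - (\vb{c}'\vb{H})_j$, and step~3 rounds $\sqrt{\lambda/q}\,w_j = \sqrt{\lambda/q}\,v_{n+j} - \sqrt{\lambda/q}\,(\vb{c}'\vb{H})_j$. The two expressions differ by $\bigl(\sqrt{\lambda/q} - \tfrac{1}{q}\bigr)(\vb{c}'\vb{H})_j$, which is nonzero whenever $(\vb{c}'\vb{H})_j \neq 0$ (since $\lambda,q\in\N$ with $q\ge 2$ forces $\sqrt{\lambda/q}\neq 1/q$). Hence your assertion that ``rearranging the same coordinate-wise bound shows that the quantity being rounded in step~3 differs from $z_j''$ by strictly less than $\tfrac12$'' does not follow from the displayed algorithm: it holds only after correcting step~2 to $\vb{w} := \vb{v}' - \frac{1}{\sqrt{\lambda q}}\vb{c}'\vb{H}$ (or, equivalently, rescaling $\vb{v}'$ by $\sqrt{\lambda q}$ before the subtraction and replacing the step-3 factor $\sqrt{\lambda/q}$ by $1/q$). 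A complete proof must make this correction explicit and then verify the rounding inequality with the corrected $\vb{w}$; as written, your argument asserts the crucial inequality without deriving it, and the derivation as stated would fail.
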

If we compare the bound $\frac{1}{2\sqrt{\lambda q}}$ from Theorem~\ref{thm:analysis_algorithm_bdd} with 
\begin{itemize}
	\item the lower bounds on the minimal distance and covering radius of lattice $\cL=\frac{1}{\sqrt{\lambda q}}\latperp(\vb{H})$ from Theorem~\ref{thm:existence_symplectic} (part three)  and from Theorem~\ref{thm:covering_radius_symplectic_general} (part three), respectively

	\item the lower bound on the minimal distance of GKP code $\cC$ with stabilizer $\sqrt{\frac{\lambda}{q}}\latperp(\vb{H})$ from Theorem~\ref{thm:construction_good_GKP_codes} (part two)
\end{itemize}
 we see that algorithm $\bddtriv$
\begin{enumerate}
	\item computes the closest lattice vector to target vector $\vb{t}$ if 
	\begin{itemize}
		\item $\dist(\vb{t},\cL)\le \gamma\lambda_1(\cL)$ for $\gamma\le \frac{(1-1/n^{3/4})^{-1}}{2\sqrt{2}n^{5/4}}$,
		\item $\dist(\vb{t},\cL)\le \gamma\rho(\cL)$ for $\gamma\le \frac{(1-1/n^{3/4})^{-1}}{\sqrt{2}\pi e n^{5/4}}$, 
	\end{itemize}
	\item computes the closest code word to target vector $\vb{t}$ if $\dist(\vb{t},\cC)\le \gamma\Delta(\cC)$ for $\gamma\le \frac{(1-1/n^{3/4})^{-1}}{2\sqrt{2}n^{5/4}}$,
\end{enumerate}
 in all cases with probability exponentially close to $1$ over the random choice of $\vb{H}$.

\section{Simulation of main results}\label{sec:simulation}
In this section we show some simulation results corresponding to the main results in the above sections. The simulation results shows the correctness of the main results from another perspective.
\subsection{Environment setting}
Codes for simulation are written with Python and SageMath, and are run on a laptop in durable time. For calculating the distance $\Delta(C)$ for GKP code $C$ or the first successive minimum $\lambda_1(\cL)$ for lattice $\cL$, we use approximation algorithms from the python package fpylll\cite{fpylll}.
\subsection{Distance of GKP code from SIS lattices}
In this section we show the simulation results of Theorem \ref{thm:construction_good_GKP_codes}. In the simulation, $\lambda=2$ is fixed as coding into qubits is mostly used. Several parameter combinations of $(q,n)$ are applied. In each $(q,n)$ combination, we sample 1000 $\vb{H}\leftarrow \matsym{q}{n}$, then construct the lattice $\cL$ of GKP code $\cC$ with $\vb{H}$ via the method in the theorem, and finally calculate the first successive minimum $\lambda_1(\cL)$ for lattice $\cL$, i.e. the code distance $\Delta(C)$ for GKP code $C$ by applying BKZ reduction \cite{schnorr1987hierarchy}. For simplicity, we only show parts of simulation results. Other plots and complete data can be found in the supplementary material.

\begin{figure}[htbp]
  \centering

  \begin{subfigure}[b]{0.49\textwidth}
    \centering
    \includegraphics[width=\linewidth]{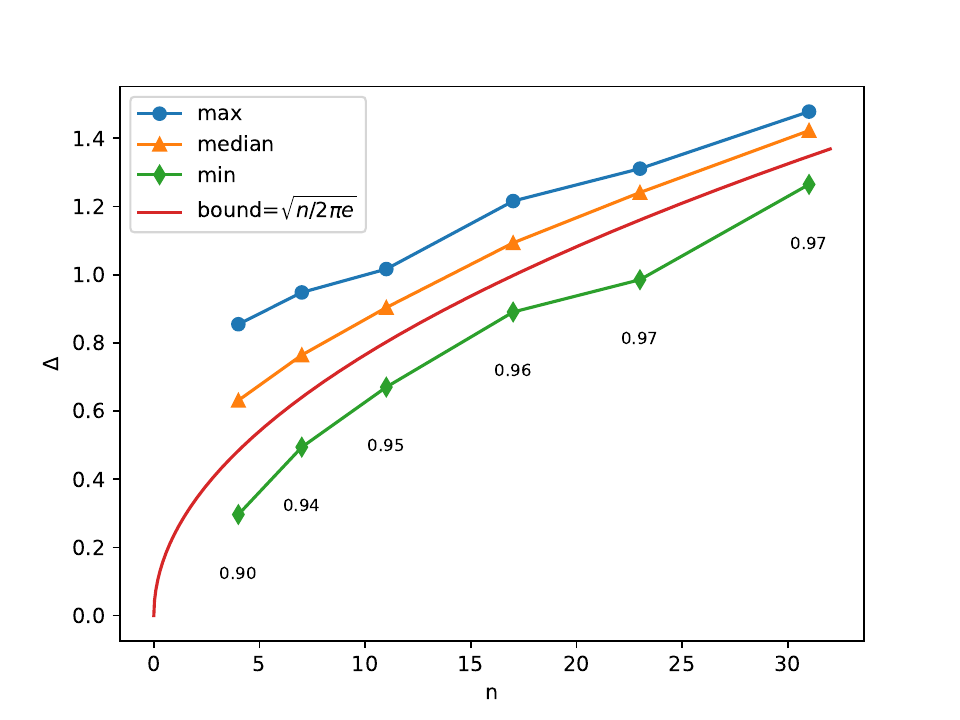}
    \caption{fix $q=256$}
    \label{code_distance_sim_a}
  \end{subfigure}
  \hfill
  \begin{subfigure}[b]{0.49\textwidth}
    \centering
    \includegraphics[width=\linewidth]{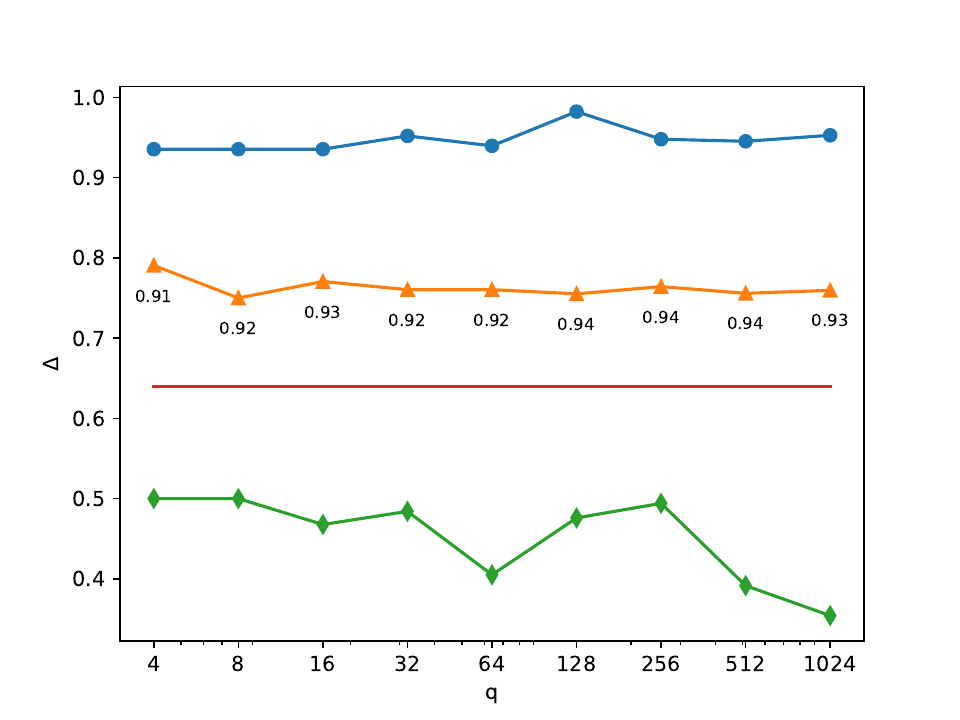}
    \caption{fix $n=7$}
    \label{code_distance_sim_b}
  \end{subfigure}

  \caption{Code distance of GKP code generated from SIS lattice, (a) with $q=256$ fixed, vary $n$ as horizontal axis; (b) with $n=7$ fixed, vary $q$ as horizontal axis.}
  \label{fig:code_distance_sim}
\end{figure}

Fig. \ref{fig:code_distance_sim} shows two cases of code distance varying with $n$ and $q$. In the two graphs, the maximum, median and minimum of code distance of 1000 GKP code samples are plotted. The numbers below each data point in the plot show the rate of samples with code distance $\Delta>\sqrt{n/2\pi e}$. Either in Fig. \ref{fig:code_distance_sim} (a) or Fig. \ref{fig:code_distance_sim} (b), more than 90 percent samples have code distance above the bound $\sqrt{n/2\pi e}$, which correspond to Theorem \ref{thm:construction_good_GKP_codes}. It is worth nothing that this high above bound rate is for all combinations of $(n,q)$ in the simulation, which gives the conjecture that with the construction given by Theorem \ref{thm:construction_good_GKP_codes}, code distance have high probability (close to 1) to be more than the bound $\sqrt{n/\lambda\pi e}$ for all $(n,q)$ combination, i.e. also for small $q\leq\pi en^2/2$ or $q\leq2\pi en^{3/2}$. One can also see from Fig. \ref{fig:code_distance_sim} that Corollary \ref{cor:upper_bound_GKP_code} and Theorem \ref{thm:covering_radius_symplectic_general} are true in the simulation.

\subsection{Trivial decoder for BDD}
In this section we show the decoding performance of the trivial decoder $\bddtriv$, as well as comparing it with NTRU decoder mentioned in \cite{conrad2022gottesman} and Babai decoder using Babai's algorithm \cite{babai1986lovasz}.

The simulation runs as follows: for every combination of $(n,q)$, we first sample 100 GKP codes and pick the one with the largest code distance for decoding. In the decoding stage, we use the Gaussian error model for GKP codes and vary the physical standard deviation $\sigma$. For certain $\sigma$, we generate $10^4$ error samples with Gaussian error model and get the syndromes of these error samples. According to \cite{conrad2022gottesman}, with the knowledge of error syndrome, we can transform the decoding problem into BDD. Then applying $\bddtriv$ (or NTRU decoder or Babai decoder for comparison), we get the residual vector after correction. Finally check if the residual vector can be stabilized and if it commutes with all logical operators; in the lattice perspective, it means whether the residual vector is in the lattice $\cL$ of the GKP code. If the residual vector is in $\cL$, we denote the decoding (correction) \textit{success}, else we denote it \textit{fail}. By calculating the number of failed decoding cases divided by the total decoding cases, we get the decoding error rate, which is used as a flag for evaluating decoding performance. One note is that when comparing decoding performance for trivial, NTRU and Babai decoder, GKP codes are generated via NTRU lattice as mentioned in \cite{conrad2022gottesman}, since the decoding of NTRU decoder relies on the structure of NTRU lattice. For studying the decoding performance of trivial decoder, only GKP codes from SIS lattices are needed. 

\begin{figure}[htbp]
  \centering

  \begin{subfigure}[b]{0.49\textwidth}
    \centering
    \includegraphics[width=\linewidth]{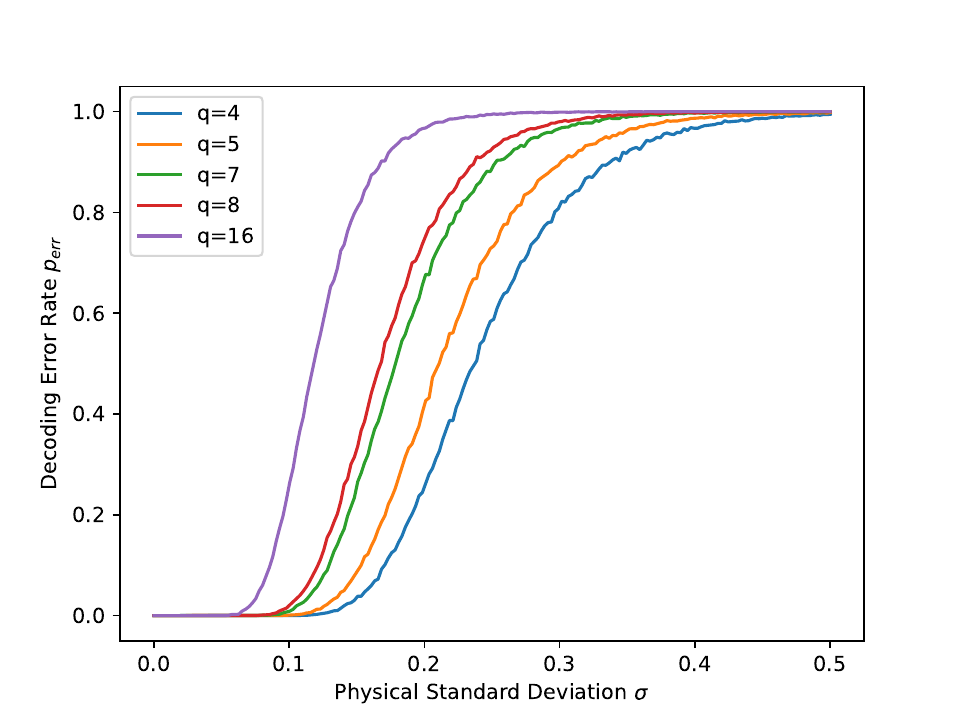}
    \caption{trivial decoder, $n=11$}
    \label{decoding_sim_a}
  \end{subfigure}
  \hfill
  \begin{subfigure}[b]{0.49\textwidth}
    \centering
    \includegraphics[width=\linewidth]{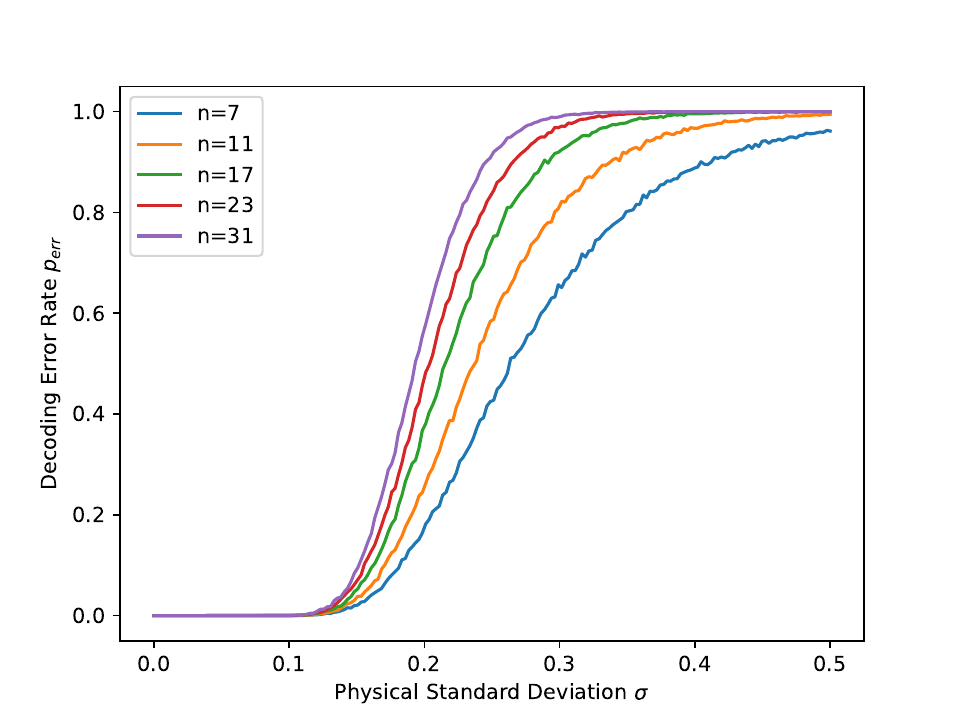}
    \caption{trivial decoder, $q=4$}
    \label{decoding_sim_b}
  \end{subfigure}

  \vspace{0.3cm}

  \begin{subfigure}[b]{0.49\textwidth}
    \centering
    \includegraphics[width=\linewidth]{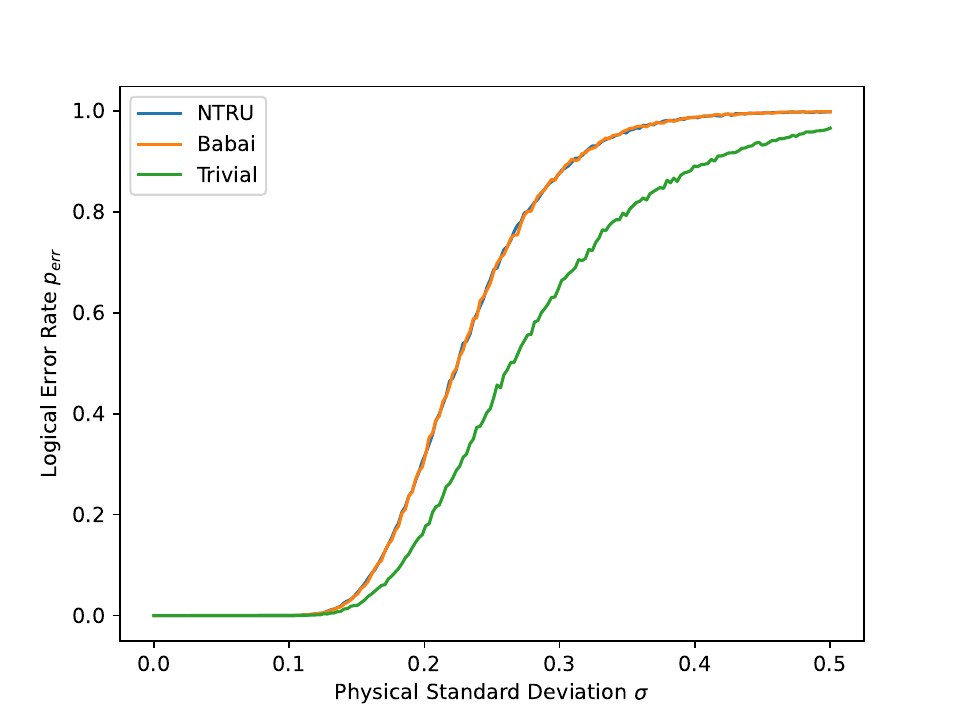}
    \caption{decoder comparison, $n=7$, $q=4$}
    \label{decoding_compare_a}
  \end{subfigure}
  \hfill
  \begin{subfigure}[b]{0.49\textwidth}
    \centering
    \includegraphics[width=\linewidth]{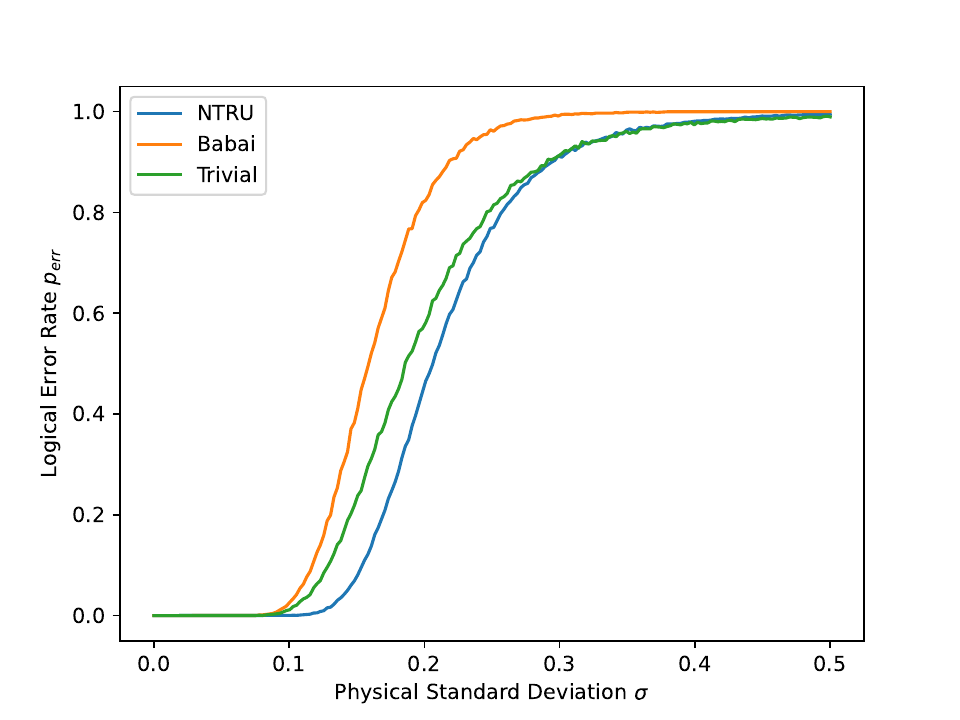}
    \caption{decoder comparison, $n=7$, $q=8$}
    \label{decoding_compare_b}
  \end{subfigure}

  \caption{Decoding under Gaussian displacement noise. (a) Trivial decoder at $n=11$ for $q\in\{4,5,7,8,16\}$. (b) Trivial decoder at $q=4$ for $n\in\{7,11,17,23,31\}$. (c–d) NTRU decoder, Babai decoder, and trivial decoder ($\bddtriv$) at $n=7$ with $q=4$ and $q=8$. Curves show error rate $p_{\mathrm{err}}$ versus physical standard deviation $\sigma$. For each $(n,q)$, we select the code with the largest distance from 100 random candidates and estimate $p_{\mathrm{err}}$ using $10^{4}$ error samples per $\sigma$. For (c–d), codes are generated from NTRU lattices to enable the NTRU decoder.}
  \label{fig:decoding_sim}
\end{figure}

Fig. \ref{fig:decoding_sim} (a) and (b) shows the decoding performance of $\bddtriv$. From Fig. \ref{fig:decoding_sim} (a) we can see that the decoding performance for trivial decoder gets worse as $q$ increase, which corresponds with Theorem \ref{thm:analysis_algorithm_bdd}, since the trivial decoder only computes the closest vector correctly when $\dist(\vb{v},\latperp)<\frac{1}{2\sqrt{\lambda q}}$. Notice that $\dist(\vb{v},\latperp)$ represents the amplitude of the real error. Meanwhile in our simulation, errors are generated by Gaussian error model. Therefore as $q$ increase, $\frac{1}{2\sqrt{\lambda q}}$ becomes smaller. There's less probability to have $\dist(\vb{v},\latperp)<\frac{1}{2\sqrt{\lambda q}}$ for an random Gaussian error, thus the decoding performance drops.

From Fig. \ref{fig:decoding_sim} (b), we can see that the decoding performance for the trivial decoder gets worse as $n$ increases. This also corresponds with Theorem \ref{thm:analysis_algorithm_bdd}, since $\gamma$ decreases as $n$ increases. Therefore, the decoding performance for the trivial decoder gets worse for similar reasons as mentioned above.

Fig. \ref{fig:decoding_sim} (c) and (d) show two cases of the decoding performance of $\bddtriv$, NTRU decoder and Babai decoder. Based on the simulation results, the trivial decoder performs better than both NTRU and Babai decoder when $q$ is small ($q\leq 4$), and performs worse than NTRU but still better than Babai decoder when $q$ is large. Considering that there's no need for picking $q$ large for GKP codes (Since the distance bound $\sqrt{n/2\pi e}$ is irrelevant with $q$) and that the trivial decoder requires less computing resources, the trivial decoder can be considered as a good option for decoding GKP codes generated from lattices.

\section{Symplectic lattices from $\rsis$ and $\msis$ lattices}\label{sec:ring_symplectic_construction}
In this section we construct symplectic lattices from so-called ring and module  $\sis$ lattices, $\rsis$ and $\msis$ lattices, respectively. Both are special cases of modular lattices. Furthermore, $\rsis$ lattices are a subclass of $\msis$ lattices. We consider $\rsis$ lattices separately to increase readability.
\subsection{Basic definitions and results}
To define ring and module lattices, fix some integer polynomial $\Phi(X)$  and an integer  $q$. These define rings
\[
R:=\Z[X]/(\Phi(X)) \quad \text{and} \quad R_q:=\Z_q[X]/(\Phi(X)).
\] 
Let $\vb{A}\in R_q^{k\times m}$, for  integers $k,m$. Similar, to the definition for modular lattices, set
\[
\latperp(\vb{A}):=\{\vb{v}\in R^m:\vb{A}\cdot \vb{v}=0\mod qR\}.
\]
$\latperp$ is a lattice in the following sense. Set $n:=\deg(\Phi(X))$. Then the elements in $R$ and $R_q$ have representatives of degree at most $n-1$. Hence, we can identify elements in $R$ with their coefficient vectors in $\Z^n$, and elements in $R_q$ can be identified with their coefficient vectors in $\Z_q^n$. More generally, we can identify elements in $R^m$ and $R_q^m$ with elements in $\Z^{nm}$ and $\Z_q^{nm}$, respectively. It follows that for $\vb{A}\in R_q^{k\times m}$ the above definition of $\latperp(\vb{A})$ yields a lattice in $\Z^{nm}$, which we also denote by $\latperp(\vb{A})$. For $k=1$ these lattices appear in the ring version of the short integer solutions problem (in short $\rsis$). 
For general $k$, lattices $\latperp(\vb{A})$ correspond to instances of the module version of the short integer solution problem (in short $\msis$). In both cases, in $R^m$ the sets $\latperp(\vb{A})$ have an additional module structure. Therefore the lattices $\latperp(\vb{A})$ correspond to embeddings (via the coefficient vectors) of these modules into $\R^{nm}$  and are called \emph{module lattices}\footnote{In algebraic number theory as well as in some parts of lattice based cryptography one uses the  (canonical) embedding via field homomorphisms. But this is not relevant for this paper.}. The special case $k=m=1$ corresponds to so called \emph{ideal lattices}. But these are not relevant for this paper. 

Using matrix representations of multiplications by ring elements, one can see that module lattices as just described are special cases of modular lattices. These matrix representations are crucial for the construction of symplectic lattices and good GKP codes.
\paragraph{Matrix representations} We consider $\ring$ as a finite dimensional $\Q$ vector space. Throughout this paper we fix the basis $1,X,\ldots, X^{n-1}$. Let $h(X)=\sum_{i=0}^{n-1}h_iX^i$ be an element in $R:=\Z[X]/(\Phi(X))$. Consider the linear map
\begin{equation}\label{eqn:function_mult}
\begin{array}{ll}
	\rho(\vb{h}):  R & \longrightarrow R\\
	  b & \longmapsto b \cdot h.
\end{array}
\end{equation}
As a linear map it has a matrix representation with respect to the basis $1,\ldots,X^{n-1}$, which we also denote by $\rho(\vb{h})$. 

The most important rings considered in this paper are rings $R,R_q$ for the polynomials $\phi(X)=X^n\pm 1$. Let $h=h(X)=\sum_{i=0}^{n-1}h_iX^i$ be an element of $R$. For $\Phi(X)=X^m-1$ the matrix representation of $\rho(\vb{h})$ is given by
\[
\rho(\vb{h})=\left(
\begin{matrix}
	h_0 & h_1 & h_2 & \ldots & h_{n-1}\\
	h_{n-1} & h_0 & h_1 & \ldots & h_{n-2}\\
	\vdots &\vdots &\vdots & \ddots & \vdots \\
	h_2 & h_3 & h_4 & \ldots & h_1\\
	h_1 & h_2 & h_3 & \ldots & h_0
\end{matrix}
\right).
\]
For $\Phi(X)=X^n+1$ the matrix representation is given by
\begin{eqnarray}\label{eqn:matrix_representation}
\rho(\vb{h})=\left(
\begin{matrix}
	h_0 & h_1 & h_2 & \ldots & h_{n-1}\\
	-h_{n-1} & h_0 & h_1 & \ldots & h_{n-2}\\
	\vdots &\vdots &\vdots & \ddots & \vdots \\
	-h_2 & -h_3 & -h_4 & \ldots & h_1\\
	-h_1 & -h_2 & -h_3 & \ldots & h_0
\end{matrix}
\right).
\end{eqnarray}
\paragraph{Factorization of cyclotomic polynomials} In this paper we consider  rings $R=\rring$ and $\rringq$, where $n\in \N$ and $q$ a prime with $\gcd(n,q)=1$.The structure of rings $R=\rring$ and $R_q=\rringq$ is determined by the factorization of the polynomial $X^n+1$ over $\Q$ and modulo $q$. More precisely, over $\Q$
\begin{eqnarray}\label{eqn:factors_xn+1}
X^n+1=\prod_{\substack{m\mid 2n \\ m\nmid n}}Q_m(X),
\end{eqnarray}
where, following~\cite{lidl1994introduction},  $Q_m$ denotes the $m$-th cyclotomic polynomial over  $\Q$. That is,
\[
Q_m(X):=\prod_{\substack{s=1\\\gcd(s,m)=1}}^d (X-\zeta^s),
\]
where $\zeta$ denotes a primitive $m$-th root of unity over $\Q$.  

For $m\in \N$, set $\setn{m}:=\{0,\ldots,m\}$.  Modulo $q$ the polynomials $Q_m$ are no longer irreducible, instead there exist integers $n_m,d_m$ such that $Q_m$ factors into $n_m$ distinct irreducible factors $p_{m,j}$ each if degree $d_m$, where $d_m$ is the order of $q$ modulo $m$. (see e.g. Theorem 2.47 in~\cite{lidl1994introduction}). Consequently, over the field $\F_q$, polynomial $X^n+1$ factors as
\begin{equation}\label{eq:factorizatzion_xn}
X^n+1=\prod_{\substack{m\mid 2n \\ m\nmid n}}\prod_{j_m\in\setn{n_m}} p_{m,j_m}(X),
\end{equation}
where the polynomials $p_{m,j_m}$  are pairwise distinct and irreducible. Hence 
\begin{equation}\label{eqn:ring_isomorphism}
R_q\cong \bigotimes_{\substack{m\mid 2n \\ m\nmid n}}\bigotimes_{j_m|in\setn{n_m}} \Fq{q}(X)/(p_{m,j_m})(X))\cong \bigotimes_{\substack{m\mid 2n \\ m\nmid n}} \left(\Fq{q^{d_m}}\right)^{n_m}.
\end{equation}
\subsection{Symplectic lattices from $\rsis$ lattices}
From now on let $n,q\in\N$ and set $\Phi(X)=X^n+1$. We define rings $R,R_q$ as above. Let $h(X)=\sum_{i=0}^{n-1}h_iX^i$ be an element in $R:=\Z[X]/(\Phi(X))$. Consider the linear map $\rho(\vb{h})$ and its matrix representation with respect to basis
$1,X,\ldots, X^{n-1}$ as defined above (see (\ref{eqn:function_mult}) and \ref{eqn:matrix_representation})). Matrix $\rho(\vb{h})$ is not symmetric. Consequently, matrix $M_{\rho(\vb{h})}$ as in (\ref{eqn:q-symplectic}) is not $q$-symplectic. However, define the following $(n-1)$-dimensional and $n$-dimensional  matrices
\[
\bar{I}_{n-1}:=\left(
\begin{matrix}
	0 & 0& \ldots &0 & 1\\
	0 & 0& \ldots &1 &0\\
	\vdots & \vdots & \ddots & \vdots & \vdots \\
	0 & 1 & \ldots & 0 & 0\\
	1 & 0 & \vdots & 0 & 0
\end{matrix}
\right), 
\quad 
\sigma_n:=\left(
\begin{matrix}
	1& 0\\
	0_n & -\bar{I}_{n-1}
\end{matrix}
\right).
\]
Then the matrix 
\[
\bar{\rho}(\vb{h}):=\sigma_n \cdot \rho(\vb{h}):=
\left(
\begin{matrix}
	h_0 & h_1 & h_2 & \ldots & h_{n-1}\\
	h_1& h_2 & h_3 & \ldots & -h_0\\
	\vdots & \vdots &\vdots & \ddots & \vdots \\
	h_{n-2} & h_{n-1} & -h_0 & \vdots & -h_{n-3}\\
	h_{n-1} & -h_{0} & -h_1 & \ldots & -h_{n-2}
\end{matrix}
\right)
\]
is symmetric. Hence, it leads to a $q$-symplectic matrix $M_{\bar{\rho}(\vb{h})}$ as in (\ref{eqn:q-symplectic}). To simplify notation, we set
\[
M_\vb{h}:=M_{\rho(\vb{h})}\quad \text{and} \quad \bar{M}_\vb{h}:=M_{\bar{\rho}(\vb{h})}.
\] 
In Section~\ref{sec:analysis_distance} we analyze the first successive minimum of $\cL(M_{\bar{\rho}(\vb{h})})$ for random $\vb{h}$. To so we use the following result from Conrad et al.~\cite{conrad2024good} (Lemma 3 and Corollary 1).
\begin{lemma}\label{lem:isomorphic_simple}
Let $\vb{h}, M_{\vb{h}}, \bar{M}_{\vb{h}}$ be as above. Then the lattices $\cL(M_{\vb{h}})$ and $\cL(\bar{M}_{\vb{h}})$ are isomorphic. In particular, their first successive minima are identical.
\end{lemma}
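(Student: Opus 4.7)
The plan is to exhibit an explicit orthogonal change of coordinates that, combined with a unimodular change of basis, turns the generator matrix $M_\vb{h}$ into $\bar{M}_\vb{h}$. Since isometries preserve all Euclidean invariants of a lattice (in particular the first successive minimum), this will give the lemma immediately.

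First I would observe that $\sigma_n$, as defined right before the lemma, is a \emph{signed permutation matrix}: each of its rows and each of its columns has exactly one nonzero entry, and that entry is $\pm 1$. Consequently $\sigma_n$ is simultaneously orthogonal ($\sigma_n \sigma_n^T = \vb{I}_n$) and unimodular ($\sigma_n \in GL_n(\Z)$ with $\det \sigma_n = \pm 1$), and a direct check shows $\sigma_n^2 = \vb{I}_n$. These two properties pass to the block-diagonal matrix
\[
\vb{Q} := \begin{pmatrix} \sigma_n & \vb{0}_n \\ \vb{0}_n & \vb{I}_n \end{pmatrix},
\]
so $\vb{Q}$ is again both orthogonal and unimodular, with $\vb{Q}^2 = \vb{I}_{2n}$.

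Next I would carry out the block multiplication
\[
\vb{Q} \cdot M_\vb{h} \cdot \vb{Q} = \begin{pmatrix} \sigma_n & \vb{0}_n \\ \vb{0}_n & \vb{I}_n \end{pmatrix} \begin{pmatrix} \vb{I}_n & \rho(\vb{h}) \\ \vb{0}_n & q\vb{I}_n \end{pmatrix} \begin{pmatrix} \sigma_n & \vb{0}_n \\ \vb{0}_n & \vb{I}_n \end{pmatrix} = \begin{pmatrix} \sigma_n^2 & \sigma_n \rho(\vb{h}) \\ \vb{0}_n & q\vb{I}_n \end{pmatrix} = \bar{M}_\vb{h},
\]
using $\sigma_n^2 = \vb{I}_n$ together with the defining identity $\bar{\rho}(\vb{h}) = \sigma_n \rho(\vb{h})$.

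Finally I would argue that left-multiplication of a basis matrix by the unimodular $\vb{Q}$ does not change the underlying lattice, whereas right-multiplication by the orthogonal $\vb{Q}$ realizes a rigid isometry of $\R^{2n}$ acting on row vectors. Therefore $\cL(\bar{M}_\vb{h}) = \cL(\vb{Q}\, M_\vb{h}\, \vb{Q}) = \cL(M_\vb{h}) \cdot \vb{Q}$ is the isometric image of $\cL(M_\vb{h})$, and in particular $\lambda_1(\cL(M_\vb{h})) = \lambda_1(\cL(\bar{M}_\vb{h}))$. There is essentially no obstacle: the only observation that does the work is that the "correction" matrix $\sigma_n$ which symmetrizes $\rho(\vb{h})$ happens to be a signed permutation, which is precisely what is needed for the same matrix $\vb{Q}$ to play two different roles at once — a basis change on the left and an isometry on the right.
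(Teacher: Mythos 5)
Your proof is correct and follows essentially the same route as the paper: the paper also defines $U=\sigma_n\oplus \vb{I}_n$, notes it is orthogonal and unimodular (via $U^T=U$ and $U^{-1}=U$, which is equivalent to your signed-permutation observation), verifies $U\cdot M_\vb{h}\cdot U=\bar M_\vb{h}$, and concludes that the two lattices are isomorphic and hence share their successive minima. The only cosmetic difference is that you explicitly spell out the "left-unimodular = basis change, right-orthogonal = isometry" reading of $\vb{Q}M_\vb{h}\vb{Q}$, which the paper leaves implicit.
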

\begin{proof}
Consider the $2n\times 2n$ matrix $U=\sigma_n\oplus I_n$. Then $U^T=U$ and $U^{-1}=U$. Hence, 
\begin{enumerate}
	\item $U$ is orthogonal,
	\item $U$ is unimodular.
\end{enumerate}
Furthermore, $U\cdot M_{\vb{h}}\cdot U= \bar{M}_{\vb{h}} $. This shows the first statement. The second statement follows from the fact that isomorphic lattices have the same successive minima.
\end{proof}
\begin{corollary}
	Let $\vb{h}, \rho(\vb{h}), M_{\vb{h}}$ be as above. Then there exists a $q$-symplectic lattice isomorphic to $\cL(M_{\vb{h}})$.
\end{corollary}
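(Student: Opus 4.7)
The plan is to combine two facts already in hand: the construction of $q$-symplectic matrices from symmetric $\vb{H}$ in Section~\ref{sec:construction}, and the lattice isomorphism from Lemma~\ref{lem:isomorphic_simple}. The strategy will be to take $\cL(\bar{M}_{\vb{h}})$ as the candidate $q$-symplectic lattice isomorphic to $\cL(M_{\vb{h}})$, and then to verify each of these two properties in turn.

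First, I will argue that $\cL(\bar{M}_{\vb{h}})$ is $q$-symplectic. From the explicit formula displayed in Section~\ref{sec:ring_symplectic_construction} for $\bar{\rho}(\vb{h}) = \sigma_n \cdot \rho(\vb{h})$, one reads off that $\bar{\rho}(\vb{h})$ is symmetric: the entry in position $(i,j)$ equals the entry in position $(j,i)$, with the sign pattern matching across the diagonal. Consequently $\bar{\rho}(\vb{h})$ plays the role of the symmetric matrix $\vb{H}$ in the construction~(\ref{eqn:q-symplectic}), and that construction was already shown (by direct computation of $M_{\vb{H}} \vb{J}_{2n} M_{\vb{H}}^T$) to yield a $q$-symplectic matrix. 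Thus $\bar{M}_{\vb{h}} = M_{\bar{\rho}(\vb{h})}$ is $q$-symplectic, and the lattice $\cL(\bar{M}_{\vb{h}})$ generated by its rows is $q$-symplectic by definition.

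Second, Lemma~\ref{lem:isomorphic_simple} exhibits an explicit lattice isomorphism from $\cL(M_{\vb{h}})$ to $\cL(\bar{M}_{\vb{h}})$ by means of the orthogonal unimodular matrix $U = \sigma_n \oplus \vb{I}_n$, which preserves the Euclidean structure and sends one basis to the other. Combining this with the first step produces the desired $q$-symplectic lattice isomorphic to $\cL(M_{\vb{h}})$, completing the argument.

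Since both ingredients are already established, there is no real obstacle; the corollary is essentially a bookkeeping consequence of Lemma~\ref{lem:isomorphic_simple} together with the symmetry of $\bar{\rho}(\vb{h})$. The only point that merits explicit attention is the symmetry check for $\bar{\rho}(\vb{h})$, which is immediate from the displayed matrix but is the place where the asymmetric nature of $\rho(\vb{h})$ is repaired by the left-multiplication by $\sigma_n$.
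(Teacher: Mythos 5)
Your proposal is correct and follows exactly the route the paper intends: identify $\cL(\bar{M}_{\vb{h}})$ as the $q$-symplectic lattice (since $\bar{\rho}(\vb{h})=\sigma_n\rho(\vb{h})$ is symmetric, so $\bar{M}_{\vb{h}}=M_{\bar{\rho}(\vb{h})}$ is $q$-symplectic by the construction in~(\ref{eqn:q-symplectic})), and then invoke Lemma~\ref{lem:isomorphic_simple} for the isomorphism via the orthogonal unimodular $U=\sigma_n\oplus\vb{I}_n$. The paper leaves this corollary without an explicit proof precisely because it is this direct combination of the two preceding observations, which you have reproduced.
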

Next, we describe $\cL(M_{\rho(\vb{h})})$ as a module lattice. Later, this allows us to analyze its minimal distance, and, by consequence of Lemma~\ref{lem:isomorphic_simple}, the minimal distance of $q$-symplectic lattice $\cL(M_{\bar{\rho}(\vb{h})})$. Fix $\vb{h}\in R$ and $q\in \N$. Set 
\[
\vb{A}(\vb{h}):=\left(
\begin{matrix}
	\vb{h} & -\vb{1}
\end{matrix}
\right)\in R^{1\times 2}
\]
Consider $\latperp(\vb{A}(\vb{h}))$. Assume that $\vb{h}\in R_q^\times$, i.e. $\vb{h}$. Viewed as an $R$-module, this lattice has as generating set, in fact as basis, the rows of matrix 
\[
G_\vb{h}:=\left(
\begin{matrix}
	\vb{1} & \vb{h}\\
	\vb{0}& \vb{q}
\end{matrix}
\right).
\] 
Consequently, viewed as a $\Z$-module, or a lattice, $\latperp(\vb{A}(\vb{h}))$ has a generating set the rows of matrix $M_\vb{h}$, that is
\begin{equation}
	\latperp(\vb{A}(\vb{h}))=\cL(M_\vb{h})
\end{equation} 
Simplifying notation, we set
\[
\latperp(\vb{h}):=\cL(M_\vb{h}).
\]
Since $\cL(M_\vb{h})$ is isomorphic to the $\cL(\bar{M}_\vb{h})$, lattice $\latperp_\vb{h}$ is isomorphic to a $q$-symplectic lattice, namely lattice $\cL(\bar{M}_\vb{h})$. Accordingly,
$\frac{1}{\sqrt{q}}\latperp_\vb{h}$ is isomorphic to the symplectic lattice $\frac{1}{\sqrt{q}}\cL(\bar{M}_\vb{h})$. By abuse of notation, we call both, $\frac{1}{\sqrt{q}}\latperp_\vb{h}$ and $\frac{1}{\sqrt{q}}\cL(\bar{M}_\vb{h})$, \emph{symplectic $\rsis$ lattices}.  To construct symplectic $\rsis$ lattices with large minimal distance, we show that for random $\vb{h}\in \ringq$, lattice $\qfactor\latperp(\vb{h})$ has large minimal distance. For this, it turns out, the $R$-module view of $\latperp(\vb{h})$ is more convenient.
\subsection{Symplectic lattices from $\msis$ lattices}
We generalize the construction from above to matrices over $R$. The situation studied above, corresponds to matrices with a single row.  First some notation. Let $\vb{H}=(\vb{h}_{ij})_{1\le i,j\le k}$ be a $k\times k$ matrix over $R$. Then we denote by $\rho(\vb{H})$ the following  $nk\times nk$ over $\Z$
\begin{eqnarray*}
	\rho(\vb{H}):=\left(A(\vb{h}_{ij})\right) _{1\le i\le k},
\end{eqnarray*}
that is, we replace entries $\vb{h}_{ij}$ by the matrix representations of multiplication with $\vb{h}_{ij}.$ 

Consider a symmetric matrix $\vb{H}$ over $R$. Matrix $M_{\rho(\vb{H})}$ is not symmetric, since matrix representations of multiplications with ring elements are not symmetric. However, for symmetric $\vb{H}$ the matrix
\[
\bar{\rho}(\vb{H}):= \left(\sigma_n\cdot A(\vb{h}_{ij})\right) _{1\le i\le k}
\]
is symmetric.

Further, generalizing the arguments from above, consider the $2nk\times 2nk$ matrix 
\[
\vb{U}:=\underbrace{\sigma_n\oplus \cdots \oplus\sigma_n}_{\text{$k$ times}}\oplus\underbrace{I_n\oplus \cdots \oplus I_n}_{\text{$k$ times}},
\]
which generalizes the definition above from $k=1$ to arbitrary $k.$ Simple calculations show that
\begin{enumerate}
	\item $\vb{U}$ is orthogonal
	\item $\vb{U}$ is unimodular
	\item
	$
	\vb{U}\cdot M_{\rho(\vb{H})}\cdot \vb{U}=M_{\bar{\rho}(\vb{H})}.
	$
\end{enumerate}
Generalizing the simplifying notation from before, we set
\[
M_\vb{H}:=M_{\rho(\vb{H})}\quad \text{and} \quad \bar{M}_\vb{H}:=M_{\bar{\rho}(\vb{H})}.
\]
This leads to the following generalization of Lemma~\ref{lem:isomorphic_simple} to symmetric matrices.
\begin{lemma}\label{lem:isomorphic_general}
	Let $\vb{H}\in R^{k\times k}$ symmetric. Define $M_\vb{H}, \bar{M}_\vb{H} $ be as above. Then the lattices $\cL(M_\vb{H})$ and $\cL(\bar{M}_\vb{H})$ are isomorphic. In particular, their first successive minima are identical.
\end{lemma}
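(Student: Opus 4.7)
The plan is to mirror the proof of Lemma~\ref{lem:isomorphic_simple} essentially verbatim, replacing the $k=1$ matrix $\sigma_n\oplus I_n$ with the block-diagonal matrix $\vb{U}$ defined just before the statement. The three properties of $\vb{U}$ listed there (orthogonality, unimodularity, and the conjugation identity $\vb{U}\cdot M_{\rho(\vb{H})}\cdot \vb{U}=M_{\bar{\rho}(\vb{H})}$) do all of the work.

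Concretely, I would carry out three steps in the following order. First, since $\vb{U}$ is an integer unimodular matrix, left multiplication of the basis matrix $M_\vb{H}$ by $\vb{U}$ is a change of $\Z$-basis, hence $\cL(\vb{U}\cdot M_\vb{H})=\cL(M_\vb{H})$. Second, since $\vb{U}$ is orthogonal, right multiplication by $\vb{U}$ is a Euclidean isometry of $\R^{2nk}$, so $\cL(\vb{U}\cdot M_\vb{H}\cdot \vb{U})$ is isometric to $\cL(\vb{U}\cdot M_\vb{H})=\cL(M_\vb{H})$. Third, the conjugation identity rewrites the left-hand side as $\cL(\bar{M}_\vb{H})$, giving the claimed isomorphism. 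Equality $\lambda_1(\cL(M_\vb{H}))=\lambda_1(\cL(\bar{M}_\vb{H}))$ then follows, because all successive minima are invariants of Euclidean lattice isometry.

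I do not expect any real obstacle; the work reduces to verifying the three properties of $\vb{U}$ that the text labels as ``simple calculations''. Orthogonality of $\vb{U}$ reduces block by block to $\sigma_n^T\sigma_n=I_n$, and since $\sigma_n$ is visibly symmetric this is just $\sigma_n^2=I_n$, which in turn follows from $\bar{I}_{n-1}^2=I_{n-1}$ (squaring the anti-diagonal permutation). Unimodularity is immediate from $\sigma_n^2=I_n$, which forces $\det(\sigma_n)=\pm 1$. For the conjugation identity, the block structure of $M_\vb{H}$ (the $nk\times nk$ identity in the upper left, $\rho(\vb{H})$ in the upper right, zero in the lower left, and $qI_{nk}$ in the lower right), together with $\vb{U}=\sigma\oplus I_{nk}$, where $\sigma$ denotes the direct sum of $k$ copies of $\sigma_n$, reduces the identity to two facts: $\sigma^2=I_{nk}$ (handling the diagonal blocks) and the blockwise definition $\bar{\rho}(\vb{H})=\sigma\cdot \rho(\vb{H})$ (handling the off-diagonal block). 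Notably, the symmetry hypothesis on $\vb{H}$ is not used anywhere in the argument above; it is needed only for the separate downstream statement that $\bar{M}_\vb{H}$ is itself $q$-symplectic, not for the lattice isomorphism itself.
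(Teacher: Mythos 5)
Your proposal is correct and follows the same route the paper takes (the paper gives no separate proof for Lemma~\ref{lem:isomorphic_general}; it simply lists the three properties of $\vb{U}$ and points to the proof of Lemma~\ref{lem:isomorphic_simple} as the template, which is exactly what you do). Your side observation that symmetry of $\vb{H}$ is unused in the isomorphism itself, and only matters for the downstream claim that $\bar{M}_\vb{H}$ is $q$-symplectic, is also accurate.
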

As before this yields
\begin{corollary}\label{cor:isomorphic_general}
Let $\vb{H}, \rho(\vb{h}), M_\vb{H}$ be as above. Then there exists a $q$-symplectic lattice isomorphic to $\cL(M_\vb{H})$	
\end{corollary}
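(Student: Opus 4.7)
The plan is to leverage the generalized isomorphism result (Lemma~\ref{lem:isomorphic_general}) and combine it with the observation that $\bar{M}_{\vb{H}}$ is, by construction, of the same shape as the $q$-symplectic matrices introduced in Section~\ref{sec:construction}. Concretely, Lemma~\ref{lem:isomorphic_general} tells us that $\cL(M_{\vb{H}})$ and $\cL(\bar{M}_{\vb{H}})$ are isomorphic as lattices (via the orthogonal unimodular transformation $\vb{U}$), so it suffices to exhibit $q$-symplecticity of $\bar{M}_{\vb{H}}$.

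First, I would recall that $\bar{M}_{\vb{H}}=M_{\bar{\rho}(\vb{H})}$ has the block form
\[
\bar{M}_{\vb{H}}=\begin{pmatrix} \vb{I}_{nk} & \bar{\rho}(\vb{H})\\ \vb{0}_{nk} & q\vb{I}_{nk}\end{pmatrix},
\]
exactly as in~(\ref{eqn:q-symplectic}) but with dimension $nk$ in place of $n$. Next, because $\vb{H}\in R^{k\times k}$ is assumed symmetric and because every block of $\bar{\rho}(\vb{H})$ is $\sigma_n\rho(\vb{h}_{ij})$, which was shown (in the $k=1$ case) to be symmetric, the block matrix $\bar{\rho}(\vb{H})$ is itself symmetric: the on-diagonal blocks are symmetric by the $k=1$ argument applied to $\vb{h}_{ii}$, and for $i\neq j$ the $(i,j)$-block equals the transpose of the $(j,i)$-block by $\vb{h}_{ij}=\vb{h}_{ji}$ together with the identity $(\sigma_n\rho(\vb{h}))^T=\sigma_n\rho(\vb{h})$ shown for a single ring element. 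Once this symmetry is in hand, the calculation already carried out in Section~\ref{sec:construction} (that $M_{\vb{S}}\vb{J}_{2m}M_{\vb{S}}^T=q\vb{J}_{2m}$ for any symmetric $\vb{S}\in\Z^{m\times m}$) applies verbatim with $m=nk$ and $\vb{S}=\bar{\rho}(\vb{H})$, giving that $\bar{M}_{\vb{H}}$ is $q$-symplectic.

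Finally, I would combine the two ingredients: the rows of the $q$-symplectic matrix $\bar{M}_{\vb{H}}$ span the lattice $\cL(\bar{M}_{\vb{H}})$, which is therefore $q$-symplectic by definition, and Lemma~\ref{lem:isomorphic_general} provides an isomorphism $\cL(M_{\vb{H}})\cong\cL(\bar{M}_{\vb{H}})$. This proves the corollary.

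I expect no real obstacle here — the only place where some care is needed is verifying the symmetry of the block matrix $\bar{\rho}(\vb{H})$, since the entries $\sigma_n\rho(\vb{h}_{ij})$ are themselves matrices rather than scalars and one must check both that diagonal blocks are symmetric and that off-diagonal blocks transpose to each other under $\vb{h}_{ij}=\vb{h}_{ji}$. Once this is noted, the corollary is an immediate corollary of Lemma~\ref{lem:isomorphic_general} and the $\sis$ construction in~(\ref{eqn:q-symplectic}).
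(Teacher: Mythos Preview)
Your proposal is correct and follows exactly the approach implicit in the paper: the corollary is not given a separate proof there but is immediate from Lemma~\ref{lem:isomorphic_general} together with the preceding observation that $\bar{\rho}(\vb{H})$ is symmetric (which makes $\bar{M}_{\vb{H}}$ of the form~(\ref{eqn:q-symplectic}) and hence $q$-symplectic). You simply spell out the symmetry verification in slightly more detail than the paper does.
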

 Generalizing further, set 
\[
\vb{A}(\vb{H}):=\left(
\begin{matrix}
	\vb{H} & -\vb{I}_k
\end{matrix}
\right)\in R^{k\times 2k}
\]
Consider $\latperp(\vb{A}(\vb{H}))$. 
Viewed as an $R$-module, this lattice has as generating set, in fact as basis, the rows of matrix 
\[
G_\vb{H}:=\left(
\begin{matrix}
	\vb{I}_k & \vb{H}\\
	\vb{0}_k& q\vb{I}_k
\end{matrix}
\right).
\] 
Consequently, viewed as a $\Z$-module, or a lattice, $\latperp(\vb{A}(\vb{H}))$ has as basis the rows of matrix $M_\vb{H}$, that is
\begin{equation}
	\latperp(\vb{A}(\vb{H}))=\cL(M_\vb{H}).
\end{equation} 
Simplifying notation, we define
\[
\latperp(\vb{H}):=\latperp(\vb{A}(\vb{H})).
\]
By Lemma~\ref{lem:isomorphic_general}, $\qfactor\latperp(\vb{H})$ is isomorphic to the lattice $\qfactor\cL(\bar{M}_\vb{H})$, which by construction is symplectic. As in the $\rsis$ case we abuse notation and call $\qfactor\latperp(\vb{H})$ a \emph{symplectic $\msis$ lattice}. 
To construct symplectic $\msis$ lattices with large minimal distance, we show that for random symmetric $\vb{H}\in \ringq^{k\times m}$, lattice $\qfactor\latperp(\vb{H})$ has large minimal distance.

\section{Analysis of minimal distance for symplectic $\rsis$  and $\msis$ lattices}\label{sec:ring_analysis_distance}
In this section we prove that with probability close to $1$ lattices $\latperp(\vb{h})$ and $\latperp(\vb{H})$ have minimal distance $\approx \sqrt{qn/\pi e}$. Consequently, lattices $\qfactor\latperp(\vb{h})$ and $\qfactor\latperp(\vb{H})$ have minimal distance $\approx \sqrt{n/\pi e}$ with probability close to $1$. Here the probabilities are with respect to the uniform distribution over elements $\vb{h}$ in $R_q$ and the uniform distribution over symmetric matrices $\vb{H}$ in $R_q^{k\times k}, k\in \N.$ The first distribution being the special for $k=1$ of the second distribution. In the sequel we focus on the case of general $k$. But to understand the analysis, it may be useful to think of the special case $k=1$, as this simplifies the notation.
\begin{definition}
	For $R_q$ as above and $k\in \N$ denote the set of symmetric matrices in $R_q^{k\times k}$ by $\rmatsym{k}$.
	 Hence, $R_q=\rmatsym{1}$.
\end{definition}
The analysis of the minimal distance proceeds in two steps and is mostly identical to the strategy in Section~\ref{sec:analysis_distance}. As in this section, it uses standard arguments in lattice-based cryptography (see for example~\cite{micciancio2011geometry}). However, the ring structure of $R$ and $R_q$ complicates the analysis significantly. In the first step of the analysis, we determine for fixed $\vb{z}\in \Z^{2kn},$ interpreted as an element in $R^{2k}$, the probability
\begin{equation}\label{eqn:prob_element_inv_lattice_ring}
	\Pr[\vb{z}\in \latperp(\vb{H}):\vb{H}\leftarrow \rmatsym{k}]
\end{equation}
This probability depends on the structure of the representative of $\vb{s}$ in $R_q\times R_q$ (see Equation~\ref{eqn:ring_isomorphism}). For each relevant structure, we derive an upper bound on the number of \emph{short} $\vb{z}$ with this structure.  Then, the main result on the minimal distance of lattices $\latperp(\vb{H})$ follows by the union bound. 
\subsection{Analyzing the probability}
Let $\Phi(X)=X^n+1, R, R_q$ as before. Fix an integer $k$. For an element $\vb{z}\in R^{2k}$, write $\vb{z}=(\vb{z}_1,\vb{z}_2), \vb{z}_i\in R$ and $\vb{z}_i=(\vb{z}_{i1},\ldots, \vb{z}_{ik}), \vb{z}_{ij}\in R$. As before, via its coefficient representation, we can identify an element $\vb{z}$ in $\R^{2k}$ with an element $\vb{z}\in \Z^{2kn}$. We use the same notation for both viewpoints.  For $\vb{H}\in R_q^{k\times k},$
\[
\latperp(\vb{H})=\left\{\vb{z}\in R^{2k}:\vb{H}\vb{z}_1=\vb{z}_2\mod qR\right\}.
\]
To analyze the probability in~(\ref{eqn:prob_element_inv_lattice}), we need to further distinguish elements in $\R^{2k}.$ 
Let 
\begin{equation*}
	X^n+1=\prod_{\substack{m\mid 2n \\ m\nmid n}}\prod_{j_m\in\setn{n_m}} p_{m,j_m}(X),
\end{equation*}
be the factorization of $X^n+1$ into irreducible factors modulo $q$ (see~(\ref{eq:factorizatzion_xn}). As before, $n_m$ denotes the number of factors of the $m$-th cyclotomic polynomial $Q_m(X)$ modulo $q$ and $d_m$ denotes their degree. The set of index pairs $(m,j_m)$ of divisors of $X^n+1$ modulo $q$ is denoted by $\cI$, 
\[
\cI:=\bigcup_{\substack{m\mid 2n \\ m\nmid n}}\{m\}\times \setn{n_m}.
\]
For $\vb{z}\in R^{2k}, \vb{z}\ne \vb{0} \mod qR,$  set
\begin{equation}\label{eqn:setT}
\csetT{m}{\vb{z}}:=\{j\in\setn{n_m}:\vb{z}_1=\vb{z}_2=0\mod p_{m,j}(X)\}
\end{equation}
and
\begin{equation}
	\setT{m}{\vb{z}}:=\setn{n_m}\setminus \csetT{m}{\vb{z}}, \quad T(\vb{z}):=\bigcup_{\substack{m\mid 2n \\ m\nmid n}}\{m\}\times \setT{m}{\vb{z}}, \quad \bar{T}(\vb{z}):=\cI\setminus T(\vb{z}).
\end{equation}
Next set 
\[
\ell_m({\vb{z}}):=|\setT{m}{\vb{z}}| \quad\text{and}\quad 
p_{\vb{z}}(X):=\prod_{(m,j_m)\in T(\vb{z})} p_{m,j_m}(X).
\]
Consequently, 
\begin{equation}
	d_{\vb{z}}:=|T(\vb{z})|=\sum_{\substack{m\mid 2n \\ m\nmid n}}\ell_m(\vb{z})d_m \quad \text{and}\quad \deg(p_{\vb{z}}(X))=d_{\vb{z}}.
\end{equation}
\begin{lemma}\label{lem:prob_element_H_lattice}
Let $\vb{z}\in R^{2k}, \vb{z}\ne \vb{0} \mod q R$. Then 
\[
\Pr[\vb{z}\in \latperp(\vb{H}); \vb{H}\leftarrow \rmatsym{k})=\frac{1}{q^{d_\vb{z}k}}.
\]	
\end{lemma}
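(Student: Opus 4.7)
The plan is to decompose the randomness over $\rmatsym{k}$ via the Chinese Remainder Theorem, reducing the problem to the base case already handled in Lemma~\ref{lem:solutions_symmetric_matrix}. Using the factorization $X^n+1 = \prod_{j=1}^\ell p_j(X) \bmod q$ from~(\ref{eqn:factors_mod_q}), the ring isomorphism~(\ref{eqn:ring_isomorphism}) gives $R_q \cong \prod_{j=1}^\ell \fq{q}{n_j}$ with $n_j = \deg p_j$. Applying this entry-wise yields a bijection $R_q^{k\times k} \cong \prod_j \fq{q}{n_j}^{k\times k}$ that preserves the symmetry condition (since symmetry is entry-wise). A simple cardinality check — both sides have $q^{nk(k+1)/2}$ elements — confirms that it restricts to a bijection $\rmatsym{k} \cong \prod_j \matsym{q^{n_j}}{k}$, so sampling $\vb{H}\leftarrow \rmatsym{k}$ is equivalent to sampling independent uniform symmetric matrices $\vb{H}^{(j)} \leftarrow \matsym{q^{n_j}}{k}$ in each CRT factor.

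Under this decomposition, the condition $\vb{H}\vb{z}_1 = \vb{z}_2 \bmod qR$ splits into $\ell$ independent conditions $\vb{H}^{(j)}\vb{z}_1^{(j)} = \vb{z}_2^{(j)}$ over $\fq{q}{n_j}$, where the superscript denotes component-wise reduction modulo $p_j$. For $j \in \bar{\cI}_\vb{z}$, both $\vb{z}_1^{(j)}$ and $\vb{z}_2^{(j)}$ vanish by definition of $\bar{\cI}_\vb{z}$, so the equation holds trivially. For $j \in \cI_\vb{z}$, the pair $(\vb{z}_1^{(j)}, \vb{z}_2^{(j)}) \in \fq{q}{n_j}^{2k}$ is nonzero, and I would invoke the extension-field analog of Lemma~\ref{lem:solutions_symmetric_matrix} — whose proof (permuting entries so that $\vb{z}_1^{(j)}$ has its nonzero coordinates at the end, then identifying the $k$ unconstrained degrees of freedom in the last row/column of the symmetric matrix) goes through verbatim over any finite field — to conclude that the probability of this single equation equals $(q^{n_j})^{-k}$.

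Multiplying across the independent CRT factors then gives
\[
\Pr[\vb{z}\in \latperp(\vb{H});\vb{H}\leftarrow \rmatsym{k}] = \prod_{j\in \cI_\vb{z}} q^{-kn_j} = q^{-k\sum_{j\in\cI_\vb{z}} n_j} = q^{-kd_\vb{z}},
\]
which is the claim. The degenerate case $\cI_\vb{z} = \emptyset$ cannot occur under the hypothesis $\vb{z}\ne\vb{0} \bmod qR$, since that hypothesis forces at least one $j$ with $(\vb{z}_1^{(j)},\vb{z}_2^{(j)})\ne \vb{0}$.

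The main obstacle is the first step: justifying cleanly that the CRT push-forward of the uniform distribution on $\rmatsym{k}$ is precisely the product of uniforms on the $\matsym{q^{n_j}}{k}$. This is where the ring structure of $R_q$ interacts nontrivially with the symmetry constraint, and it is conceptually the only new ingredient beyond what was done in Section~\ref{sec:analysis_distance}. Once it is in place, the rest is a routine application of Lemma~\ref{lem:solutions_symmetric_matrix} in each factor combined with independence across factors.
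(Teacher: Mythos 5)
Your proof is correct and takes essentially the same route as the paper: decompose via the CRT isomorphism $R_q \cong \prod_j \F_{q^{n_j}}$, apply the extension-field version of Lemma~\ref{lem:solutions_symmetric_matrix} in each factor with $j\in\cI_\vb{z}$, and multiply using independence across factors. The only difference is that you spell out the CRT push-forward of the uniform distribution on $\rmatsym{k}$ explicitly, which the paper leaves implicit, and you (correctly) compute the per-factor probability as $q^{-kn_j}$ where the paper's proof text has an apparent typo writing $q^{-d_j}$.
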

\begin{proof}
\[
\vb{z}\in \latperp(\vb{H})\Leftrightarrow \vb{H}\vb{z}_1=\vb{z}_2 \mod qR \Leftrightarrow \forall m,j_m\;\;\vb{H}\vb{z}_1=\vb{z}_2 \mod p_{m,j_m}(X).
\]
Since the polynomials $p_{m,j_m}(X)$ are irreducible modulo $q$, for each pair $(m,j_m)$ the equation  $\vb{H}\vb{z}_1=\vb{z}_2\mod p_{m,j_m}(X)$ is an equation over the finite field $\fq{q}{d_m}$.
For $(m,j_m)\in \bar{T}(\vb{z})$ the equation is satisfied for all $\vb{H}.$ Hence, let $(m,j_m)\in T(\vb{z})$.  If $\vb{z}_1=\vb{0} \mod p_{m,j_m}(X),$ then $\vb{z}_2=\vb{0}\mod p_{m,j_m}(X),$ which is impossible by definition of $\setT{m}{\vb{z}}$. By Lemma~\ref{lem:solutions_symmetric_matrix} 
 \[
 \Pr[\vb{H}\vb{z}_1=\vb{z}_2 \mod p_{m,j_m}(X); \vb{H}\leftarrow \Fq[k\times k]{q^{d_m}} \;\text{symmetric}]=\frac{1}{q^{d_m}}.
 \]
 For distinct pairs  $(m,j_m)$ these events are independent. The lemma follows from the definition of $\ell_m(\vb{z})$ and $d_{\vb{z}}$.
 \end{proof}
 
 \subsection{Bounds on the number of short vectors}
 Let $\cT\subset \cI$.  In this subsection we derive bounds on the number of short elements in $\vb{z}=(\vb{z}_1,\vb{z}_2)\in R^{2k}$ satisfying $
 \vb{z}_1=\vb{z}_2=0\mod p_{m,j_m}(X)$ exactly for the pairs in $\cT$.
 To state this more precisely, for $\cT\subseteq \cI, \cT\ne \emptyset$, set 
 \begin{itemize}
 	\item $p_\cT(X):=\prod_{(m,j_m)\in \cT}p_{m,j_m}(X),\quad \bar{p}_\cT(X):= \prod_{(m,j_m)\not\in T}p_{m,j_m}(X)$
 	\item $d_\cT:=\deg(p_\cT(X)), \quad\bar{d}_\cT:=\deg(\bar{p}_\cT(X)) $
 	\item $g_\cT:=\gcd(n,d_\cT)$.
  \end{itemize}
 and
 \begin{equation}\label{eq:def_zt}
 Z_\cT:=\left\{\vb{z}\in R^{2k}: \bar{T}(\vb{z})=\cT\right\}.
 \end{equation}
\begin{lemma}\label{lem:general_bound_set_size}
	Let $0<r$ be such that $\radius<q/2$. Then
\begin{equation}\label{eqn:main_estimate_size}
\left|B_{2kn}(\radius)\cap Z_{\cT}\right|\le \frac{n}{g_\cT}\sigma_{2kd_\cT}\left(\frac{d_\cT}{n}\right)^{kd_\cT}\left(r+\sqrt{\frac{kn}{2q}}\right)^{2kd_\cT}q^{kd_\cT}
\end{equation}
\end{lemma}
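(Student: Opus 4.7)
The plan is to bound $|B_{2kn}(\radius)\cap Z_{\cT}|$ by reducing the count to an effective dimension of $2kd_{\cT}$, exploiting the ideal structure of $Z_{\cT}$ and the rotational symmetry inherent to $R$. Writing $Z_{\cT}=\Lambda_{\cT}^{2k}$ where $\Lambda_{\cT}=(\bar p_{\cT},q)\subseteq R$ is the integer ideal whose reduction mod $q$ equals the principal ideal $(\bar p_{\cT})\subseteq R_{q}$, a CRT computation using $p_{\cT}\bar p_{\cT}\equiv X^{n}+1\pmod{q}$ yields $\Lambda_{\cT}/qR\cong \F_{q}[X]/(p_{\cT})$, hence $|Z_{\cT}/q\Z^{2kn}|=q^{2kd_{\cT}}$. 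Since $\radius<q/2$ by hypothesis, each $\vb{z}\in B_{2kn}(\radius)\cap Z_{\cT}$ satisfies $\|\vb{z}\|_{\infty}<q/2$ and is therefore the unique balanced representative of its coset modulo $q\Z^{2kn}$. Consequently the problem reduces to counting short cosets.

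Next, I would exploit that $\Lambda_{\cT}$ is an ideal of $R$, hence invariant under multiplication by $X$. On $R\simeq \Z^{n}$ this action is the negacyclic shift of~(\ref{eqn:matrix_representation}), an orthogonal operator of order $2n$. The induced diagonal $C_{2n}$-action on $Z_{\cT}\subseteq \Z^{2kn}$ preserves norms and commutes with reduction mod $q\Z^{2kn}$, so it permutes the short cosets. Working in $R_{q}/(p_{\cT})$, the action factors through the image of $X$ there, whose multiplicative order is governed by the irreducible factorization of $p_{\cT}$ over $\F_{q}$ (Theorem~\ref{thm:factorization_mod_q}); one can show the coset-stabilizer has order at least $g_{\cT}=\gcd(n,d_{\cT})$, so each orbit has size at most $2n/g_{\cT}$. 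This yields the prefactor $n/g_{\cT}$ once the $\vb{z}\mapsto -\vb{z}$ symmetry is absorbed.

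For each orbit representative I would apply a cube-packing argument in the style of Lemma~\ref{lem:bound_integer_points}, but projected onto a subspace of effective dimension $2kd_{\cT}$. Using the HNF basis of $\Lambda_{\cT}$ consisting of $\bar d_{\cT}$ vectors $q\vb{e}_{i}$ together with $d_{\cT}$ vectors of the form $X^{i}\bar p_{\cT}$ for $i=0,\dots,d_{\cT}-1$, a short balanced representative is determined (up to $q$-adjustments on the first $\bar d_{\cT}$ coordinates) by the coefficients of its $X^{i}\bar p_{\cT}$-components. This gives a $\Z$-linear projection $Z_{\cT}\to \Z^{2kd_{\cT}}$ with covolume $q^{kd_{\cT}}(n/d_{\cT})^{kd_{\cT}}$ on its image. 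Applying the cube-covering bound in dimension $2kd_{\cT}$, with enlargement term $\sqrt{2kn}/2=\sqrt{q}\cdot \sqrt{kn/(2q)}$ inherited from the ambient dimension $2kn$, produces the desired factor $\sigma_{2kd_{\cT}}(d_{\cT}/n)^{kd_{\cT}}(r+\sqrt{kn/(2q)})^{2kd_{\cT}}q^{kd_{\cT}}$.

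The main obstacle will be rigorously establishing the stabilizer bound $\ge g_{\cT}$ and the projection covolume $q^{kd_{\cT}}(n/d_{\cT})^{kd_{\cT}}$; both rely on the explicit arithmetic of $\bar p_{\cT}$ and its relation to the cyclotomic factorization of $X^{n}+1$ modulo $q$. As a sanity check, in the extreme case $\cT=\cI$ one has $\bar p_{\cT}=1$, $d_{\cT}=n$, $g_{\cT}=n$, $n/g_{\cT}=1$, and the stated RHS must reduce exactly to $\sigma_{2kn}(r+\sqrt{kn/(2q)})^{2kn}q^{kn}$, which is Lemma~\ref{lem:bound_integer_points} applied to $\Z^{2kn}$ at radius $\radius$; scaling of the bound as $r^{2kd_{\cT}}$ in the large-$r$ regime provides a further consistency check.
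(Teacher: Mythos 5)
Your plan has a genuine gap, and it lies exactly where you flag the ``main obstacle'': the orbit-stabilizer claim is false, and without it the dimension-reduction step does not go through.

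The claim that the coset-stabilizer under the negacyclic $X$-action ``has order at least $g_{\cT}=\gcd(n,d_{\cT})$, so each orbit has size at most $2n/g_{\cT}$'' fails already in the sanity case $\cT=\cI$: there $g_{\cT}=n$, so you would be asserting that every nonzero coset of $\Z^{2kn}/q\Z^{2kn}$ has orbit size at most $2$ under the order-$2n$ cyclic group generated by $X$. That is plainly wrong—a generic coefficient vector has trivial stabilizer, hence an orbit of size $2n$. More generally, whenever $\cT$ contains a factor of the primitive cyclotomic $Q_{2n}$ modulo $q$, the image of $X$ in the corresponding component has order $2n$ and generic cosets have trivial stabilizer. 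So $g_{\cT}$ does not bound a stabilizer; it has a different combinatorial origin. Likewise, the asserted covolume $q^{kd_{\cT}}(n/d_{\cT})^{kd_{\cT}}$ of an HNF projection is stated but not derived, and the factor $(n/d_{\cT})^{kd_{\cT}}$ has no obvious covolume interpretation (the index of $\Lambda_{\cT}$ in $\Z^{n}$ is $q^{n-d_{\cT}}$, full stop). A projection onto ``high-degree coefficients'' also does not control norms: a short $\vb{z}$ can have most of its mass concentrated in the projected coordinates, so the projection does not land in a ball of the radius you need.

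The mechanism actually used is a pigeonhole over negacyclic shifts, not an orbit count. For each $\vb{z}$ one considers the $n/g_{\cT}$ shifts $\vbh:=X^{hg_{\cT}}\vb{z}_{ij}$, $h=0,\ldots,n/g_{\cT}-1$, and splits each as a low part of degree $<\bar d_{\cT}$ plus a high part $\vbhu$ of degree $<d_{\cT}$. Since shifting by multiples of $g_{\cT}$ preserves Euclidean norm and each original coefficient appears (up to sign) in the high part for exactly $d_{\cT}/g_{\cT}$ of the shifts (this is where $g_{\cT}=\gcd(n,d_{\cT})$ enters, as a combinatorial covering count, not a stabilizer), one gets $\sum_{h}\sum_{i,j}\|\vbhu\|^{2}<(d_{\cT}/g_{\cT})r^{2}q$, so by pigeonhole some shift $t$ satisfies $\sum_{i,j}\|\bar{\vb{z}}^{(t)}_{ij}\|^{2}<(d_{\cT}/n)r^{2}q$. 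Your CRT observation—which is correct—is then used the other way around: because $X^{\bar d_{\cT}}$ is invertible modulo each $p_{l}$ with $l\notin\cT$, the low part is determined by the high part modulo $\bar p_{\cT}$, so $\bar{\vb{z}}^{(t)}$ together with $t$ determines $\vb{z}$. This gives an injection of $B_{2kn}(\radius)\cap Z_{\cT}$ into $\{0,\ldots,n/g_{\cT}-1\}\times\bigl(\Z^{2kd_{\cT}}\cap B_{2kd_{\cT}}(r\sqrt{qd_{\cT}/n})\bigr)$, and the stated bound then follows from Lemma~\ref{lem:bound_integer_points}. Without this pigeonhole step your proposal has no valid route to the $2kd_{\cT}$-dimensional count at radius $r\sqrt{qd_{\cT}/n}$.
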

\begin{proof}
	To simplify notation, we fix set $\cT$ and drop all subscripts with $\cT$, i.e. we use $Z,d,\bar{d},$ and $g$ instead of $Z_\cT, d_\cT,\bar{d}_\cT, g_\cT$.
	
Let  $\vb{z}\in B_{2kn}(\radius)\cap Z_{\cT} , \vb{z}=(\vb{z}_{11},\ldots,\vb{z}_{1k},\vb{z}_{21},\ldots,\vb{z}_{2k})\in R^{2k}$ with $\vb{z}_{ij}\in R, 1\le i\le k, j=1,2$.  Set
\[
\vbh:=X^{hg}\cdot \vb{z}_{ij} \mod X^n+1, h=0,\ldots, n/g-1
\]
The coefficient vector of $\vbh,$ also denoted by $\vbh,$ is obtained by cyclically shifting the coefficient vector by $hg$ positions and multiplying the first $gh$ coefficients by $-1$.  Each $\vbh$ can be written as 
\[
\vbh=\vbhu X^{\bar{d}}+\vbhl, \deg(\vbhu)<d=n-\bar{d}, \deg(\vbhl)<\bar{d},
\]
for elements $\vbhu,\vbhl\in R$. The elements $\vbh$ have the following properties.
\begin{enumerate}
	\item $\|\vbh\|^2=\|\vb{z}_{ij}\|^2$ for all indices $i,j,h$,
	\item up to sign, each coefficient of $\vb{z}_{ij}$ appears in exactly $d/g$ elements $\vbhu$,
	\item modulo $qR$, each element $\vbh$ is uniquely determined by $\vbhu$.
\end{enumerate}
Properties 1. and 2. are immediate. To verify property 3., note that $X^{\bar{d}}$ is invertible modulo all polynomials $p_{m,j_m}(X)$. Hence, $\vb{z}_{ij}=0\mod p_{m,j_m}(X)$ implies $\vbh=0\mod p_{m,j_m}(X)$ for all $(m,j_m)\not\in T$. Moreover, 
\[
\vbhl=X^{-\bar{d}} \vbhu \mod p_{m,j_m}(X), (m,j_m)\not \in \cT,
\]
where $X^{-\bar{d}}$ denotes the multiplicative inverse of $X^{\bar{d}}$ modulo $p_{m,j_m}(X)$. Since the $p_{m,j_m}(X)$ are pairwise co-prime, their product is $\bar{p}(X)$, and $\deg(\bar{p}(X))=\bar{d}>\deg(\vbhl)$, property 3. follows from the Chinese Remainder Theorem.

By assumption $\radius<q/2$ and,  therefore, every element $\vb{b}\in B_{2kn}(\radius)$ must consist of polynomials in $R$, whose coefficients lie between $-q/2$ and $q/2$. That is, we only need to consider elements in $Z\cap R_q^{2k}$ with overall length bounded by $\radius$. Here we take the integers between $-q/2$ and $q/2$ as a set of representatives for $\Z_q$. Next by property 1. and 2.
\[
\sum_{h=0}^{n/g-1}\sum_{i=1}^k\sum_{j=1}^2 \|\vbhu\|^2 < \frac{d}{g} r^2q.
\]
Hence, there exists a least one index $t\in\{0,\ldots, n/g-1\}$ such that
\[
\sum_{i=1}^k\sum_{j=1}^2 \|\bar{\vb{z}}_{ij}^{(t)}\|^2<\frac{d}{n}r^2q.
\]
By property 3. elements $\bar{\vb{z}}_{ij}^{(t)}, 1\le i\le k, j=1,2$ uniquely determine $\vb{z}$ and setting 
\[
\bar{\vb{z}}^{(h)}=(\bar{\vb{z}}_{11}^{(h)}, \ldots, \bar{\vb{z}}_{1k}^{(h)},\bar{\vb{z}}_{21}^{(h)}, \ldots, \bar{\vb{z}}_{2k}^{(h)}),
\]
for every $\vb{z}\in Z\cap B_{2kn}(\radius)$ there is an $t, 0\le t\le n/g-1,$ and an element
$\bar{\vb{z}}^{(t)}\in B_{2dk}(\dradius)$ that uniquely determines it. 

Therefore
\[
|Z\cap B_{2kn}(\radius)|\le \frac{n}{g}|\Z^{2dk}\cap B_{2dk}(\dradius)|.
\]
By Lemma~\ref{lem:bound_integer_points},
\begin{align*}
	|\Z^{2dk}\cap B_{2dk}(\dradius)| &\le  \sigma_{2dk} (\dradius+\sqrt{2dk}/2)^{2dk}\\
&= \sigma_{2kd}\left(\frac{d}{n}\right)^{kd}\left(r+\sqrt{\frac{kn}{2q}}\right)^{2kd}q^{kd}
\end{align*}	
This proves the lemma.
\end{proof}
\subsection{Probability analysis}\label{sec_ring_analysis}

\begin{theorem}\label{thm:main_analysis}
	With the notation from above
\begin{multline}\label{eq:thm_prob_estimate}
\Pr[\lambda_1(\latperp(\vb{H}))<r\sqrt{q}; \vb{H}\leftarrow \rmatsym{k}] \le \\
	\sum_{\cT\subseteq \cI, \cT\ne \emptyset}\frac{n}{g_\cT}\sigma_{2kd_\cT}\left(\frac{d_\cT}{n}\right)^{kd_\cT}\left(r+\sqrt{\frac{kn}{2q}}\right)^{2kd_\cT}.
\end{multline}
\end{theorem}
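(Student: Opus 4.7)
The plan is to obtain (\ref{eq:thm_prob_estimate}) via the standard union-bound argument of lattice-based cryptography, mirroring the proof of Theorem \ref{thm:main_probability_bound} but now indexing the union by the factor pattern of $\vb{z}$ modulo the decomposition in (\ref{eqn:factors_mod_q}). First I would write
\[
\Pr[\lambda_1(\latperp(\vb{H}))<r\sqrt{q};\,\vb{H}\leftarrow \rmatsym{k}] \le \sum_{\vb{z}} \Pr[\vb{z}\in \latperp(\vb{H});\,\vb{H}\leftarrow \rmatsym{k}],
\]
where the sum ranges over $\vb{z}\in B_{2kn}(r\sqrt{q})\cap \Z^{2kn}\setminus\{\vb{0}\}$. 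The assumption $r\sqrt{q}<q/2$ guarantees that every such nonzero $\vb{z}$ has coefficients of absolute value less than $q/2$, so it is nonzero modulo $qR$ and therefore satisfies the hypothesis of Lemma~\ref{lem:prob_element_H_lattice}.

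Next I would partition the short nonzero vectors according to the index set $\cI_\vb{z}\subseteq \cI$ defined just before Lemma~\ref{lem:prob_element_H_lattice}. The key observation is that $\cI_\vb{z}$ must be nonempty: otherwise $\vb{z}_{1h}\equiv \vb{z}_{2h}\equiv 0\bmod p_j(X)$ for every $j\in\cI$, and since $\prod_{j\in\cI} p_j(X)=X^n+1$ in $R_q$, this would force $\vb{z}\equiv \vb{0}\bmod qR$, contradicting the coefficient bound. Hence every contributing $\vb{z}$ lies in a unique class $\{\vb{z}:\cI_\vb{z}=\cT\}$ for some nonempty $\cT\subseteq\cI$, and this class is contained in $Z_\cT$, since $\cI_\vb{z}\subseteq\cT$ is equivalent to $\vb{z}\equiv 0\bmod \bar{p}_\cT(X)$. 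For every $\vb{z}$ in this class Lemma~\ref{lem:prob_element_H_lattice} gives the probability $q^{-kd_\cT}$, while the number of candidates is at most $|B_{2kn}(r\sqrt{q})\cap Z_\cT|$, bounded by Lemma~\ref{lem:general_bound_set_size}.

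Assembling these ingredients,
\[
\Pr[\lambda_1(\latperp(\vb{H}))<r\sqrt{q}] \le \sum_{\cT\subseteq\cI,\,\cT\ne\emptyset} \frac{|B_{2kn}(r\sqrt{q})\cap Z_\cT|}{q^{kd_\cT}},
\]
and substituting the estimate of Lemma~\ref{lem:general_bound_set_size} makes the factor $q^{kd_\cT}$ in the numerator cancel exactly against the denominator, producing (\ref{eq:thm_prob_estimate}). Rather than a real obstacle, the main point to get right is the bookkeeping of the partition: one has to partition by the \emph{exact} value $\cT=\cI_\vb{z}$ (so that Lemma~\ref{lem:prob_element_H_lattice} can be applied with the correct exponent $d_\cT$) and only then relax the counting to the larger set $Z_\cT$ for which Lemma~\ref{lem:general_bound_set_size} is available. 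The hypothesis $r\sqrt{q}<q/2$ is used twice — once to rule out $\cI_\vb{z}=\emptyset$, and once implicitly in the counting lemma — and the exponents of $q$ coming from the two lemmata are engineered precisely so that no residual power of $q$ survives in the final bound.
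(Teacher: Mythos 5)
Your proposal is correct and follows essentially the same route as the paper's proof: a union bound over short nonzero $\vb{z}$, then splitting by the index set $\cT$, applying Lemma~\ref{lem:prob_element_H_lattice} for the per-vector probability and Lemma~\ref{lem:general_bound_set_size} for the count, with the $q^{kd_\cT}$ factors cancelling. If anything, you are a bit more careful than the paper on the bookkeeping: the paper writes the decomposition as an equality over the sets $Z_\cT$, which overlap, whereas you correctly partition by the exact value $\cI_\vb{z}=\cT$ (a genuine partition) and only then enlarge to $Z_\cT$ to invoke the counting lemma; you also make explicit why $\cI_\vb{z}=\emptyset$ is excluded.
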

To simplify the notation, denote the right-hand side of the above inequality by
\[
\epsilon(n,q,k):=\sum_{\cT\subseteq \cI,\cT\ne \emptyset}\frac{n}{g_\cT}\sigma_{2kd_\cT}\left(\frac{d_\cT}{n}\right)^{kd_\cT}\left(r+\sqrt{\frac{kn}{2q}}\right)^{2kd_\cT}.
\]
\begin{proof}
The proof is a simple combination of Lemma~\ref{lem:prob_element_H_lattice} and Lemma~\ref{lem:general_bound_set_size}. To simplify notation, define event
\[
E:=\{\vb{H}\in \rmatsym{k}:\lambda_1(\latperp(\vb{H}))<\radius\},
\] 
and denote by $\bar{E}$ its complimentary event. 
Hence, it suffices to prove that
\[
\Pr[ E(\vb{H};\vb{H}\leftarrow \rmatsym{k}]\
\]

By the union bound
	\begin{equation*} 
	\begin{split} 
				\Pr[\vb{H}\in E; \vb{H}\leftarrow \rmatsym{k}] & = \Pr[\exists \vb{s}\in B_{2n}(\radius)\cap \latperp(\vb{H})\setminus \{0\}; \vb{H}\leftarrow \rmatsym{k}]\\
		 & \le \sum_{\vb{s}\in B_{2kn}(\radius)\cap \Z^{2kn}\setminus\{0\}}\Pr[\vb{s}\in \latperp(\vb{H}); \vb{H}\leftarrow \rmatsym{k}].
	\end{split}
	\end{equation*}
We split the sum into $Z_\cT$ corresponding to subsets $\cT\subset \cI$. 
\begin{multline}\label{eqn:main_prob_next_to_last}
	\sum_{\vb{s}\in B_{2kn}(\radius)\cap \Z^{2kn}\setminus\{0\}}\Pr[\vb{s}\in \latperp(\vb{H}); \vb{H}\leftarrow \rmatsym{k}] = \\
	\sum_{\cT\subset \cI}\sum_{\vb{s}\in Z_\cT\cap B_{2kn}(\radius)}\Pr[\vb{s}\in \latperp(\vb{H}); \vb{H}\leftarrow \rmatsym{k}].
\end{multline}
By Lemma~\ref{lem:prob_element_H_lattice}, for $\vb{s}\in Z_\cT$, 
\[
\Pr[\vb{s}\in \latperp(\vb{H}); \vb{H}\leftarrow \rmatsym{k}]\le \frac{1}{q^{kd_\cT}},
\]
and by Lemma~\ref{lem:general_bound_set_size}, for $\cT\subset \cI$,
\[
|Z_\cT\cap B_{2kn}(\radius)|\le \frac{n}{g_\cT}\sigma_{2kd_\cT}\left(\frac{d_\cT}{n}\right)^{kd_\cT}\left(r+\sqrt{\frac{kn}{2q}}\right)^{2kd_\cT}q^{kd_\cT}.
\]
Together with (\ref{eqn:main_prob_next_to_last}) this proves the theorem.
\end{proof}
We rewrite the probability estimate above in a form that is better suited to further analysis. To do so, for two sequences  $\vec{j}=(j_1,\ldots,j_s), \vec{n}=(n_1,\ldots,n_s)$ of natural numbers, we set $\vec{j}\le \vec{n}$ iff for all $i$ we have $j_i\le n_i$. If $\vec{j}$ is not the all zero vector we write $\vec\ne \vec{0}$. For $\vec{j}\le \vec{n}$, define
\[
{\vec{n}\choose \vec{j}}:=\prod_{i=1}^s {n_i \choose j_i}.
\]
From now on $\vec{n}=(n_m)_{ m\mid 2n, m\nmid n}$ is the sequence of number of factors in the factorization modulo $q$ of the cyclotomic polynomials dividing $X^n+1$. Furthermore $(d_m)_{ m\mid 2n, m\nmid n}$ is the sequence of degrees of these factors.  For $\vec{j}\le \vec{n}$ we set
\[
d_{\vec{j}}:=\sum_{i=1}^s j_i d_i
\]
and
\[
{\vec{n}\choose \vec{j}}:=\prod_{i=1}^s {n_i \choose j_i}.
\]
Finally, set
\[
\alpha:= r+\sqrt{\frac{kn}{2q}}
\]

\begin{lemma}
With the notation above 
	\begin{multline}\label{eq:cor_prob_estimate_basic}
\Pr[\lambda_1(\latperp(\vb{H}))<r\sqrt{q}; \vb{H}\leftarrow \rmatsym{k}] \le \\
n\sum_{\vec{j}\le \vec{n},\vec{j}\ne \vec{0}}{\vec{n}\choose \vec{j}}\sigma_{2kd_{\vec{j}}}
\left(\frac{d_{\vec{j}}}{n}\right)^{kd_{\vec{j}}}\alpha^{2kd_{\vec{j}}}.
\end{multline}
\end{lemma}
\begin{proof}
For each set $\cT$ in (\ref{eq:thm_prob_estimate}), only $g_\cT$ and $d_\cT$ depend on $\cT$. Clearly, $g_\cT$ is lower bounded by $1$. Furthermore, all $\cT$ that for all $m$ contain the same number $j_m$ of elements from $\{m\}\times \setn{n_m}$ have the same degree $d_{\vec{j}}, \vec{j}=(j_m)_{m\mid 2n, m\nmid n}$. 
\end{proof}

\begin{lemma}\label{lem:ring_analysis_main}
	With the notation from above and with $d:=\min\{d_m\}$, 
\begin{multline}\label{eq:cor_prob_estimate_binomial}
\Pr[\lambda_1(\latperp(\vb{H}))<r\sqrt{q}; \vb{H}\leftarrow \rmatsym{k}] \le \\
	 \frac{n}{\sqrt{2\pi kd}}\left(\left(1+\left(\frac{e\pi\alpha^2}{kn}\right)^{dk}\right)^{\sum n_m}-1\right)
\end{multline}
\end{lemma}
\begin{proof}
For every integer $d$, $\sigma_{2d}=\frac{\pi^d}{d!}$. Using Stirling's formula (Lemma~\ref{lem:bounds}), we obtain for every vector $\vec{j}$ 
\[
\sigma_{2kd_{\vec{j}}} \left(\frac{d_{\vec{j}}}{n}\right)^{kd_{\vec{j}}}\le \frac{1}{\sqrt{2\pi d_{\vec{j}} k}}\left(\frac{\pi e}{d_{\vec{j}} k}\right)^{kd_{\vec{j}}}.
\]
Next, we can bound the right-hand side in (\ref{eq:cor_prob_estimate_basic}) as follows
\begin{align*}
n\sum_{\vec{j}\le \vec{n},\vec{j}\ne \vec{0}}{\vec{n}\choose \vec{j}}\sigma_{2kd_{\vec{j}}}
\left(\frac{d_{\vec{j}}}{n}\right)^{kd_{\vec{j}}}\alpha^{2kd_{\vec{j}}} \le &
n\sum_{\vec{j}\le \vec{n},\vec{j}\ne \vec{0}}{\vec{n}\choose \vec{j}}\frac{1}{\sqrt{2\pi d_{\vec{j}} k}}
\left(\frac{e\pi}{nk}\right)^{kd_{\vec{j}}}\alpha^{2kd_{\vec{j}}}\\
\le & \frac{n}{\sqrt{2\pi kd}}\sum_{\vec{j}\le \vec{n},\vec{j}\ne \vec{0}}{\vec{n}\choose \vec{j}}
\left(\frac{e\pi}{nk}\right)^{kd_{\vec{j}}}\alpha^{2kd_{\vec{j}}}\\
= &\frac{n}{\sqrt{2\pi kd}}\left(\prod_{m}\left(1+\left(\frac{e\pi \alpha^2}{kn}\right)^{kd_m}\right)^{n_m}-1\right)\\
\le &\frac{n}{\sqrt{2\pi kd}}\left(\left(1+\left(\frac{e\pi\alpha^2}{kn}\right)^{kd}\right)^{\sum n_m}-1\right)
\end{align*}

\end{proof}
\begin{theorem}\label{thm:relation_prob_length}
With the notation from above and for every $0<p\le 1, k\le n$, if 
\[
r:=\sqrt{\frac{kn}{e\pi}}\left(\frac{p\sqrt{kd}}{n\sum n_m}\right)^{\frac{1}{2kd}}-\sqrt{\frac{nk}{2q}}
\]
then 
\[
\Pr[\lambda_1(\latperp(\vb{H}))<r\sqrt{q}; \vb{H}\leftarrow \rmatsym{k}]\le p. 
\]
\end{theorem}
\begin{proof} With our choice of $r$, we have 
\[
\alpha:=\sqrt{\frac{kn}{e\pi}}\left(\frac{p\sqrt{kd}}{n\sum n_m}\right)^{\frac{1}{2kd}}.
\]
Then we can bound the right-hand side in (\ref{eq:cor_prob_estimate_binomial}) as follows, using 
 $(1+\gamma/m)^m\le 1+2\gamma,$ valid for $\gamma\le 1/2$, and $p\sqrt{kd}/n\le 1/2$, by choice of $p,k$, 
\begin{align*}
	\frac{n}{\sqrt{2\pi kd}}\left(\left(1+\left(\frac{e\pi\alpha^2}{kn}\right)^{kd}\right)^{\sum n_m}-1\right) =& \frac{n}{\sqrt{2\pi kd}}\left(\left(1+\frac{p\sqrt{kd}}{n\sum n_m}\right)^{\sum n_m}-1\right)\\
	\le & \frac{n}{\sqrt{2\pi kd}}\frac{2p\sqrt{kd}}{n}\\
	\le & p.
\end{align*}
\end{proof}
\subsection{The case $n=2^e$}
As we are interested not only in asymptotic results but also in exact results with precise constants, and because it is most relevant for the connection of GKP codes to the NTRU cryptosystem, in this section we consider the case that $n$ is a power of two, $n=2^e$, and prime $q=5\mod 8$ or $q=3\mod 16$, in more detail.  

Let $n=2^e$ be a power of $2$.  In this case, over the rationals $X^n+1=Q_{2m}(X)$ is irreducible (see (\ref{eqn:factors_xn+1}). The proof of Theorem 2' in Chapter 4 of \cite{ireland2013classical} shows that for $q=5\mod 8, n\ge 8$, the order of $q$ modulo $2n$ is $n/2$. The same proof can be used for $q=3\mod 16$ and $n\ge 16$. By  Theorem 2.47 in~\cite{lidl1994introduction} (see also (\ref{eq:factorizatzion_xn}), 
\begin{equation}\label{eq:factorizatzion_xn_power_of_2}
X^n+1=p_{2n,1}(X)p_{2n,2}(X),
\end{equation}
where polynomials $p_{2n,j}(X)$ have degree $n/2$. Using the notation from above we have $\cI=\{(2n,1),(2n,2)\}$, and the possible subsets $\cT$ in Theorem~\ref{thm:main_analysis} are 
\[
\cT=\{(2n,1),(2n,2)\}, \{(2n,1)\}, \{(2n,2)\}.
\]
The corresponding values for $d_\cT, g_\cT,$ and $n/\gcd(n,g_\cT)$ are
\[
d_\cT=g_\cT=n,n/2,n/2; \quad n/\gcd(n,g_\cT)=1,2,2.
\]
Taking this into account, as well as $\sum_m n_m=2, d=\min\{d_m\}=n/2$, Lemma~\ref{lem:ring_analysis_main} can be improved to
\begin{lemma}\label{lem:ring_analysis_power_of_2}
Let $n=2^e$ and $q=5\mod 16$ or $q=3\mod 8$, then 
\begin{multline}\label{eq:cor_prob_estimate_binomial_power_of_2}
\Pr[\lambda_1(\latperp(\vb{H}))<r\sqrt{q}; \vb{H}\leftarrow \rmatsym{k}] \le \\
	 \frac{2}{\sqrt{\pi kn}}\left(\left(1+\left(\frac{e\pi\alpha^2}{kn}\right)^{kn/2}\right)^{2}-1\right)
\end{multline}	
\end{lemma}
Using this lemma and some straightforward calculation, we obtain
\begin{theorem}\label{thm:relation_prob_length_power_of_two}
Let $n=2^e$ and $q=5\mod 8$ or $q=3\mod 16$. For every $0<p\le \frac{6}{\sqrt{kn}}$, if 
\[
r:=\sqrt{\frac{kn}{e\pi}}\left(\frac{p\sqrt{kn}}{4}\right)^{\frac{1}{kn}}-\sqrt{\frac{nk}{2q}}
\]
then 
\[
\Pr[\lambda_1(\latperp(\vb{H}))<r\sqrt{q}; \vb{H}\leftarrow \rmatsym{k}]\le p. 
\]
\end{theorem}

\section{Main results - module and ring constructions}\label{sec:maintheoremsring}
In this section, based on the analysis of the previous section, we prove lower bounds on the minimum distance of random $q$-ary lattices, focussing on two types of bounds: a lower bound of $\sqrt{nk/\pi e}$ exactly and  lower bounds close to $\sqrt{nk/\pi e}$, that can be obtained with probability exponentially close to $1$. We begin with the strongest and most interesting case, namely $n=2^e$ being a power of $2$. 
\begin{theorem}\label{thm:ring_power_of_two}
Let $k,n,q\in \N$ with $n$ a power of $2$ and $q$ a prime with $q=5 \mod 8$ or $q=3\mod 16$. Set $R:=\Z[X]/(X^n+1)$ and $R_q:= \Z[q]/(X^n+1)$.
Choose $\vb{H}$ uniformly at random from $\rmatsym{k}$ and set
\[
\cL:=\frac{1}{\sqrt{q}}\latperp(\vb{H}).
\]
\begin{enumerate}
	\item For $q\ge 11\pi e (nk)^2$, with probability at least $1-\frac{6}{\sqrt{nk}}$ 
	\[
	\lambda_1(\cL)\ge  \sqrt{\frac{nk}{\pi e}},
	\]
	\item For $q\ge 2\pi e(nk)^{3/2}$, with probability at least $1-e^{-(nk)^{1/4}/2}$
		\[
		\lambda_1(\cL)\ge 
	\left(1-\frac{1}{(nk)^{3/4}}\right)\sqrt{\frac{nk}{\pi e}}.
	\]

\end{enumerate}
\end{theorem}
\begin{proof}
Note that $\lambda_1(\cL)=\frac{1}{\sqrt{q}}\lambda_1(\latperp(\vb{H}))$. Hence for both statements it suffices the prove the statements for $\latperp(\vb{H})$ with bounds $\sqrt{q}\lambda_1(\cL)$.

To prove the first statement, set $p:=6/\sqrt{nk}$. Then
\[
\left(\frac{p\sqrt{kn}}{4}\right)^{\frac{1}{nk}}=(3/2)^{\frac{1}{nk}}\ge 1+\frac{\ln(3/2)}{nk}.
\]
With our choice of $q$ it follows that
\[
\sqrt{\frac{nk}{2q}}\le \frac{\ln(3/2)}{nk}\sqrt{\frac{nk}{\pi e}}.
\]
	The statement follows from Theorem~\ref{thm:relation_prob_length_power_of_two}.

To prove the second statement we proceed similarly. 
First, with our choice of $p$ we obtain
\[
\left(\frac{p\sqrt{kn}}{4}\right)^{\frac{1}{nk}}\ge e^{(nk)^{3/4}/2}\ge \left(1-\frac{1}{2(nk)^{3/4}}\right).
\]
By choice of $q$
\[
\sqrt{\frac{nk}{2q}}\le \frac{1}{2(nk)^{3/4}}.
\]
As before, the statement follows from Theorem~\ref{thm:relation_prob_length_power_of_two}.
\end{proof}
For the general case of $n,q$ and the factorization of $X^n+1$ modulo $q$, a lower bound on the minimum distance of a random $q$-ary lattice can achieved with high probability if $d$ is large enough.
\begin{theorem}\label{thm:ring_general}
Let $k,n,q\in \N$ with $q$ prime and let $d$ be the minimum degree of a polynomial in the factorization of $X^n+1$ modulo $q$. Let $0<p,\gamma <1$. If 
\[
d\ge \frac{\ln(1/p)+2\ln(n)}{2\gamma k}\quad \text{and} \quad q\ge \frac{2\pi e}{\gamma^2}
\]
then for $\vb{H}$ uniformly at random from $\rmatsym{k}$ and $
\cL:=\frac{1}{\sqrt{q}}\latperp(\vb{H})$ with probability at least $1-p$
\[
\lambda_1(\cL)\ge (1-\gamma)\sqrt{\frac{nk}{\pi e}}.
\]	
\end{theorem}
\begin{proof}
	We proceed as in the proof of the previous theorem, using Theorem~\ref{thm:relation_prob_length} instead of Theorem~\ref{thm:ring_power_of_two}.
	From the lower bound on $d$, it follows that
\[
\left(\frac{p\sqrt{kd}}{n\sum_m n_m} \right)^{\frac{1}{2kd}}\ge e^{-\gamma/2}\ge (1-\gamma/2).
\]
The bound on $q$ implies
\[
\frac{nk}{2q}\le \gamma/2\frac{nk}{\pi e}.
\]
The theorem follows from Theorem~\ref{thm:relation_prob_length}. 
\end{proof}
It is not difficult to show that, using Theorem~\ref{thm:relation_prob_length}, the bound in the previous theorem cannot be improved significantly.
\paragraph{Instantiations}
First, we show how to calculate $d:=\min\{d_m\}$ given specific values of $n$ and $q$. To do so, we need to recall some facts from elementary number theory, that yield more details on the factorization of $X^n+1$ modulo $q$, (see (\ref{eq:factorizatzion_xn})). 

We use the \emph{Carmichael function $\lambda(\cdot)$}, where for $n\in \N$, $\lambda(n)$ is the smallest integer $m$ such that
\[
\forall a \in \Z_n^*: a^m=1\mod n.
\]
Equivalently, $\lambda(n)$ is the largest integer $m$ such $\Z_n^*$ has an element $a$ of order $m$. An element of order $\lambda(n)$ is often called a \emph{primitive $\lambda$-root modulo $n$}. 
We later use two well-known properties of the $\lambda$-function and primitive $\lambda$-roots modulo $n$.
The first is a recursive formula for $\lambda(n)$.
\begin{equation}\label{eq:carmichael_formula}
\lambda(n)=
\begin{cases}
	\phi(n) & \text{if $n$ is $2,4$ or an odd prime power $p^e$}\\
	\frac{1}{2}\phi(n) & \text{if $n=2^e, e\ge 3$}\\
	\lcm(\lambda(n_1),\ldots,\lambda(n_s)) & \text{if $n=n_1\cdots n_s$ for distinct prime powers $n_i$}
\end{cases}	
\end{equation}
The second property follows from the recursive formula and standard properties of the rings $\Z_n^*$ (see for example~\cite{ireland2013classical}).
\begin{lemma}~\label{lem:primitive_root_divisor}
	Let $n,d\in \N$ with $d$ a divisor of $n$. If $a$ is a primitive $\lambda$-root modulo $n$, then $a\mod d$ is a primitive $\lambda$-root modulo $d$.
\end{lemma}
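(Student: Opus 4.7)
My plan is to combine the Chinese Remainder Theorem with the structure theory of $(\Z/p^e\Z)^\ast$ and the recursive formula \eqref{eq:carmichael_formula} for $\lambda$, exactly along the two hints the paper gives.

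First I would perform a CRT reduction. Writing $n = p_1^{e_1}\cdots p_s^{e_s}$ and, since $d\mid n$, correspondingly $d = p_1^{f_1}\cdots p_s^{f_s}$ with $0 \le f_i \le e_i$, CRT gives compatible isomorphisms $\Z_n^\ast \cong \prod_i \Z_{p_i^{e_i}}^\ast$ and $\Z_d^\ast \cong \prod_i \Z_{p_i^{f_i}}^\ast$, and reduction $\bmod d$ acts componentwise. By \eqref{eq:carmichael_formula}, $\lambda(n) = \lcm_i \lambda(p_i^{e_i})$ and $\lambda(d) = \lcm_i \lambda(p_i^{f_i})$, while the order of a tuple in a direct product is the lcm of componentwise orders. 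This reduces the claim to the prime-power case: if $a$ has order $\lambda(p^e)$ in $\Z_{p^e}^\ast$, then $a \bmod p^f$ has order $\lambda(p^f)$ in $\Z_{p^f}^\ast$ for every $f \le e$.

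Next I would handle the prime-power case. For odd $p$, $\Z_{p^e}^\ast$ is cyclic of order $\phi(p^e) = \lambda(p^e)$, so $a$ is a generator. Were its image mod $p^f$ to have order $k < \phi(p^f)$, writing $a^k = 1 + c p^f$ and expanding $(1+cp^f)^{p^{e-f}}$ binomially yields $a^{k p^{e-f}} \equiv 1 \pmod{p^e}$, forcing the order of $a$ mod $p^e$ to be at most $k p^{e-f} < \phi(p^f)\,p^{e-f} = \phi(p^e)$, a contradiction. For $p = 2$ and $e \le 2$ the claim is immediate; for $e \ge 3$ I would use $\Z_{2^e}^\ast \cong \langle -1 \rangle \times \langle 5 \rangle$, with the second factor cyclic of order $2^{e-2} = \lambda(2^e)$, and apply the same lifting argument inside that cyclic factor.

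The step I expect to be the main obstacle sits inside the CRT reduction: one must pass from ``$a$ has lcm-order equal to $\lambda(n)$ across its CRT components'' to ``each component $a \bmod p_i^{e_i}$ has order exactly $\lambda(p_i^{e_i})$''. The lcm equality alone does not force componentwise maximality once the $\lambda(p_i^{e_i})$ share prime factors, so a prime-by-prime bookkeeping argument — assigning each prime-power contribution of $\lambda(n)$ to a specific CRT factor — seems necessary to close this gap. In the paper's main application, where $n$ is a power of $2$, the CRT product is trivial and this difficulty dissolves, and everything reduces cleanly to the $2$-group descent of the prime-power step.
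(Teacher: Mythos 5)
The obstacle you flag in the CRT step is not a matter of bookkeeping --- it is unbridgeable, because the lemma as stated is false. A concrete counterexample: take $n = 15$, $d = 3$. Then $\lambda(15) = \lcm(\lambda(3),\lambda(5)) = \lcm(2,4) = 4$, and $a = 7$ has order $4$ modulo $15$ (since $7^2 \equiv 4$ and $7^4 \equiv 1 \pmod{15}$), so $7$ is a primitive $\lambda$-root modulo $15$. But $7 \equiv 1 \pmod 3$, which has order $1$, not $\lambda(3) = 2$. The failure is exactly the one you anticipate: the $\lambda$-order is realized by the mod-$5$ component alone ($\lcm(1,4)=4$), so nothing forces the mod-$3$ component to be maximal once $\lambda(3)$ and $\lambda(5)$ share the prime factor $2$. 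No prime-by-prime bookkeeping can recover a true statement here, because the hypothesis genuinely does not constrain the component orders individually.

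Your prime-power descent also has a gap at $p = 2$, independent of the CRT issue. Take $n = 8$, $d = 4$: then $a = 5$ has order $2 = \lambda(8)$ modulo $8$, hence is a primitive $\lambda$-root modulo $8$, but $5 \equiv 1 \pmod 4$ has order $1 \ne \lambda(4) = 2$. Working ``inside the cyclic factor $\langle 5\rangle$'' does not save this, because under reduction to $\Z_4^*$ that factor collapses to the trivial group while $\lambda(4) = 2$; one would need the $\langle -1\rangle$-component of $a$ to be nontrivial, which is not implied by $a$ having maximal order modulo $2^e$. The paper itself gives no proof (it defers to a textbook reference), and the statement as written requires extra hypotheses to be true --- for instance it does hold when $n$ is an odd prime power, or more generally when the values $\lambda(p_i^{e_i})$ over the prime-power factors of $n$ are pairwise coprime and $4 \nmid n$ --- but your proof sketch should be regarded as establishing the odd-prime-power descent and correctly diagnosing, rather than repairing, the remaining cases.
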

Finally, by the prime number theorem for arithmetic progressions (see for example~\cite{apostol2013introduction}) , for every $n$ there also exists a prime $q$ that is a primitive $\lambda$-root modulo $n$.
These theorems and facts together lead to the following theorem that will be used for the construction of symplectic lattices and GKP codes from $\msis$ lattices.
\begin{theorem}\label{thm:factorization_complete_modulo_q}
	Let $n,q\in \N$ with $q$ a prime that is a primitive $\lambda$-root modulo $2n$. Then the polynomial $X^n+1$ modulo $q$ factors as
	\[
	X^n+1=\prod_{\substack{d\mid 2n \\ d\nmid n}}\prod_{j=1}^{\phi(d)/\lambda(d)} Q_{d,j}(X),
	\]
	where the $Q_{d,j}(X)$ are the irreducible factors of $Q_d(X)$ modulo $q$, and each $Q_{d,j}(X)$ has degree $\lambda(d)$.
\end{theorem}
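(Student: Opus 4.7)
The plan is to chain together three ingredients already assembled in the excerpt: the cyclotomic factorization of $X^n+1$ over $\Q$ (which also holds modulo $q$ when $\gcd(n,q)=1$), Theorem~\ref{thm:factorization_mod_q} on how each $Q_d(X)$ decomposes over $\F_q$ into equal-degree irreducibles, and Lemma~\ref{lem:primitive_root_divisor} about how primitive $\lambda$-roots behave under divisors. The whole argument reduces to showing that for every $d$ appearing in the cyclotomic factorization of $X^n+1$, the order of $q$ modulo $d$ is exactly $\lambda(d)$.

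First, I would observe that the hypothesis ``$q$ is a primitive $\lambda$-root modulo $2n$'' implicitly requires $q \in \Z_{2n}^*$, so $\gcd(q, 2n)=1$. In particular $\gcd(n,q)=1$, which justifies using the second identity of~(\ref{eqn:factors_xn+1}) modulo $q$, giving
\[
X^n+1 = \prod_{\substack{d\mid 2n\\ d\nmid n}} Q_d(X) \pmod{q}.
\]
Moreover, $\gcd(d,q)=1$ for every divisor $d$ of $2n$, so the hypothesis of Theorem~\ref{thm:factorization_mod_q} is satisfied for each $Q_d(X)$.

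Next I would apply Lemma~\ref{lem:primitive_root_divisor}: since $q$ is a primitive $\lambda$-root modulo $2n$ and each $d$ divides $2n$, the reduction $q \bmod d$ is a primitive $\lambda$-root modulo $d$. This means the multiplicative order of $q$ modulo $d$ equals $\lambda(d)$. Plugging $\ell = \lambda(d)$ into Theorem~\ref{thm:factorization_mod_q} yields that $Q_d(X)$ factors over $\F_q$ into exactly $\phi(d)/\lambda(d)$ distinct irreducibles, each of degree $\lambda(d)$; denoting these factors by $Q_{d,j}(X)$ for $j=1,\ldots,\phi(d)/\lambda(d)$ and substituting into the product over $d$ gives the claimed factorization.

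The main (mild) obstacle is simply the bookkeeping of verifying the coprimality conditions needed to invoke Theorem~\ref{thm:factorization_mod_q} and Lemma~\ref{lem:primitive_root_divisor}; once these are in hand, the proof is essentially a direct substitution. There is no genuinely hard step, and no estimate is needed—everything is a formal consequence of the already-stated structural results about cyclotomic polynomials and the Carmichael function.
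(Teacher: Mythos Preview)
Your proposal is correct and matches the paper's approach: the paper does not write out a standalone proof but simply states that the theorem follows by combining the cyclotomic factorization~(\ref{eqn:factors_xn+1}), Theorem~\ref{thm:factorization_mod_q}, and Lemma~\ref{lem:primitive_root_divisor}, which is exactly the chain of reasoning you outline.
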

This theorem implies for a primitive $\lambda$-root $q$ of $n$:
\[
d:=\min\{d_m\}:=\min\{\lambda(m):m\mid 2n \land m\nmid n\}
\]
Write $n=2^e n',$ with $n'$ odd. Then this characterization of $d$ together with (\ref{eq:carmichael_formula}) implies
\[
d=
\begin{cases}
	2^e & e=0,1\\
	2^{e-1}& e\ge 2.
\end{cases}
\]
If $n=2^e$ is a power of $2$ and $q=5\mod 8$ or $q=3\mod 16$, then $q$ is a primitive $\lambda$-root modulo $q$. Hence, as already used in Theorem~\ref{thm:ring_power_of_two}, $d=n/2$.

Theorem~\ref{thm:ring_general} can give meaningful results even for $d=1$. In particular, if we exploit the general module structure, i.e. we may choose $k$ freely, although we want to keep it as small as possible. As an example, in Theorem~\ref{thm:ring_general}, let us choose $d=1$ and $p=1/n$. Then we get 
\[
k\ge 3\ln(n)/(2\gamma).\]
Hence by increasing $k$ we can get, with probability close to $1$, GKP codes whose minimum distance gets closer to $\sqrt{nk/\pi e}$.

If on the other hand, we want to choose $k=1$, i.e. we choose ring lattices, not general module lattice, then with $p=1/n$ Theorem~\ref{thm:ring_general} implies
\[
d\ge \frac{3\ln(n)}{2\gamma}, 
\]
hence $d$ must be a least logarithmic.

One example where $d=1$, in fact, where all factors of $X^n+1$ modulo $q$ are linear, occurs when $q=1\mod n$. This case is quite popular in lattice-based cryptography, since it allows for fast arithmetic in $R_q$ using the so called \emph{number theoretic transformation (NTT)}, a variant of the Fourier transform. As we will see in the next section, the case $q=1\mod n$ also can lead to efficient decoding algorithms for GKP codes.

\section{GKP codes from $\ntru$ lattices}\label{sec:ntru_gkp}
In this section we combine the results from Section~\ref{sec:ring_symplectic_construction} with results from \cite{stehle2011making} to show that NTRU lattices can be used to construct good GKP codes, thereby proving a conjecture by Conrad et al.~\cite{conrad2024good}. 

For the exact definition of the $\ntru$ cryptosystem we refer to~\cite{hoffstein1999ntru}. By $\distntru{q,\sigma}$ we refer to the following distribution, which we call  the $q$-ary $\ntru$ distribution with parameter $\sigma$ or simply the $\ntru$ distribution. First, denote by $D_\sigma^\times$ the discrete Gaussian distribution $D_{\Z^{n},\sigma}$ restricted to $R_q^\times$ (via rejection sampling). Then $
	\distntru{q,\sigma}:=\frac{D_1}{D_2} \mod q,
$
where $D_1, D_2$ are independently distributed as the Gaussian distribution $D^\times_\sigma$
Note that $\distntru{q,\sigma}$ is a distribution on $R_q^\times$. By $U(R_q^\times)$ and $U(R_q)$ we denote the uniform distributions on $R_q^\times$ and $R_q$, respectively.

For the remainder of this section let $n=2^e$ be a power of two and $q=5\mod 8$ or $q=3\mod 16$. In this case $X^n+1$ is irreducible over $\Q$ and it factors into two polynomials of degree $n/2$ over $\fq{q}{}$. From this it follows using the Chinese remainder theorem that
\begin{eqnarray}\label{eq:size_R_invertible}
\left|R_q^\times \right|=(q^{n/2}-1)^2	\quad \text{and} \quad \left|R_q\setminus R_q^\times \right|<2q^{n/2}.
\end{eqnarray} 
For two distributions $D_1, D_2$ with identical finite support $T$, denote by $\Delta(D_1,D_2)$ their statistical distance, that is
\[
\Delta(D_1,D_2):=\max\{|D_1(S)-D_2(S)|\mid S\subseteq T\}=\frac{1}{2}\sum_{t\in T}|D_1(t)-D_2(t)|.
\]
From (\ref{eq:size_R_invertible}) one deduces
\begin{equation}\label{eq:dist_R_R_invertible}
	\Delta(U(R_q),U(R_q^\times))\le \frac{2q^{n/2}}{q^n-2q^{n/2}}\le\frac{3}{q^{n/2}}, \quad \text{$q>5$}.
\end{equation}
As a special case of Theorem 3.2 in \cite{stehle2011making}, we obtain the following theorem.
\begin{theorem}\label{thm:stehle_steinfeld}[Stehl\'{e}, Steinfeld] Let $n=2^e$ be a power of two and $q=5\mod 8$ or $q=3\mod 16$. Let $0<\epsilon<1/4$. Then
\[
\Delta\left(\distntru{q,\sigma},U(R_q^\times)\right)\le 2^{10n}q^{-\epsilon n}, \; \text{if}\; \sigma\ge \sqrt{n\ln(8n)}q^{1/2+\epsilon}\; \text{and}\; q\ge n^{\frac{2}{1-4\epsilon}}.
\]	
\end{theorem}
Combining this theorem with (\ref{eq:dist_R_R_invertible}) yields
\begin{corollary}\label{cor:dist_R_NTRU}
	Let $n=2^e, n>5,$ be a power of two and $q=5\mod 8$ or $q=3\mod 16$. Let $0<\epsilon<1/3$. Then
\[
\Delta\left(\distntru{q,\sigma},U(R_q)\right)\le (2^{10n}+3)q^{-\epsilon n}, \; \text{if}\; \sigma\ge \sqrt{n\ln(8n)}q^{1/2+\epsilon}\; \text{and}\; q\ge n^{\frac{2}{1-4\epsilon}}.
\]	
\end{corollary}

For $k=1$, we have that $\rmatsym{k}=U(R_q)$. Hence, combining Corollary~\ref{cor:dist_R_NTRU} with part (2) of Theorem~\ref{thm:ring_power_of_two} leads to the following theorem, proving Conjecture 2 from ~\cite{conrad2024good}, at least asymptotically.
\begin{theorem}
	Let $n=2^e, n>5,$ be a power of two and $q=5\mod 8$ or $q=3\mod 16$. Let $0<\epsilon<1/4$. Choose $\vb{h}$ from the NTRU distribution $\distntru{q,\sigma}$, where $\sigma\ge \sqrt{n\ln(8n)}q^{1/2+\epsilon}$ and $q\ge \max\{11\pi 2 n^{3/2}, n^{\frac{2}{1-4\epsilon}}\}$.   Then
\[
\lambda_1(\cL)\ge 
	\left(1-\frac{1}{(n)^{3/4}}\right)\sqrt{\frac{n}{\pi e}},
\]
except with probability $e^{-n^{1/4}/2}+ (2^{10n}+3)q^{-\epsilon n}$.

\end{theorem}

\section{Bounded distance decoding in the module case and the role of $k$}\label{sec:ring_bdd}
We can also use algorithm $\bddtriv$ from Section~\ref{sec:bdd} to decode the GKP codes obtained from $\msis$ lattices. The most time consuming step in algorithm $\bddtriv$ is the matrix-vector multiplication $\vb{c'}\vb{H}$ in step 2, which requires $n^2$ multiplications in $\F_q$. If $n$ is a power of $2$, instead of the $\sis$ lattice construction of GKP codes we can use the $\msis$-based construction using Theorem~\ref{thm:ring_power_of_two}, with $k=1$. The matrix-vector multiplication in step $2$ of algorithm $\bddtriv$ becomes a multiplication of two ring elements in $R=Z[X](X^n+1)$. By construction, this ring contains $n$-th roots of unity which allows us to perform multiplication in $R$ with $\cO(n\log(n))$ arithmetic operations in $\F_q$ using the \emph{fast Fourier transform (FFT)}. 

However, powers of $2$ are relatively sparse. Using $k>1$ allows us to obtain fast decoding for more general $n$. If we choose the dimension of the code as $n\cdot k$ with $n$ being a power of $2$, the multiplication $\vb{c'}\vb{H}$ requires $k^2$ multiplications in ring $R=\Z[X]/(X^n+1)$ at the cost $\cO(k^2n\log(n))$ arithmetic operations in $F_q$, again using FFT. Hence, the parameter $k$ and the use of $\msis$ lattices and not just $\rsis$ lattice (corresponding to $k=1$) allows us more flexibility in designing code GKP codes with efficient decoding algorithms. Using other choices for $n$ than just powers of $2$, yields yet more flexibility in designing good GKP codes for various dimensions.

\section{Simulation of main results for ring constructions}\label{sec:ring_simulation}
In this section we show some simulation results in Section \ref{sec:maintheoremsring}, i.e. the minimal distance for symplectic R-SIS and M-SIS lattices. The decoding performance for these lattices shows no difference from SIS lattices in Section \ref{sec:simulation}, so we omit those results here.

\begin{figure}[htbp]
  \centering

  \begin{subfigure}[b]{0.49\textwidth}
    \centering
    \includegraphics[width=\linewidth]{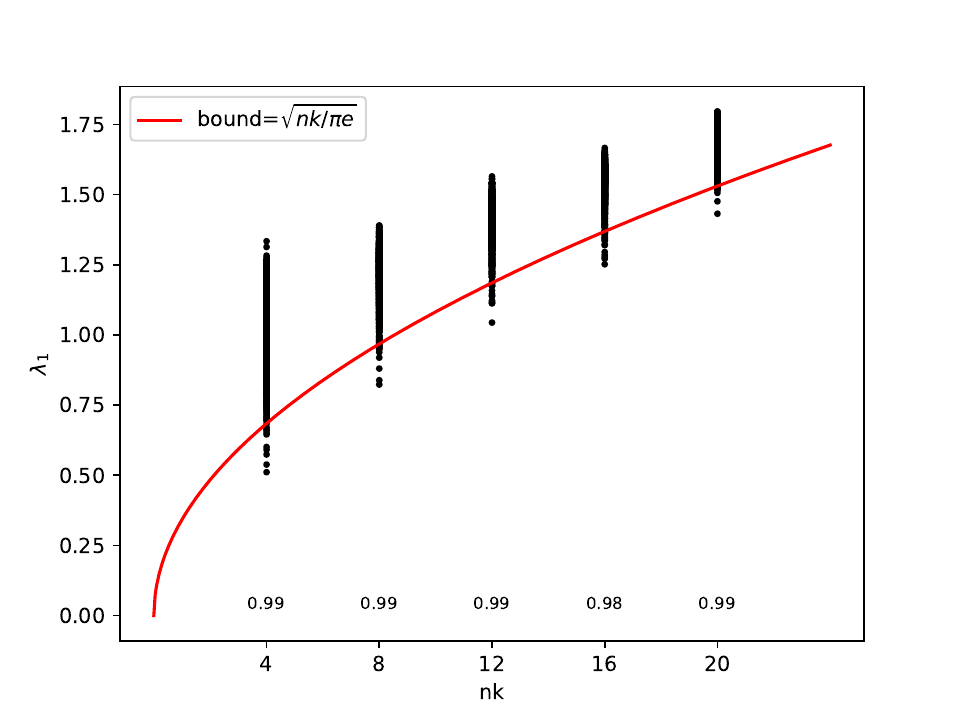}
    \caption{$n=4$, $m\in [1,5]$, $q=701$}
    \label{ring_sim_a}
  \end{subfigure}
  \hfill
  \begin{subfigure}[b]{0.49\textwidth}
    \centering
    \includegraphics[width=\linewidth]{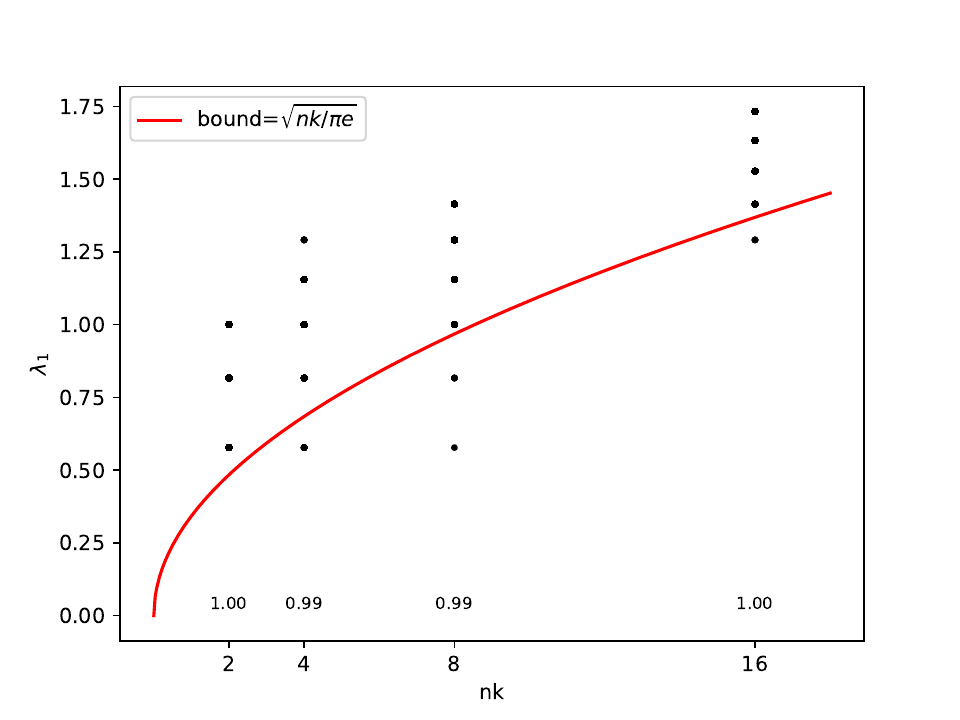}
    \caption{$n=2^e$, $m=1$, $q=3$}
    \label{ring_sim_b}
  \end{subfigure}

  \vspace{0.3cm}

  \begin{subfigure}[b]{0.49\textwidth}
    \centering
    \includegraphics[width=\linewidth]{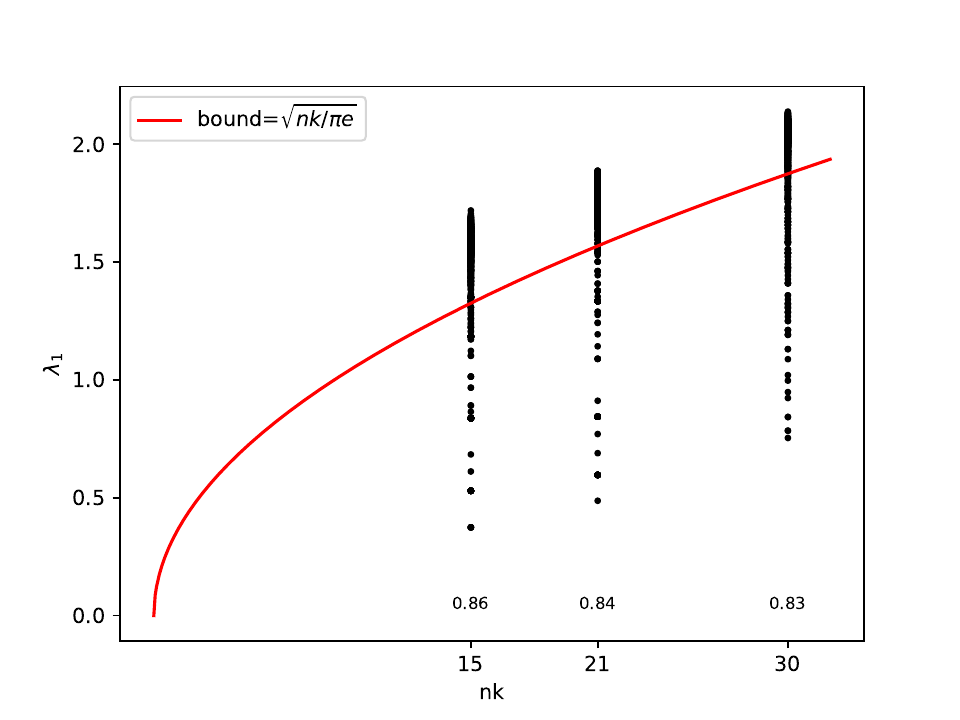}
    \caption{$n=2^ep_1p_2$}
    \label{ring_sim_c}
  \end{subfigure}
  \hfill
  \begin{subfigure}[b]{0.49\textwidth}
    \centering
    \includegraphics[width=\linewidth]{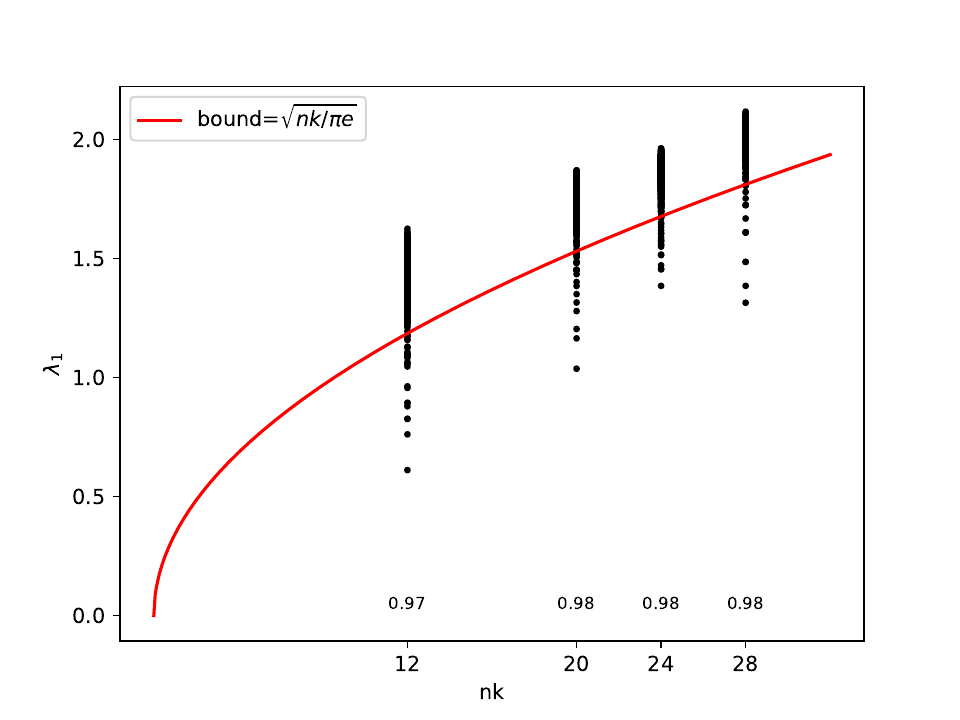}
    \caption{$n=2^ep$}
    \label{ring_sim_d}
  \end{subfigure}

  \caption{Simulations of minimal distance for symplectic R-SIS and M-SIS lattices.  (a) $n=4$, $m\in [1,5]$, $q=701$. (b) $n=2^e$, $m=1$, $q=3$. (c) $n=2^ep_1p_2$. (d) $n=2^ep$.}
  \label{ring_sim}
\end{figure}

Figure \ref{ring_sim} shows the simulation results of minimal distance for symplectic R-SIS and M-SIS lattices. For each combination of $(n,k,q)$, 1000 lattice samples are generated with $\vb{H}\leftarrow \rmatsym{k}$. Minimal distance of lattice is then calculated via BKZ reduction. In the figure, the red line represents the asymptotic bound $\sqrt{nk/\pi e}$ and the black dots represent the minimal distance for each lattice sample. The numbers at the bottom show, for each combination $(n,k)$, how many samples have minimal distance $\lambda_1>\sqrt{nk/\pi e}$. Figure \ref{ring_sim_a} corresponds to the case of Theorem \ref{thm:ring_power_of_two}, where $n=4$ is a power of 2, and $q=701$ satisfies $q=5 \mod 8$ and $q\geq \frac{\pi e (4\cdot 5)^2}{2}$. As shown in the figure, in all combinations of $(n,k)$, the generated M-SIS lattices have minimal distance larger than $\sqrt{nk/\pi e}$ with probability close to 1, which is even much larger than what the theorem tells; Figure \ref{ring_sim_b} also has $n$ equal to a power of 2, but a small $q=3$ is used in the simulation. Still the generated R-SIS lattices have minimal distance larger than $\sqrt{nk/\pi e}$ with probability close to 1. Thus from the simulation we make the conjecture that the generated M-SIS lattices have a high probability of achieving $\lambda_1\geq \sqrt{nk/\pi e}$ regardless of what $q$ is used for their generation. Figures \ref{ring_sim_c} and \ref{ring_sim_d} illustrate more general instances covered by Theorem \ref{thm:ring_general}. In these examples, $n$ is not a power of 2, and the observed probabilities are lower than in the power-of-two case, which is consistent with the more delicate factorization behavior of $X^n+1$ modulo $q$. Again the results correspond well with the theorem, and even suggest a higher probability of $\lambda_1\geq \sqrt{nk/\pi e}$.

\section{ Conclusion and Outlook}\label{sec:conclusion}

In this paper, we give the first efficient constructions of good families of GKP codes. Our constructions achieve the code distances non-constructively shown
to be achievable by Preskill and Harrington~\cite{Harrington-Preskill_2001} (based on results on symplectic lattices by Buser and Sarnak~\cite{buser1994period}) and conjectured by Conrad et al. to be achievable constructively by NTRU-based GKP codes~\cite{conrad2024good}. Furthermore, the achieved distances are provably optimal or near optimal (up to small constants). Our GKP codes are constructed from well-known families of cryptographic lattices, general $\sis$ lattices, ring and module $\sis$ lattices. Combining our results on GKP codes from $\rsis$ lattices with results from Stehl\'{e} and Steinfeld~\cite{stehle2011making}, we are able to prove Conjecture 2 from~\cite{conrad2024good}, namely,  for certain parameter choices the NTRU-based construction of GKP codes yields GKP codes with distance $\sqrt{n/\pi e}$.  We also showed a trivial decoding algorithm for our GKP codes which does not use trapdoors. Experiments confirm the effectiveness of the constructions. For parameter choices and decoding algorithms the experiments even supersede the theoretical results. 

Our results raise various interesting research questions. The NTRU constructions in~\cite{conrad2024good} use trapdoors to obtain improved decoding algorithms for the NTRU-based GKP codes. The $\sis$ lattices that we use in our constructions do not have trapdoors. Hence the questions arises whether we can use other cryptographic lattices with trapdoors to construct good GKP codes with efficient decoding algorithms exploiting the trapdoors. Examples of such lattices include \emph{learning with error ($\lwe$)} lattices. Also lattices with so called $G$-trapdoors may be of interest (for both see ~\cite{peikert2016decade}). Our results for $\msis$ lattices, though general, lead to non-trivial results only for limited parameter settings. We conjecture that techniques developed for the ring-$\lwe$  and module-$\lwe$ problems (see~\cite{lyubashevsky2013toolkit}) can be used to obtain improved results applicable to more general parameter settings. Finally, and practically most importantly, GKP codes as presented in this paper need to be combined with other quantum error-correcting codes like surface codes or CSS-codes to obtain efficient codes applicable for the design of large scale quantum computers. Hence, what are the best and most efficient combinations of GKP and other quantum error-correcting codes?

For randomized GKP codes constructed from cryptographic lattices, one can also think about exploiting their randomized and cryptographic property. For example, Conrad et al.~\cite{conrad2024good} proposed a quantum communication protocol with NTRU-based GKP codes. The protocol is said to be secure with the trapdoors used in NTRU systems. Similar protocol should also work for other code constructions that have trapdoors such as $\lwe$. For GKP codes that do not have trapdoors in their construction, it may also be possible to use them for constructing secure quantum communication protocols by using the randomness in the codes. The randomness of these codes and their relations with cryptographic lattices may also lead to possible decoding algorithms which are more efficient or better-performed compared with normal GKP codes.

\section*{Acknowledgments}

This work was funded by the Ministry of Culture and Science of the State of North Rhine-Westphalia and by the German Federal Ministry of Research, Technology and Space (BMFTR) within the PhoQuant project (Grant No. 13N16103). Additionally, this work was supported by the Equal Opportunity Program, Grant Line 2: Support for Female Junior Professors and Postdocs through Academic Staff Positions, 15th funding round of Paderborn University.

\printbibliography

\end{document}